\newenvironment{lyxcode}
	{\par\begin{list}{}{
		\setlength{\rightmargin}{\leftmargin}
		\setlength{\listparindent}{0pt}
		\raggedright
		\setlength{\itemsep}{0pt}
		\setlength{\parsep}{0pt}
		\normalfont\ttfamily}%
	 \item[]}
	{\end{list}}
\theoremstyle{definition}
 \newtheorem{example}{\protect\examplename}
\theoremstyle{plain}
\newtheorem{thm}{\protect\theoremname}
\theoremstyle{plain}
\newtheorem{lem}{\protect\lemmaname}
\theoremstyle{plain}
\newtheorem{prop}{\protect\propositionname}
\theoremstyle{remark}
\newtheorem{rem}{\protect\remarkname}
\theoremstyle{plain}
\newtheorem{fact}{\protect\factname}
\definecolor{ucb}{RGB}{0, 50, 98}
\definecolor{booth}{RGB}{139, 0, 0}
\providecommand{\examplename}{Example}
\providecommand{\factname}{Fact}
\providecommand{\lemmaname}{Lemma}
\providecommand{\propositionname}{Proposition}
\providecommand{\remarkname}{Remark}
\providecommand{\theoremname}{Theorem}
\begin{document}
\title{Comparisons of Experiments in Moral Hazard Problems}
\date{\today }
\author{Zizhe Xia\thanks{\protect\href{http://mailto:zizhe-xia@chicagobooth.edu}{zizhe-xia@chicagobooth.edu}
I am indebted to my committee members Doron Ravid, Emir Kamenica,
Lars Stole, and Alex Frankel for their invaluable guidance and advice.
I thank Eric Budish, Kailin Chen, Xiao Lin, Andrew McClellan, Daniel
Rappoport, Tyler Patterson, and Kun Zhang for their valuable comments
and suggestions. All errors are my own.}}\maketitle

\begin{abstract}
I use a novel approach to compare information in several classes of
moral hazard problems: implementability, cost under risk neutrality
and limited liability, and cost facing an agent with a general preference
for money. Incentives in moral hazard problems are determined by the
agent's state-dependent utility. Motivated by this observation, I
define three nested geometric orders on information: the column space,
the conic span, and the zonotope orders. Each order admits four equivalent
characterizations: (i) inclusion of feasible state-dependent utility
sets, (ii) dominance in the corresponding class of moral hazard problems,
(iii) matrix factorizations, (iv) posterior belief distributions.
In particular, the orders apply to both the classic and the flexible
moral hazard problems, providing a unified framework to compare information.

\end{abstract}

\thispagestyle{empty}

\newpage{}
\begin{lyxcode}
\setcounter{page}{1}
\end{lyxcode}

\section{Introduction}

In a canonical moral hazard (MH) setting, the principal relies on
a performance measure to contract with the agent and provide incentives.
In practice, there can be multiple such measures available. The same
underlying outcome is summarized differently by these measures, and
these differences affect how cheaply incentives can be provided. This
raises a central question: which performance measure is better? The
classic informativeness principle by \citet{holmstrom1979moral} gives
one answer: if one information structure augments another with additional
signals that contain new information about the agent's action, then
it strictly reduces the agency cost.\footnote{More precisely, the informativeness principle compares two information
structures that have an inclusive relation, that is, one contains
all of the other's signals plus some additional signals. In this case,
the information structure with additional signals always yields a
strictly lower agency cost if and only if the existing signals are
not sufficient statistics of the additional signals.} Beyond the informativeness principle, the literature offers a rich
array of likelihood-ratio-based comparisons.\footnote{See, for example, \citet{gjesdal1982information}, \citet{kim1995efficiency},
\citet{dewatripont1999economics}, \citet{demougin2001ranking}, and
\citet{chen2025experiments}. Section \ref{subsec:lit-review} provides
a summary of the papers that study the comparisons of information
in moral hazard problems.}

However, the likelihood ratio and informativeness approaches do not
provide a unified framework that identifies what information is more
valuable across different classes of moral hazard problems, nor do
they clarify how criteria for comparing information in moral hazard
settings relate to criteria in standard Bayesian decision problems
and the Blackwell \citeyearpar{blackwell1951comparison,blackwell1953equivalent}
order. As \citet{holmstrom1982moral} notes, ``whether the necessary
part of Blackwell's theorem, that information systems cannot be compared
unless one is sufficient for the other, is true in the agency framework
is still an open question.'' 

This paper studies the comparisons of information in general classes
of moral hazard problems with a geometric approach. The moral hazard
environment features a principal (she) who hires an agent (he) to
take a costly hidden action that stochastically affects the state.
The principal has access to a performance measure, modeled as some
noisy information about the state that can be contracted upon. One
can think of this performance measure as some accounting system or
evaluation scheme that maps the underlying outcome into a contractible
signal. To provide incentives, she offers a contract that pays the
agent based on the performance measure. The principal seeks to minimize
the expected cost of implementing a target action.\footnote{Following \citet{grossman1983implicit}, a moral hazard problem can
be solved in two steps: (i) cost minimization given a target action,
and (ii) profit maximization to select the best target action. The
main text focuses on (i), and (ii) is treated in Appendix \ref{appsec:profit-max}. } The only departure from classic moral hazard is that the principal
contracts on a performance measure rather than directly on the state. 

My analysis is driven by the following question: Under what conditions
does one performance measure yield a lower cost to the principal than
another, irrespective of the agent's cost function and the target
action?\footnote{The total cost to the principal decomposes into two parts: the first-best
cost, which is simply the production cost the principal would pay
if the agent's action were directly contractible, and the agency cost,
which is the extra expenditure required when the action is hidden.
Since the principal's information does not affect the first-best cost,
comparing total costs is equivalent to comparing agency costs alone.} Put differently, what performance measure makes the moral hazard
problem less severe?

I answer this question from a geometric perspective that focuses on
the agent’s state-dependent utility, the very object that governs
incentives. Formally, the state-dependent utility is the vector of
the agent's expected utilities across states. The agent's choice of
action is determined by his state-dependent utility. The principal's
problem then reduces to choosing a feasible state-dependent utility
for the agent. Her information completely specifies which state-dependent
utilities are feasible and at what cost. A larger feasible set is
always better. This insight motivates the comparisons of information
based on the set of feasible state-dependent utilities.

The main contribution of the paper is to compare information across
three classes of moral hazard problems. The first concerns implementability:
which target actions can be implemented. The second concerns costs
in moral hazard problems with risk neutrality and limited liability.
The third concerns costs across all moral hazard problems. I identify
three orders on experiments corresponding to the three classes of
problems: the column space, the conic span, and the zonotope orders.
Each gives sharp comparisons: one experiment dominates another in
that order if and only if it always yields better outcomes in the
corresponding class of problems. The three orders are nested. I also
provide equivalent characterizations of these orders in terms of matrix
factorizations and posterior beliefs. 

First, the column space order concerns the implementability problem.
The column space of an experiment is the set of state-dependent utilities
the principal can generate with unrestricted payments. One experiment
dominates another in this order if its column space contains the other's.
This order characterizes the comparisons of implementability in all
moral hazard problems. In other words, an experiment can always implement
more actions than another if and only if they are ranked by the column
space order. This is because a larger column space means the principal
can generate a richer variety of state-dependent utilities, and hence
implement more outcomes. This order does not speak to costs because
it only captures whether certain state-dependent utilities are feasible
without considering the cost.

Second, the conic span order captures cost comparisons under risk
neutrality and limited liability. The conic span of an experiment
is the set of state-dependent utilities the principal can generate
with non-negative payments. One experiment dominates another in this
order if its conic span contains the other's. This order characterizes
cost comparisons in moral hazard problems with a risk neutral agent
protected by limited liability. In other words, an experiment yields
a lower cost than another in all such problems if and only if they
are ranked by the conic span order. The reason mirrors the previous
case. A larger conic span provides the principal with a larger feasible
set in the cost minimization problem, weakly lowering the costs.

Third, cost comparisons more generally call for the zonotope order.
The zonotope of an experiment is the set of state-dependent utilities
the principal can generate with non-negative and bounded payments.
One experiment dominates another in this order if its zonotope contains
the other's.\footnote{In terms of the distribution posterior beliefs, the zonotope order
corresponds to dominance in the linear convex order, in contrast to
the convex order required by Blackwell (see Table \ref{tab:comparisons}).
This interpretation is not central to my approach, but it provides
an alternative perspective.} Again, by the larger-choice-set argument, this order naturally characterizes
cost comparisons in moral hazard problems where the agent is risk
neutral and protected by limited liability, and the principal faces
an ex post budget constraint. 

More importantly, the zonotope order also characterizes cost comparisons
in all moral hazard problems, regardless of the shape of the utility
from money. To gain intuition, suppose the agent is risk averse, since
non-concave utilities can be concavified with random payments. Concavity
effectively bounds the agent's utility from money on both sides: from
above, since the utility may satiate, and from below, since deeply
negative utility from large punishments would violate the participation
constraint. Thus, the zonotope of an experiment becomes the relevant
object for comparisons, and the larger-choice-set argument goes through.
The result also implies that comparing costs in all moral hazard problems
reduces to the case of risk neutrality with limited liability and
ex post budget.

The orders form a nested hierarchy, with the inclusions generally
strict. The column space order is the finest, followed by the conic
span order, then by the zonotope order, with Blackwell being the most
demanding. In special cases, some orders coincide. When the state
space is binary, zonotope coincides with Blackwell.\footnote{The equivalence between the zonotope order and the Blackwell order
when the state space is binary was first documented by \citet{blackwell1953equivalent}
without using the name “zonotope”. \citet{bertschinger2014blackwell}
later reintroduces it and provides an example to illustrate the difference
between the zonotope and the Blackwell orders with more than two states.} 

In addition, the conic span, zonotope, and Blackwell orders all coincide
when the experiments have no redundant realizations, that is, the
experiments have full column rank. This observation should not be
viewed as a limitation. Rather, it clarifies that the Blackwell order
is the appropriate order for moral hazard problems precisely when
experiments do not have redundancy.\footnote{For example, when there are more states than realizations, experiments
are generically non-redundant.} Once redundancy arises, the orders introduced in this paper continue
to provide meaningful comparisons while the Blackwell order becomes
less useful. 

\begin{table}[!t]
\caption{Comparisons of Experiments\label{tab:comparisons}}
\bigskip{}

\resizebox{\textwidth}{!}{
\begin{tabular}{ccccc}
\toprule
Order &
Set Inclusion &
Economic Problems &
Matrix Factorization &
Posterior Beliefs \\
\midrule\midrule

\makecell{Column space \\ $\geq_{\text{Col}}$} &
$\left\{ \mathcal{E}\boldsymbol{v} : \boldsymbol{v} \in \mathbb{R}^{M} \right\}$ &
\makecell{MH implementability} &
\makecell{$\mathcal{E}' = \mathcal{E}G$ \\ Any $G$} &
\makecell{Affine span \\ inclusion} \\
\midrule

\makecell{Conic span \\ $\geq_{\text{Cone}}$} &
$\left\{
  \begin{array}{c}
    \mathcal{E}\boldsymbol{v} : \boldsymbol{v} \in \mathbb{R}^{M}, \\
    \boldsymbol{v} \geq 0
  \end{array}
\right\}$ &
\makecell{MH with risk neutrality \\ and limited liability} &
\makecell{$\mathcal{E}' = \mathcal{E}G$ \\ $G \geq 0$} &
\makecell{Convex hull \\ inclusion} \\
\midrule

\makecell{Zonotope \\ $\geq_{\text{Zon}}$} &
$\left\{
  \begin{array}{c}
    \mathcal{E}\boldsymbol{v} : \boldsymbol{v} \in \mathbb{R}^{M}, \\
    0 \leq \boldsymbol{v} \leq 1
  \end{array}
\right\}$ &
\makecell{MH with risk neutrality, \\ limited liability and ex post budget; \\ MH with risk aversion} &
\makecell{The sum of any subset of \\ columns of $\mathcal{E}'$ lies in $\operatorname{Zon}\mathcal{E}$} &
Linear convex order \\
\midrule

\makecell{Blackwell \\ $\geq_{\text{B}}$} &
$\bigcup\limits_{K=1}^{\infty} \left\{
  \begin{array}{c}
    \mathcal{E} \Pi : \Pi \in \mathbb{R}^{M \times K}, \\
    \Pi \geq 0,\; \Pi\boldsymbol{1} \leq \boldsymbol{1}
  \end{array}
\right\}$ &
Any decision problem &
\makecell{$\mathcal{E}' = \mathcal{E}G$ \\ $G \geq 0,\; G\boldsymbol{1} = \boldsymbol{1}$} &
Convex order \\
\bottomrule
\end{tabular}
}

\bigskip{}

\noindent\begin{minipage}[t]{1\columnwidth}%
{\footnotesize Note: This table summarizes the results in the paper.
An experiment $\mathcal{E}$ is represented as a row stochastic matrix.
The first column lists the orders to compare experiments. The second
column presents the definitions of the orders based on set inclusion:
$\mathcal{E}$ dominates $\mathcal{E}'$ if the corresponding set
of $\mathcal{E}$ includes that of $\mathcal{E}'$. The third column
shows the matrix factorization condition for $\mathcal{E}$ to dominate
$\mathcal{E}'$.   The fourth column uses distributions of induced
posterior beliefs. For cumulative distribution functions $F,G:\mathbb{R}^{N}\to[0,1]$,
$F$ dominates $G$ in the linear convex order, denoted $F\geq_{\text{lcx}}G$,
if for any convex function $\phi:\mathbb{R}\to\mathbb{R}$ and any
vector $\boldsymbol{\beta}\in\mathbb{R}^{N}$, $\mathbb{E}_{\boldsymbol{x}\sim F}\left[\phi(\boldsymbol{\beta}\cdot\boldsymbol{x})\right]\geq\mathbb{E}_{\boldsymbol{x}\sim G}\left[\phi(\boldsymbol{\beta}\cdot\boldsymbol{x})\right]$,
where $\boldsymbol{\beta}\cdot\boldsymbol{x}$ represents the dot
product. $F$ dominates $G$ in the convex order, denoted $F\geq_{\text{cx}}G$,
if for any convex function $\phi:\mathbb{R}^{N}\to\mathbb{R}$, $\mathbb{E}_{\boldsymbol{x}\sim F}\left[\phi(\boldsymbol{x})\right]\geq\mathbb{E}_{\boldsymbol{x}\sim G}\left[\phi(\boldsymbol{x})\right]$.
The last column states the corresponding classes of moral hazard and
decision problems.}%
\end{minipage}
\end{table}

Lastly, I also provide easy-to-check equivalent characterizations
of each order in terms of matrix factorizations and posterior beliefs.
These alternative formulations link the geometric perspective to existing
notions in information economics. The results are summarized in Table
\ref{tab:comparisons}. The rest of the paper is organized as follows.
Section \ref{subsec:lit-review} discusses the related literature.
Section \ref{sec:MH-problems} introduces the moral hazard model.
Section \ref{sec:orders} studies the comparisons of information for
different classes of moral hazard problems. Section \ref{sec:extension-conclusion}
concludes.

\subsection{Related Literature \label{subsec:lit-review}}

This paper contributes to both the literature on moral hazard and
information ordering. 

The moral hazard literature begins with \citet{ross1973economic},
\citet{shavell1979risk}, \citet{holmstrom1979moral}, and \citet{mirrlees1999theory}.
Classic models typically take the agent's action space as a finite
or one-dimensional set. The principal observes the outcome and offers
contracts based on it. These models are usually analytically difficult:
the agent's optimization problem is generally non-convex, and the
first-order approach requires strong regularity conditions. More recently,
\citet{georgiadis2024flexible} propose a tractable variant where
the agent can flexibly choose a distribution over outcomes at a convex
and smooth cost. Flexibility convexifies the agent's problem, and
the first-order condition becomes both necessary and sufficient, though
the flexible setting does not nest the classic models. My model nests
both. I impose minimal restrictions on the agent's action space and
cost function. The comparison results developed below apply across
both settings. 

The comparisons of information start from Blackwell \citeyearpar{blackwell1951comparison,blackwell1953equivalent},
who proposes garbling as a way to compare the value of information
across all decision problems. Blackwell's comparison turns out to
be very restrictive. Many pairs of experiments are not comparable,
especially beyond the case of a binary state space. Follow-up works
explore ways to refine the Blackwell order by restricting attention
to monotone decision problems \citep{lehmann2011comparing,kim2023comparing},
problems with a single-crossing property \citep{persico2000information},
problems with the interval dominance property \citep{quah2009comparative},
binary decision problems \citep{chen2025experiments}, information
elicitation \citep{azrieli2022elicitability}, and information acquisition
and prediction \citep{xia2025expert}. \footnote{Apart from restricting the set of decision problems, there are other
ways to refine the Blackwell order. For example, \citet{brooks2024comparisons}
consider robustness to additional information that can be arbitrarily
correlated with the current information. \citet{moscarini2002law}
and \citet{mu2021blackwell} consider taking multiple independent
draws from the same experiments. } I follow the same route by restricting attention to different classes
of moral hazard problems. 

Since the celebrated informativeness principle by \citet{holmstrom1979moral},
economists have been studying what information is better at providing
incentives. \citet{grossman1983implicit} show that the Blackwell
order is sufficient for comparisons in moral hazard problems with
separable cost. \citet{gjesdal1982information} shows that even the
Blackwell order is not the correct order for comparisons if the action
taken affects the agent's risk attitude.\footnote{More concretely, \citet{gjesdal1982information} gives an example
where the principal has no information and the agent's risk aversion
depends on the action taken. In that example, a random incentive scheme
makes both the principal and the agent better off. } The results in my paper do require a notion of separability: the
action taken cannot affect the agent's risk attitude. Within this
separable setting, the literature sharpens the informativeness principle.
\citet{kim1995efficiency} extends the informativeness principle to
a mean-preserving-spread order on the distribution of likelihood ratios.
\citet{demougin2001ranking} provide an equivalent condition in the
integral form. \citet{dewatripont1999economics} consider this problem
in a career concern setting. More recently, \citet{chen2025experiments}
proposes using the linear convex order on the distribution of the
likelihood ratios. Whereas prior work focuses on likelihood ratio
properties, I develop the orders from the geometry of state-dependent
utilities. This allows me to study a general class of moral hazard
problems, encompassing both the classic and the flexible settings. 

Among the three orders studied in the paper, the column space order
first appears in \citet{azrieli2022elicitability} under a different
name. They study the problem of information elicitation, where a principal
contracts with an agent based on a noisy experiment to elicit the
agent's exogenously given belief. They compare experiments based on
what can be elicited from the agent and their order of elicitation
coincides with the column space order. I show that the same order
also characterizes the comparisons of the implementable actions in
moral hazard problems. The coincidence arises because, in both elicitation
and moral hazard, the agent's incentives are governed entirely by
his state-dependent utility. The column space order then exactly characterizes
the feasible set of state-dependent utilities the principal can provide. 

The conic span order appears in \citet{chatzikokolakis2020refinement}
in the context of differential privacy. They formalize it via induced
posterior beliefs. My contribution is to define it via sets of feasible
state-dependent utilities under non-negative payments, and to establish
its role in moral hazard problems. 

The zonotope order originates in \citet{blackwell1953equivalent},
who introduces it as the criterion to compare information in binary
decision problems, though without using the name “zonotope.” It is
later studied as a comparison of inequality (\citealt{koshevoy1995multivariate,koshevoy1996lorenz,koshevoy1997lorenz},
see \citealp{mosler2002multivariate} for a textbook treatment). \citet{bertschinger2014blackwell}
reintroduce it to information economics under the name ``zonotope''.
In a concurrent paper, \citet{chen2025experiments} refers to it as
the linear Blackwell order and illustrates how it sits between the
\citet{lehmann2011comparing} and the Blackwell orders. He also shows
that the zonotope order characterizes cost comparisons in flexible
moral hazard problems. My contribution, by contrast, is to shift to
a geometric perspective: incentives depend only on the set of state-dependent
utilities. This perspective allows me to work with a general model
that nests both the classic and the flexible settings. It also produces
the column space and the conic span orders, and yields a simple geometric
proof of the results.

\section{Moral Hazard Problems \label{sec:MH-problems}}

This section sets up a general model of moral hazard and defines several
classes of moral hazard problems considered in the analysis. 

\paragraph*{Notation}

Throughout the paper, I adopt the following notation conventions:
Matrices are denoted by uppercase letters, e.g., $\mathcal{E},G$;
vectors are denoted by boldface lowercase letters, e.g., $\boldsymbol{t}=(t_{1},t_{2},...,t_{M})$;
scalars are denoted by plain lowercase letters; inequality $\boldsymbol{t}\geq0$
means every entry of $\boldsymbol{t}$ weakly positive; boldface $\boldsymbol{1}$
and $\boldsymbol{0}$ are vectors of ones and zeros of conformable
shapes. 

\subsection{Principal-Agent Setup}

Fix a finite set of $N$ states $\Omega=\{\omega_{n}\}_{n=1}^{N}$.
The state represents some true performance outcome. 

A principal (she) hires an agent (he) to take a costly hidden action
that stochastically affects the state. The principal does not observe
the agent's action directly. Instead, she has access to a performance
measure, which is some noisy information about the realized state.
The performance measure turns the unobserved performance outcome into
a contractible signal. The principal can contract with the agent based
on the signal. Her goal is to maximize her profit, which is her payoff
from the agent's action minus the expected payments to the agent. 

The only departure from classic moral hazard models is that the principal
cannot directly contract on the state. Instead, she contracts on the
performance measure. Classic models are nested as a special case where
the performance measure fully reveals the state. In those models,
the state serves as both the payoff-relevant object and the information
to contract on. Separating the contractible information from the payoff-relevant
state allows me to compare information without otherwise changing
the underlying problem. 

The agent privately chooses a costly action from the set of available
actions. Let $A$ be the set of actions. Each action $a\in A$ induces
a distribution $\boldsymbol{\mu}_{a}$ over the state space $\Omega$.\footnote{If there are multiple actions leading to the same state distribution,
only the least costly one(s) are relevant.}  The agent's production technology is described by a cost function
$C:A\to\mathbb{R}_{+}$. The production cost $C$ is common knowledge.
I assume that $C$ is lower semi-continuous\footnote{Lower semi-continuity of $C$ is to be understood with respect to
the induced state distribution: identifying each action $a$ with
$\boldsymbol{\mu}_{a}$, I require $C$ to be lower semi-continuous
as a function on $\Delta\Omega$. This assumption ensures the agent's
problem admits a solution. It is generally assumed in both classic
and flexible moral hazard models and plays no other role in the analysis.} and is normalized to have a free option $\underline{a}$ with $C(\underline{a})=0$.
Let $\mathcal{C}$ be the set of all such cost functions. Let $\mathcal{A}$
be the collection of all possible action spaces.\footnote{Strictly speaking, there is no set of all sets. By $\mathcal{A}$
I mean a sufficiently rich action space, e.g., one indexed by $\Delta\Omega$
itself with $\boldsymbol{\mu}_{a}=a$. Any action space $A$ of interest
embeds into $\mathcal{A}$ by assigning $C(a)=+\infty$ for $a\notin A$. }

Beyond lower semi-continuity and normalization, the model imposes
no assumption on the agent's cost function $C$. In particular, $C$
need not be smooth or convex. This nests classic settings as well
as the flexible setting of \citet{georgiadis2024flexible}. Classic
settings with a discrete action space  or with a continuous effort
effort $a\in A=[0,1]$ directly map into my model. The flexible moral
hazard setting of \citet{georgiadis2024flexible} is also a special
case of my model. They assume that the agent directly chooses a state
distribution with a lower semi-continuous, convex, and smooth cost
function. Let $\mathcal{C}_{F}\subset\mathcal{C}$ denote the subclass
of convex and smooth cost functions defined on $\Delta\Omega$. The
flexible setting is therefore a special case of my model with $A=\Delta\Omega$
and $C\in\mathcal{C}_{F}$. The convexity of cost is without loss
in flexible problems, where the agent can randomize over the state
distributions. However, the cost function $C$ may fail to be convex
in classic settings.\footnote{For instance, in a continuous effort model where the agent chooses
effort $a\in[0,1]$, the cost is finite only at state distributions
$\left\{ \boldsymbol{\mu}_{a}\right\} _{a\in[0,1]}$, which need not
be a convex subset of $\Delta\Omega$.}

The agent's payoff is additively separable in his utility from money
and the production cost. His utility from money $u:\mathbb{R}\to\mathbb{R}$
is continuous, strictly increasing, unbounded, and normalized so that
$u(0)=0$. In the main text, I also assume that $u$ is concave for
ease of exposition.\footnote{With a concave utility, it suffices for the principal to consider
deterministic payments to the agent.} Appendix \ref{appsec:non-concave} shows that my analysis also applies
to more general utilities. Let $\mathcal{U}$ be the set of utilities
that satisfy all assumptions above. Later, I also consider the special
case where the agent is risk neutral with $u(t)=t$. The agent's payoff
when producing $a$ and receiving payment $t$ is $u(t)-C(a)$. He
maximizes his expected payoff and has an outside option $\underline{u}$.

The additive separable form $u(t)-C(a)$ is adopted for ease of exposition.
The analysis only requires that the agent's risk preference over money
is independent of his action. This is the notion of risk independence
in \citet{keeney1973risk}. This includes, for example, exponential
preferences of the form $-\exp\left[-\gamma\left(t-C(a)\right)\right]$
where $\gamma>0$ is the risk aversion parameter. But it rules out
settings where working harder changes the agent's risk attitude. Appendix
\ref{appsec:non-separable} formalizes the more general case. Without
the independence assumption, the comparisons developed in this paper
need not hold. For example, \citet{gjesdal1982information} provides
an example where adding pure noise can reduce agency costs. 

\subsection{Information}

Performance measures are modeled as finite (Blackwell) experiments
about the state. An experiment $\mathcal{E}:\Omega\to\Delta Y$ maps
each state to a distribution over realizations in $Y:=\{y_{m}\}_{m=1}^{M}$
. In applications, these realizations can be interpreted as the reports
generated by the accounting system or the scores produced by the evaluation
scheme. 

Any finite experiment $\mathcal{E}$ with $M$ realizations can be
represented as an $N\times M$ row stochastic matrix, that is, a matrix
whose entries are all weakly positive (formally, $\mathcal{E}\geq0$),
and rows sum to one (formally, $\mathcal{E}\boldsymbol{1}=\boldsymbol{1}$).
The $n$-th row of $\mathcal{E}$ represents the state $\omega_{n}$,
the $m$-th column of $\mathcal{E}$ represents the realization $y_{m}$,
and the $(n,m)$-th entry $\mathcal{E}_{n,m}:=\mathcal{E}(y_{m}\mid\omega_{n})$
is the conditional probability of realization $y_{m}$ in state $\omega_{n}$
with the following matrix form,
\[
\mathcal{E}:=\begin{array}{c}
\omega_{1}\\
\vdots\\
\omega_{N}
\end{array}\overset{\begin{array}{ccc}
y_{1} & \dots & y_{M}\end{array}}{\begin{bmatrix}\mathcal{E}_{1,1} & \dots & \mathcal{E}_{1,M}\\
\vdots & \ddots & \vdots\\
\mathcal{E}_{N,1} & \dots & \mathcal{E}_{N,M}
\end{bmatrix}}.
\]
I use $\mathcal{E}\left(\cdot\mid\boldsymbol{\mu}\right)\in\Delta Y$
to denote the induced distribution of realizations given the state
distribution $\boldsymbol{\mu}$, and $\pi\left(\mathcal{E};\boldsymbol{\nu}\right)\in\Delta\Delta\Omega$
to denote the distribution of posterior beliefs induced by $\mathcal{E}$
at prior $\boldsymbol{\nu}\in\Delta\Omega$. The posterior distribution
$\pi$ does not enter the moral hazard analysis directly. It appears
in alternative characterizations of the orders in Section \ref{sec:orders}.
Finally, let $E^{M}$ be the set of all experiments with $M$ realizations,
and $E=\cup_{M=1}^{\infty}E^{M}$ be the set of all finite experiments.

The principal's performance measure is an experiment $\mathcal{E}\in E$
with $M$ realizations in $Y=\{y_{m}\}_{m=1}^{M}$. It is commonly
known and its realization is contractible. 

\subsection{Contracting Problem}

The principal's objective is to choose a target action $a_{0}$ and
a contract to maximize her expected profit. This problem decomposes
into two steps, (i) cost minimization: for each candidate action $a_{0}$,
find the contract that implements $a_{0}$ at minimum cost; and (ii)
profit maximization: choose the action $a_{0}$ that maximizes the
principal's profit. Since the principal's information enters the problem
only through step (i), the main text focuses on the cost minimization
problem. Appendix \ref{appsec:profit-max} extends the analysis to
profit maximization. 

Following the decomposition above, I focus on the problem to implement
some target action $a_{0}$ at minimum cost. To provide incentives,
the principal offers a contract that ties the payment to the performance
measure. Formally, a contract under experiment $\mathcal{E}$ specifies
a payment rule $\boldsymbol{t}:Y\to\mathbb{R}$ which maps realizations
of $\mathcal{E}$ to payments. Since the experiment is finite, the
contract $\boldsymbol{t}$ can also be viewed as a vector in $\mathbb{R}^{M}$,
each component of which specifies the payment to the agent following
a realization of $\mathcal{E}$. It is without loss to assume that
the contract is deterministic when the agent is weakly risk averse.
Appendix \ref{appsec:non-concave} considers more general utility
functions and allows for random contracts.

Given a contract $\boldsymbol{t}$ under experiment $\mathcal{E}$,
the agent decides whether to accept it, and if so, what action to
choose. Formally, the agent's expected payoff from action $a$ is
given by 
\begin{equation}
U\left(a;\mathcal{E},\boldsymbol{t}\right):=\mathbb{E}_{y\sim\mathcal{E}\left(\cdot\mid\boldsymbol{\mu}_{a}\right)}\left[u\left(\boldsymbol{t}\left(y\right)\right)\right]-C\left(a\right).\label{eq:A-payoff}
\end{equation}
To implement action $a_{0}$, the contract $\boldsymbol{t}$ must
satisfy the incentive constraint (\ref{eq:IC}) and the participation
constraint (\ref{eq:IR}),

\begin{align}
a_{0} & \in\underset{a\in A}{\operatorname{argmax}}\:U\left(a;\mathcal{E},\boldsymbol{t}\right),\tag{IC}\label{eq:IC}\\
U\left(a_{0};\mathcal{E},\boldsymbol{t}\right) & \geq\underline{u},\tag{IR}\label{eq:IR}
\end{align}
so that the agent optimally chooses action $a_{0}$, and his payoff
is weakly above his outside option.\footnote{In case of indifference, I assume the agent always breaks ties in
favor of the principal. The incentive constraint is hence weak, in
the sense that it suffices for $a_{0}$ to be an optimal choice, rather
than the unique optimal choice.} 

A moral hazard problem is a tuple $P:=(A,a_{0},u,C,\mathcal{R})$
where $a_{0}\in A$ is the target action, $u\in\mathcal{U}$ is the
agent's utility for money, $C\in\mathcal{C}$ is the production cost,
and $\mathcal{R}\subseteq\{\text{LL},\text{B}\}$ denotes the set
of additional constraints on the contract.\footnote{The agent's outside option $\underline{u}$ is omitted for brevity.
The principal can always solve the moral hazard problem assuming $\underline{u}=0$,
and then provide the agent with a lump-sum payment that equals his
outside option. Since the outside option does not depend on the principal's
information, it does not affect the comparisons and is omitted from
the description of a moral hazard environment.} Depending on the application, the additional constraints may include
limited liability (\ref{eq:LL}), which requires the payments to be
non-negative, and ex post budget (\ref{eq:B}), which bounds the payments
from above by some constant $B$,
\begin{align}
\boldsymbol{t} & \geq0,\tag{LL}\label{eq:LL}\\
\boldsymbol{t} & \leq B.\tag{B}\label{eq:B}
\end{align}
Let $\mathcal{P}$ denote the set of all moral hazard problems. 

Given a problem $P:=(A,a_{0},u,C,\mathcal{R})$ and an experiment
$\mathcal{E}$, the principal solves
\begin{align}
\min_{\boldsymbol{t}} & \;\mathbb{E}_{y\sim\mathcal{E}\left(\cdot\mid\boldsymbol{\mu}_{0}\right)}\left[\boldsymbol{t}(y)\right],\text{ s.t. }\text{(\ref{eq:IC}), (\ref{eq:IR})},\mathcal{R},\tag{P}\label{eq:P-problem}
\end{align}
where $\boldsymbol{\mu}_{0}:=\boldsymbol{\mu}_{a_{0}}$ denotes the
state distribution induced by the target action $a_{0}$. Let $\mathcal{P}$
be the set of all problems $P$ with $A\in\mathcal{A},a_{0}\in A,u\in\mathcal{U},C\in\mathcal{C}$
and $\mathcal{R}\subseteq\{\text{LL},\text{B}\}$.

The formulation in (\ref{eq:P-problem}) assumes that the principal
is risk neutral over money. This is without loss: if the principal
has some general preference over money, problem (\ref{eq:P-problem})
can be rewritten in units of principal's utility and the analysis
goes through unaffected. Appendix \ref{appsec:non-concave} formalizes
this. 

\subsection{Comparing Experiments}

The main exercise of the paper is to compare experiments across different
classes of moral hazard problems. Let $\kappa(\mathcal{E};P)$ denote
the value of problem $P$. I write $\kappa(\mathcal{E};P)=+\infty$
when the problem is infeasible. This happens, for example, when $\mathcal{E}$
contains no information and the target action has a nonzero cost.
Comparisons of the form $\kappa(\mathcal{E};P)\leq\kappa(\mathcal{E}';P)$
are understood to hold when both sides are $+\infty$. 

I study two kinds of comparisons. The first is implementability, the
question of whether an action can be implemented. Given a set of actions
$A$, the utility $u$, and the production cost $C$, let 
\[
\mathcal{I}(\mathcal{E};A,u,C):=\left\{ a_{0}\in A:\kappa\left(\mathcal{E};(A,a_{0},u,C,\emptyset)\right)<\infty\right\} 
\]
be the set of actions that can be implemented. The comparisons of
implementability ask when the set is larger under some experiment
$\mathcal{E}$ than $\mathcal{E}'$ in set inclusion, regardless of
$u$ and $C$. 

The second is cost. The comparisons of cost ask when the cost $\kappa(\mathcal{E};P)$
is lower than $\kappa(\mathcal{E}';P)$ for any problem $P\in\mathcal{P}'$
within a certain class $\mathcal{P}'\subseteq\mathcal{P}$ of moral
hazard problems. I focus on the following classes. $\mathcal{P}_{1}:=\{P\in\mathcal{P}:u(t)=t,\mathcal{R}=\emptyset\}$
is the class of risk neutral problems with no contract constraint,
in which the principal can implement any feasible action at the first-best
cost and the comparisons of cost reduces to the comparisons of implementability.
Next, $\mathcal{P}_{2}:=\{P\in\mathcal{P}:u(t)=t,\mathcal{R}=\text{LL}\}$
is the class of risk neutral problems with limited liability, and
$\mathcal{P}_{3}:=\{P\in\mathcal{P}:u(t)=t,\mathcal{R}=\text{LL},\text{B}\}$
is the class with an additional ex post budget constraint. Risk neutrality
with limited liability is a workhorse assumption in contract theory
\citep{sappington1983limited,innes1990limited,poblete2012form,gottlieb2022simple}.
The case of bounded payments is further studied by \citet{jewitt2008moral}.
Lastly, $\mathcal{P}$ is the full class. 

I also define the class of flexible problems $\mathcal{P}_{F}:=\{P\in\mathcal{P}:C\in\mathcal{C}_{F}\}$
from \citet{georgiadis2024flexible}, which intersects with each of
$\mathcal{P}_{1},\mathcal{P}_{2},\mathcal{P}_{3},\mathcal{P}$ and
is technically convenient to solve.

\section{Comparisons of Experiments \label{sec:orders}}

In this section, I study the comparisons of experiments in different
classes of moral hazard problems. The results in this section are
summarized in Table \ref{tab:comparisons}. In Sections \ref{subsec:column-space}-\ref{subsec:zonotope},
I develop the orders to compare experiments for each class of moral
hazard problems. Section \ref{subsec:relations} discusses the relations
between these orders and their connections to the Blackwell order. 

The central object of moral hazard problems is the agent's state-dependent
utility. Formally, let $\boldsymbol{v}:=u(\boldsymbol{t})=\left[u(t_{m})\right]_{m=1}^{M}\in\mathbb{R}^{M}$
denote the vector of utilities associated with the contract $\boldsymbol{t}$
under experiment $\mathcal{E}$. The agent's state-dependent utility
is $\boldsymbol{u}:=\mathcal{E}\boldsymbol{v}\in\mathbb{R}^{N}$.
Its $n$-th component $u_{n}:=\sum_{m=1}^{M}\mathcal{E}_{n,m}u(t_{m})$
is the agent's expected utility from money in state $\omega_{n}$.
The vector $\boldsymbol{u}$ is central to both sides of the contracting
problem. On the agent's side, it determines incentives. Given $\boldsymbol{u}$,
the agent's problem is fully determined: 
\begin{equation}
\underset{a\in A}{\max}\:U\left(a;\mathcal{E},\boldsymbol{t}\right)=\boldsymbol{\mu}_{a}\cdot\boldsymbol{u}-C(a).\tag{A}\label{eq:A-problem}
\end{equation}
Two contracts that generate the same $\boldsymbol{u}$, possibly under
different experiments, yield the same optimal action, regardless of
how the experiments and the contracts differ otherwise. On the principal's
side, $\boldsymbol{u}$ is the only channel through which she can
affect the agent's action. Her contracting problem reduces to designing
$\boldsymbol{u}$ subject to feasibility, and which $\boldsymbol{u}$
is feasible depends on the experiment $\mathcal{E}$.

From this perspective, comparing experiments reduces to comparing
the sets of feasible state-dependent utilities they generate. A larger
feasible set is better, and this is the repeated theme of the results.
To see how this theme plays out, consider the following stylized example.
\begin{example}
\label{exa:main-example}A firm hires a worker to work on a project.
The project has two outcomes, left ($\omega_{\ell}$) and right ($\omega_{r}$).
The worker can exert private effort to affect the probability of the
outcomes. In particular, by choosing effort level $a\in[0,1]$ at
a cost of $C(a)=a^{2}$, he induces outcome $\omega_{r}$ with probability
$a$. His utility from money is given by some strictly increasing
and unbounded function $u:\mathbb{R}_{+}\to\mathbb{R}$. The firm
wants to implement $a=1$, but does not observe the outcomes directly.
This may be because the firm oversees numerous projects and cannot
monitor all of them closely. Instead, the firm has access to some
noisy performance measure, and can pay bonuses based on it. More concretely,
the firm considers the following performance measures:
\begin{enumerate}
\item [$\mathcal{E}_1$:] A binary experiment with realizations $Y_{1}=\{L,R\}$
that indicates either left ($L$) or right ($R$) and is correct with
probability 70\%.
\item [$\mathcal{E}_2$:] An experiment with realizations $Y_{2}=\{L,N,R\}$
that yields a completely uninformative signal ($N$) with probability
one half, and otherwise it correctly identifies $L$ or $R$ with
probability 80\%.
\item [$\mathcal{E}_3$:] An experiment with realizations $Y_{3}=\{sL,wL,wR,sR\}$
that indicates both whether its signal is strong or weak and which
outcome is more likely. The signal is equally likely to be strong
or weak. Strong signals $sL$ and $sR$ are correct with probability
80\%, and weak signals $wL$ and $wR$ are correct with probability
60\%. 
\end{enumerate}
Using the matrix notation introduced in Section \ref{sec:MH-problems},
the experiments can be represented as follows, 
\[
\mathcal{E}_{1}=\begin{bmatrix}0.7 & 0.3\\
0.3 & 0.7
\end{bmatrix};\;\mathcal{E}_{2}=\begin{bmatrix}0.4 & 0.5 & 0.1\\
0.1 & 0.5 & 0.4
\end{bmatrix};\;\mathcal{E}_{3}=\begin{bmatrix}0.4 & 0.3 & 0.2 & 0.1\\
0.1 & 0.2 & 0.3 & 0.4
\end{bmatrix}.
\]
The firm wishes to know how to optimally design the contract under
each experiment, and which experiment is better. 
\end{example}

\subsection{Implementability \label{subsec:column-space}}

I begin with implementability, the question of what can and cannot
be done with a given experiment. In the context of Example \ref{exa:main-example},
we want to know whether the firm can make it optimal for the worker
to produce $a=1$. Since the state space is binary, the worker's state-dependent
utility is $\boldsymbol{u}=(u_{\ell},u_{r})$ where $u_{\ell}$ is
his expected utility from money in state $\omega_{\ell}$, and $u_{r}$
in state $\omega_{r}$. His expected payoff is given by 
\[
(1-a)u_{\ell}+au_{r}-C(a).
\]
For optimality at $a=1$, the first-order condition requires that
the marginal benefit to increase the probability of $\omega_{r}$,
namely $u_{r}-u_{\ell}$, to exceed the marginal cost $C'(1)=2$.
Implementing $a=1$ therefore becomes a question of whether the firm
can generate some $\boldsymbol{u}$ with $u_{r}-u_{\ell}\geq2$. Under
each of the three experiments in Example \ref{exa:main-example},
this is possible with an appropriate bonus when the signal indicates
$\omega_{r}$ is more likely. This is not true for every experiment:
an uninformative experiment can only generate $\boldsymbol{u}$ with
$u_{r}=u_{\ell}$, and the firm cannot implement any costly action.

More generally, the comparisons of implementability ask whether an
experiment can implement a larger set of actions than another, regardless
of the agent's utility and cost. Formally, given experiments $\mathcal{E}$
and $\mathcal{E}'$, we want to know whether $\mathcal{I}(\mathcal{E};A,u,C)\supseteq\mathcal{I}(\mathcal{E}';A,u,C)$
for any action space $A$, utility $u\in\mathcal{U}$, and cost $C\in\mathcal{C}$.
This reduces to comparing the sets of feasible state-dependent utilities
$\boldsymbol{u}$ the experiments can generate, since the agent's
choice of action only depends on $\boldsymbol{u}$.

This motivates our first order, the column space order. Recall that
an experiment $\mathcal{E}$ is an $N\times M$ row stochastic matrix.
Its column space is $\operatorname{Col}\mathcal{E}:=\left\{ \mathcal{E}\boldsymbol{v}:\boldsymbol{v}\in\mathbb{R}^{M}\right\} $,
the set of all state-dependent utilities that can be generated with
experiment $\mathcal{E}$ using any contract.\footnote{The utility function $u$ is assumed to be unbounded. Therefore, any
$\boldsymbol{v}\in\mathbb{R}^{M}$ can be generated with some contract
$\boldsymbol{t}\in\mathbb{R}^{M}$.} 

The column space order is defined by set inclusion: say that $\mathcal{E}$
dominates $\mathcal{E}'$ in the column space order, denoted $\mathcal{E}\geq_{\text{Col}}\mathcal{E}'$,
if $\operatorname{Col}\mathcal{E}\supseteq\operatorname{Col}\mathcal{E}'$.
The column space order first appears in \citet{azrieli2022elicitability}
in the context of elicitation. Dominance in the column space order
means the principal has a larger set of feasible state-dependent utilities
to choose from. 

The column space order characterizes the comparisons of implementability:
$\mathcal{E}\geq_{\text{Col}}\mathcal{E}'$ if and only if any action
implementable under $\mathcal{E}'$ is also implementable under $\mathcal{E}$.
In Appendix \ref{appsec:non-concave}, I allow both the principal
and the agent to have general preferences over money, and show that
the column space order also characterizes the implementability comparisons
for a more general class of problems. 

The sufficiency part of the result is straightforward via a larger-feasible-set
argument. Given $(A,u,C)$, if $a_{0}$ is implementable under $\mathcal{E}'$
-- say, via a contract generating state-dependent utility $\boldsymbol{u}$
-- then it is also implementable under $\mathcal{E}$ since the same
$\boldsymbol{u}$ is feasible under $\mathcal{E}$ as well.

For necessity, I argue by contrapositive: if the column space inclusion
fails, I construct a moral hazard problem that violates the implementability
comparisons. The construction uses a problem in the flexible class
$\mathcal{P}_{F}$ introduced by \citet{georgiadis2024flexible},
where the first-order condition suffices for the agent's problem.
This allows precise control over the state-dependent utility required
for implementation. I pick a cost function and a target action so
that the required state-dependent utility for implementation is only
feasible under $\mathcal{E}'$, which completes the proof. Details
appear in Appendix \ref{appsec:proof-appendix}.

In Example \ref{exa:main-example}, all three experiments have the
same column space that spans the full $\mathbb{R}^{2}$. The firm
can therefore implement any action under each experiment, though,
as we will see, at different costs.\footnote{In fact, with a binary state space, any informative experiment can
implement all feasible actions with a finite marginal cost. The only
exception is an uninformative experiment, which cannot implement any
action with a strictly positive cost. If we move to larger state spaces,
the column space order becomes nontrivial, and informative experiments
may differ in which actions they can implement.}

The column space order compares implementability but not cost in general.
The reason is that it concerns the feasibility of state-dependent
utilities, rather than the scale or dispersion of transfers required
to generate them. One exception is the class $\mathcal{P}_{1}$ of
risk neutral problems without limited liability. In this case, every
implementable action can be implemented at the first-best cost, so
cost comparisons reduce to implementability comparisons.

Besides moral hazard problems, one can show that the column space
order admits two equivalent characterizations. In terms of matrix
factorization, $\mathcal{E}\geq_{\text{Col}}\mathcal{E}'$ requires
$\mathcal{E}'=\mathcal{E}G$ for some unconstrained matrix $G$, unlike
Blackwell where $G$ has to be a garbling. 

In terms of posterior beliefs, $\mathcal{E}\geq_{\text{Col}}\mathcal{E}'$
requires that the affine hull of the posteriors induced by $\mathcal{E}'$
lies inside that induced by $\mathcal{E}$. The column space order
thus compares the qualitative information contained in experiments:
the directions in which the posteriors are moved, not how far or how
often. This complements the earlier intuition that cost comparisons
usually do not hold for column space dominance. 

The next theorem summarizes the results on the column space order.

\begin{thm}
\label{thm:column-space} For any experiments $\mathcal{E}\in E^{M}$
and $\mathcal{E}'\in E^{M'}$, the following are equivalent:
\begin{enumerate}
\item [$(1)$] $\mathcal{E}\geq_{\text{Col}}\mathcal{E}'$,
\item [$(2)$] $\mathcal{I}(\mathcal{E}';A,u,C)\subseteq\mathcal{I}(\mathcal{E};A,u,C)$
for any $A\in\mathcal{A}$,$u\in\mathcal{U},C\in\mathcal{C}$.
\item [$(3)$] $\kappa(\mathcal{E};P)\leq\kappa(\mathcal{E}';P)$ for any
$P\in\mathcal{P}_{1}$,
\item [$(4)$] $\mathcal{E}'=\mathcal{E}G$ for some matrix $G$,
\item [$(5)$] $\operatorname{Aff}\operatorname{Supp}\pi\left(\mathcal{E};\boldsymbol{\nu}\right)\supseteq\operatorname{Aff}\operatorname{Supp}\pi\left(\mathcal{E}';\boldsymbol{\nu}\right)$
for any prior $\boldsymbol{\nu}\in\Delta\Omega$.\footnote{The affine span of a set $X\subseteq\mathbb{R}^{N}$ is defined as
$\operatorname{Aff}X=\left\{ \sum w_{i}\boldsymbol{x}_{i}:w_{i}\in\mathbb{R},\sum w_{i}=1,\boldsymbol{x}_{i}\in A,\forall i\right\} $,
which is the set of all affine combinations of elements in $X$. Condition
$(5)$ is equivalent to requiring the same inclusion to hold for some
interior prior $\boldsymbol{\nu}$. In Appendix \ref{appsec:proof-appendix},
I provide the proof for both statements.}
\end{enumerate}
\end{thm}

\subsection{Cost under Risk Neutrality \label{subsec:conic-span}}

Next, I turn to the comparisons of cost with a risk neutral agent
protected by limited liability -- the class $\mathcal{P}_{2}$.

Cost comparisons in $\mathcal{P}_{2}$ are immediate from the state-dependent
utility framework. The relevant object here is the conic span of an
experiment $\mathcal{E}$, defined as $\operatorname{Cone}\mathcal{E}:=\left\{ \mathcal{E}\boldsymbol{v}:\boldsymbol{v}\in\mathbb{R}^{M},\boldsymbol{v}\geq0\right\} $.
For any utility $u\in\mathcal{U}$, $\operatorname{Cone}\mathcal{E}$
is the set of all state-dependent utilities that can be generated
with experiment $\mathcal{E}$ under limited liability.\footnote{Recall that $u$ is strictly increasing and is normalized so that
$u(0)=0$. Therefore any $\boldsymbol{v}\geq0$ can be generated with
some contract $\boldsymbol{t}\geq0$.} 

The conic span order is again defined by set inclusion: say that $\mathcal{E}$
dominates $\mathcal{E}'$ in the conic span order, denoted $\mathcal{E}\geq_{\text{Cone}}\mathcal{E}'$,
if $\operatorname{Cone}\mathcal{E}\supseteq\operatorname{Cone}\mathcal{E}'$.
Dominance in the conic span order means the principal has a larger
set of feasible state-dependent utilities under limited liability.

The conic span order characterizes the cost comparisons in $\mathcal{P}_{2}$.\footnote{The conic span order also characterizes the cost comparisons in all
moral hazard problems with risk neutrality and an ex post budget constraint
that bounds the payments from above. See Appendix \ref{appsec:proof-appendix}.} The sufficiency also follows from a larger-choice-set argument: a
larger conic span provides a larger set of feasible state-dependent
utilities to minimize this cost. Its necessity is again proved by
constructing examples in the flexible class $\mathcal{P}_{F}$. Details
appear in Appendix \ref{appsec:proof-appendix}. 

To build intuition, let us now compare experiments $\mathcal{E}_{1}$
and $\mathcal{E}_{2}$ in Example \ref{exa:main-example}. $\mathcal{E}_{1}$
is always informative, while $\mathcal{E}_{2}$ is more accurate when
it is informative. They are not Blackwell ranked, but the cost under
$\mathcal{E}_{2}$ is lower than that under $\mathcal{E}_{1}$ for
any problem in $\mathcal{P}_{2}$. Consider implementing $a=1$. The
firm optimally provides no insurance to the worker due to risk neutrality.
Instead, all incentives are concentrated on a single bonus following
the $R$ signal since it has the largest likelihood ratio $\mathcal{E}(R\mid\omega_{r})/\mathcal{E}(R\mid\omega_{\ell})$.\footnote{That the principal pays the agent only after a single signal is a
feature of the binary state space. This holds for $\mathcal{E}_{2}$
even if the $N$ signal favors $\omega_{r}$ slightly. Under a larger
state space, the principal may optimally pay bonuses for several signals,
but the signals with strictly positive payments must be non-redundant.} The incentive provided $u_{r}-u_{\ell}$ is given by the difference
in signal $R$'s conditional probabilities across states, multiplied
by the payment, while the cost is given by signal $R$'s probability
in state $\omega_{r}$, multiplied by the payment. This means the
cost per unit of incentive is inversely related to the likelihood
ratio of $R$. Under $\mathcal{E}_{1}$, $R$'s likelihood ratio is
$0.7/0.3=2.33$, compared to $0.8/0.2=4$ under $\mathcal{E}_{2}$.
Therefore, to provide the same incentive, expected cost is lower under
$\mathcal{E}_{2}$. We can verify this by the feasibility of state-dependent
utilities: the optimal contract under $\mathcal{E}_{2}$ generates
$\left(u_{\ell},u_{r}\right)=(2/3,8/3)$ which lies in $\operatorname{Cone}\mathcal{E}$
but not $\operatorname{Cone}\mathcal{E}'$.\footnote{In both cases, the optimal contracts set $t_{L}=0$ so that limited
liability binds, and pick $t_{R}$ so that the marginal benefit of
effort equals its marginal cost at $a=1$, that is, $u_{r}-u_{\ell}=2$.
Under $\mathcal{E}_{1}$, $u_{r}-u_{\ell}=0.4t_{R}$, so $t_{R}=5$,
generating $\left(u_{\ell},u_{r}\right)=(1.5,3.5)$ at expected cost
$3.5$. Under $\mathcal{E}_{2}$, the principal optimally sets $t_{N}=0$,
since it merely raises $u_{\ell}$ and $u_{r}$ by the same amount
and provides no incentive. Here, $u_{r}-u_{\ell}=0.3t_{R}$, so $t_{R}=20/3$,
generating $\left(u_{\ell},u_{r}\right)=(2/3,8/3)$ at expected cost
$8/3$. The utility $\left(u_{\ell},u_{r}\right)=(2/3,8/3)$ cannot
be generated under $\mathcal{E}_{1}$ with non-negative payments,
since the only contract under $\mathcal{E}_{1}$ generating that is
$t_{L}=-5/6$ and $t_{R}=25/6$, violating limited liability.} Panel A.1 of Figure \ref{fig:conic-span-zonotope} illustrates the
conic spans and the corresponding utilities.

The conic span order generalizes this insight. The extreme rays of
$\operatorname{Cone}\mathcal{E}$ correspond to signals with extremal
likelihood ratios -- extreme points of the convex hull of the induced
likelihood ratios. The principal pays the agent only following these
extremal signals due to risk neutrality. A larger conic span means
the extremal likelihood ratios are further away from one, and thus
delivers a lower cost.\footnote{When there are more than two states, the likelihood ratio is a vector.
A larger conic span means the convex hull of the likelihood ratios
extends further outward in every direction, thus reducing the cost. } 

In Appendix \ref{appsec:non-concave}, I allow for general preferences
over money, and show that the conic span order characterizes cost
comparisons when the principal and the agent have the same risk attitude
towards money. 

Similar equivalent characterizations can be provided. In terms of
linear algebra, $\mathcal{E}\geq_{\text{Cone}}\mathcal{E}'$ requires
$\mathcal{E}'=\mathcal{E}G$ for some $G\geq0$. In terms of posterior
beliefs, $\mathcal{E}\geq_{\text{Cone}}\mathcal{E}'$ means the convex
hull of the support of the posteriors induced by $\mathcal{E}'$ lies
inside that induced by $\mathcal{E}$, formalizing the likelihood
ratio interpretation above. 

The next theorem summarizes the results on the conic span order. 

\begin{thm}
\label{thm:conic-span} For any experiments $\mathcal{E}\in E^{M}$
and $\mathcal{E}'\in E^{M'}$, the following are equivalent:
\begin{enumerate}
\item [$(1)$] $\mathcal{E}\geq_{\text{Cone}}\mathcal{E}'$,
\item [$(2)$] $\kappa(\mathcal{E};P)\leq\kappa(\mathcal{E}';P)$ for any
$P\in\mathcal{P}_{2}$,
\item [$(3)$]$\mathcal{E}'=\mathcal{E}G$ for some matrix $G\geq0$,
\item [$(4)$]$\operatorname{Co}\operatorname{Supp}\pi\left(\mathcal{E};\boldsymbol{\nu}\right)\supseteq\operatorname{Co}\operatorname{Supp}\pi\left(\mathcal{E}';\boldsymbol{\nu}\right)$
for any prior $\boldsymbol{\nu}\in\Delta\Omega$.\footnote{The convex hull of a set $X\subseteq\mathbb{R}^{N}$ is defined as
$\operatorname{Co}X=\left\{ \sum w_{i}\boldsymbol{x}_{i}:w_{i}\geq0,\sum w_{i}=1,\boldsymbol{x}_{i}\in A,\forall i\right\} $,
which is the set of all convex combinations of elements in $X$. Condition
$(4)$ is equivalent to requiring the same inclusion to hold for some
interior prior $\boldsymbol{\nu}$. In Appendix \ref{appsec:proof-appendix},
I provide the proof for both statements.}
\end{enumerate}
\end{thm}
\begin{figure}[th]
\begin{centering}
\caption{Comparisons of Experiments in Example \ref{exa:main-example} \label{fig:conic-span-zonotope}}
\par\end{centering}
\begin{centering}
\bigskip{}
\par\end{centering}
\begin{centering}
{\small{}%
\begin{minipage}[t]{0.5\columnwidth}%
\begin{center}
{\small Panel A: Comparisons of }$\mathcal{E}_{1}$ and $\mathcal{E}_{2}$
\par\end{center}
\begin{center}
\begin{minipage}[t]{0.5\columnwidth}%
\begin{center}
{\footnotesize A.1 Conic Spans}{\footnotesize\par}
\par\end{center}
\begin{center}
\includegraphics[width=1\columnwidth]{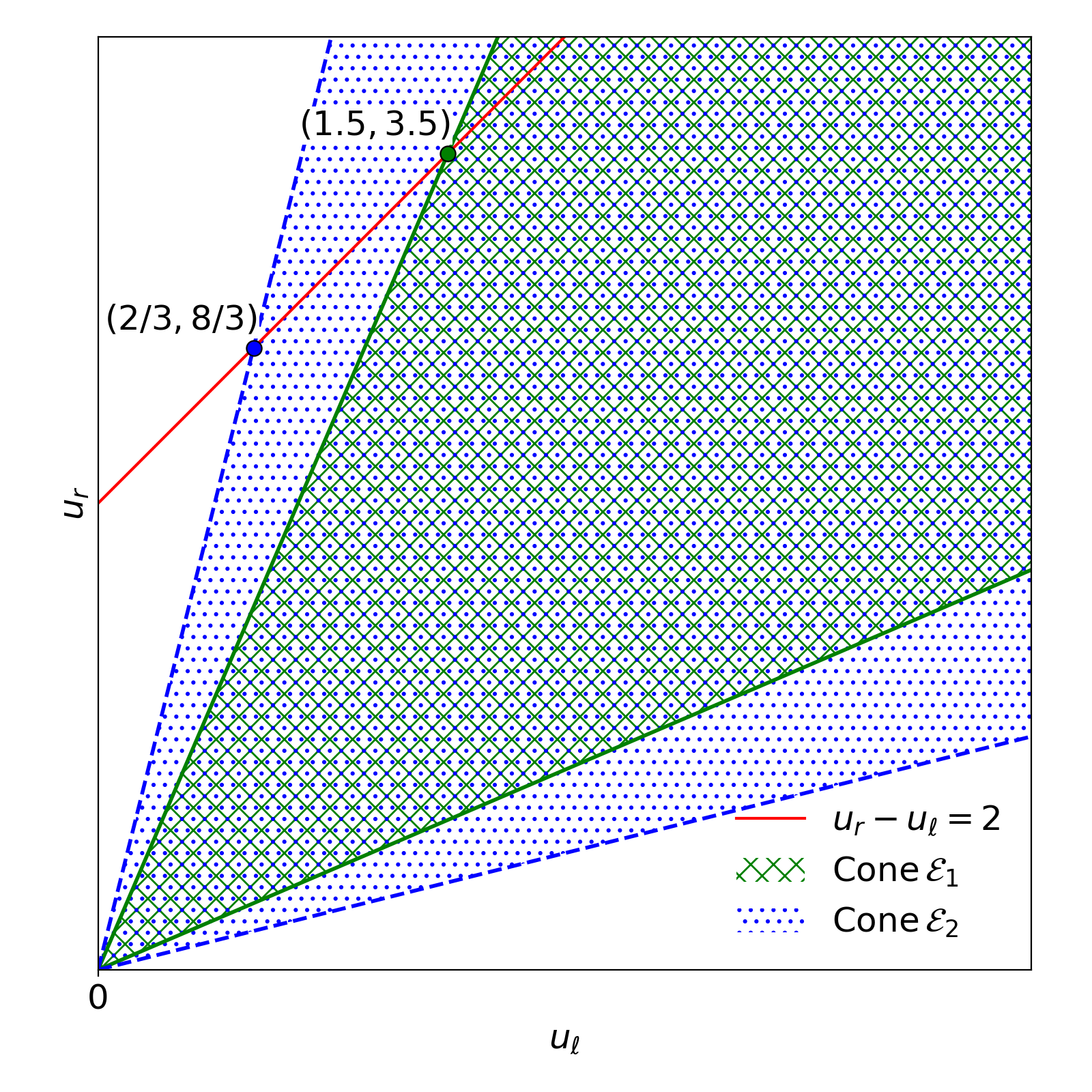}
\par\end{center}%
\end{minipage}\,%
\begin{minipage}[t]{0.5\columnwidth}%
\begin{center}
{\footnotesize A.2 Zonotopes}{\footnotesize\par}
\par\end{center}
\begin{center}
\includegraphics[width=1\columnwidth]{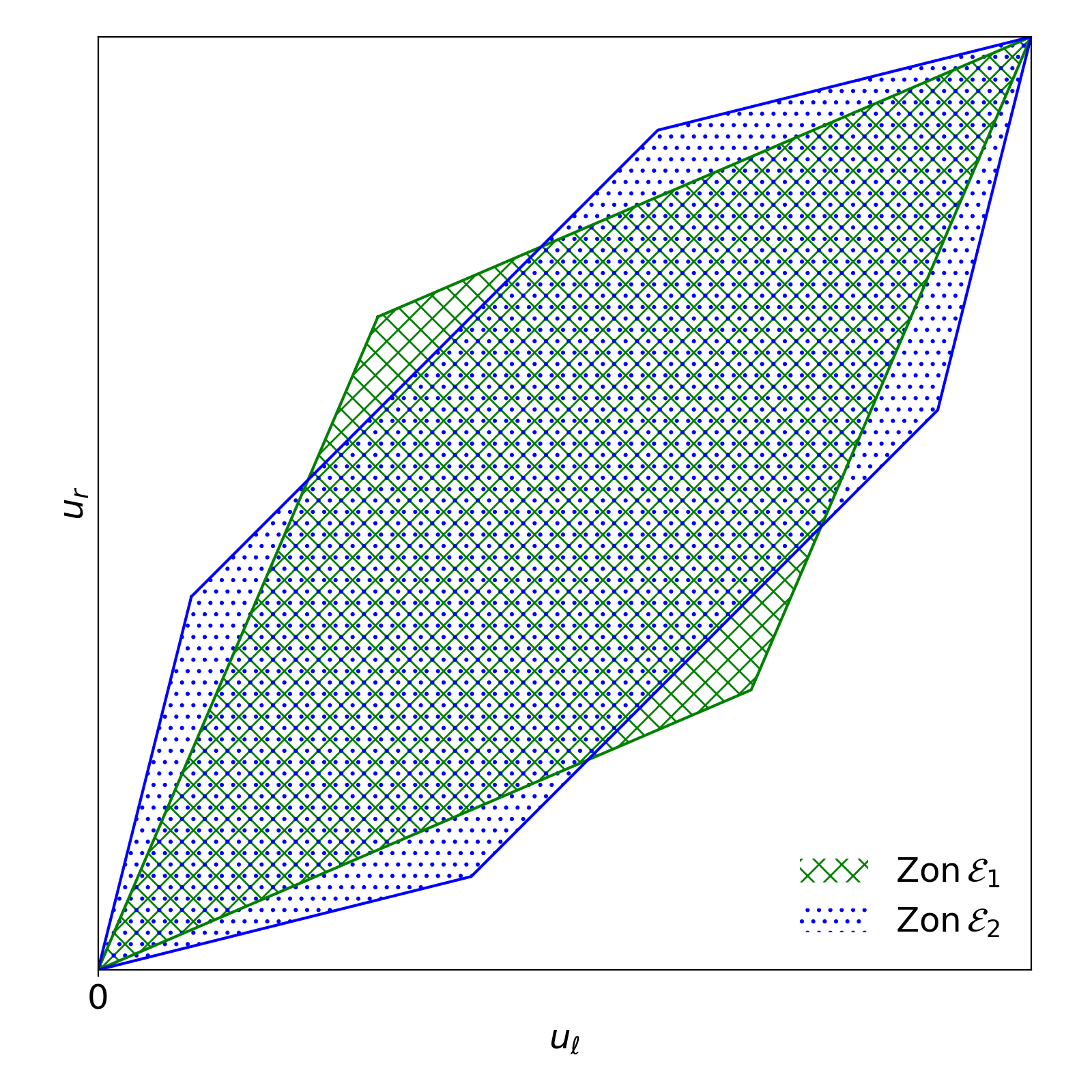}
\par\end{center}%
\end{minipage}
\par\end{center}%
\end{minipage}}\enspace{}%
\begin{minipage}[t]{0.5\columnwidth}%
\begin{center}
{\small Panel B: Comparisons of }$\mathcal{E}_{2}$ and $\mathcal{E}_{3}$
\par\end{center}
\begin{center}
\begin{minipage}[t]{0.5\columnwidth}%
\begin{center}
{\footnotesize B.1 Conic Spans}{\footnotesize\par}
\par\end{center}
\begin{center}
\includegraphics[width=1\columnwidth]{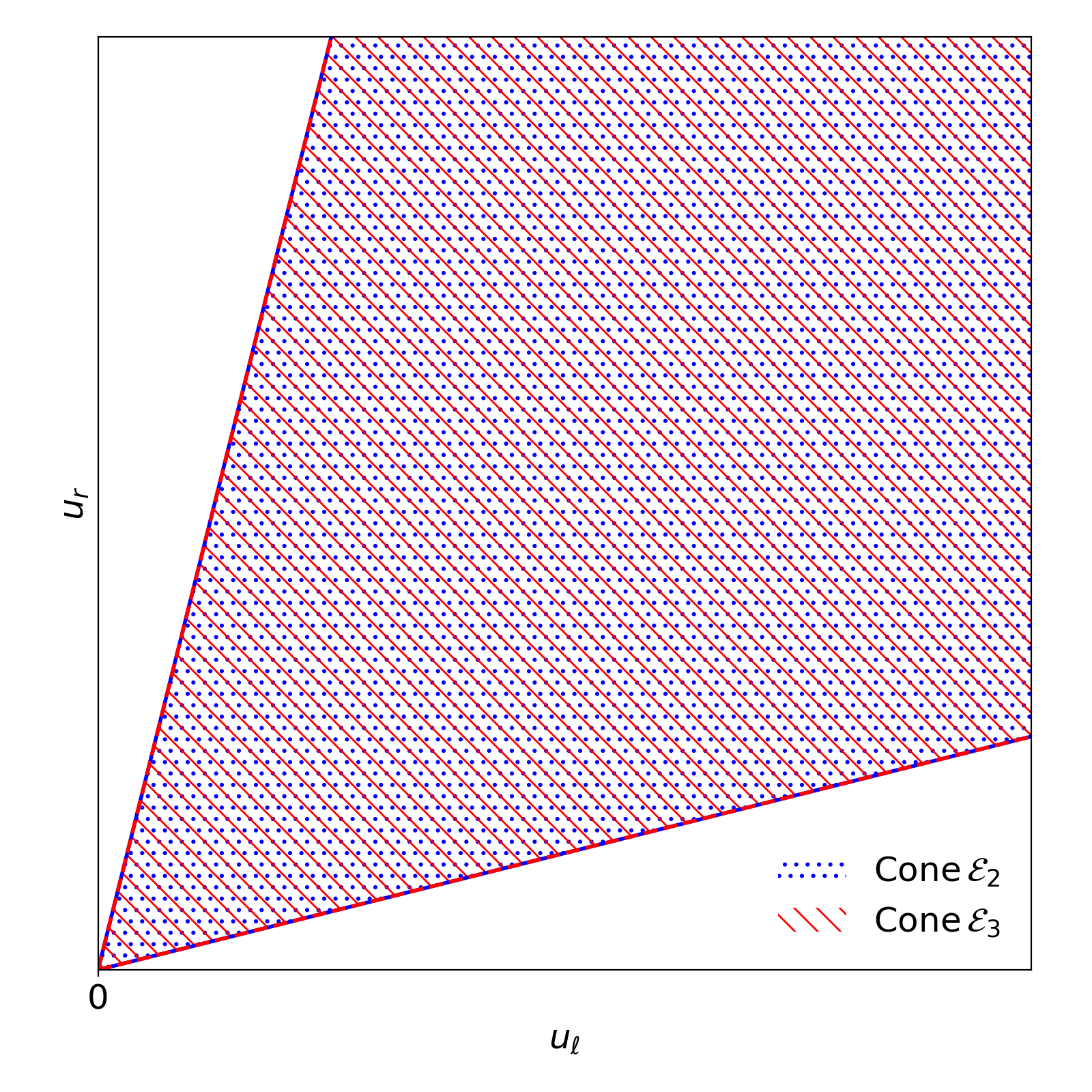}
\par\end{center}%
\end{minipage}\,%
\begin{minipage}[t]{0.5\columnwidth}%
\begin{center}
{\footnotesize B.2 Zonotopes}{\footnotesize\par}
\par\end{center}
\begin{center}
\includegraphics[width=1\columnwidth]{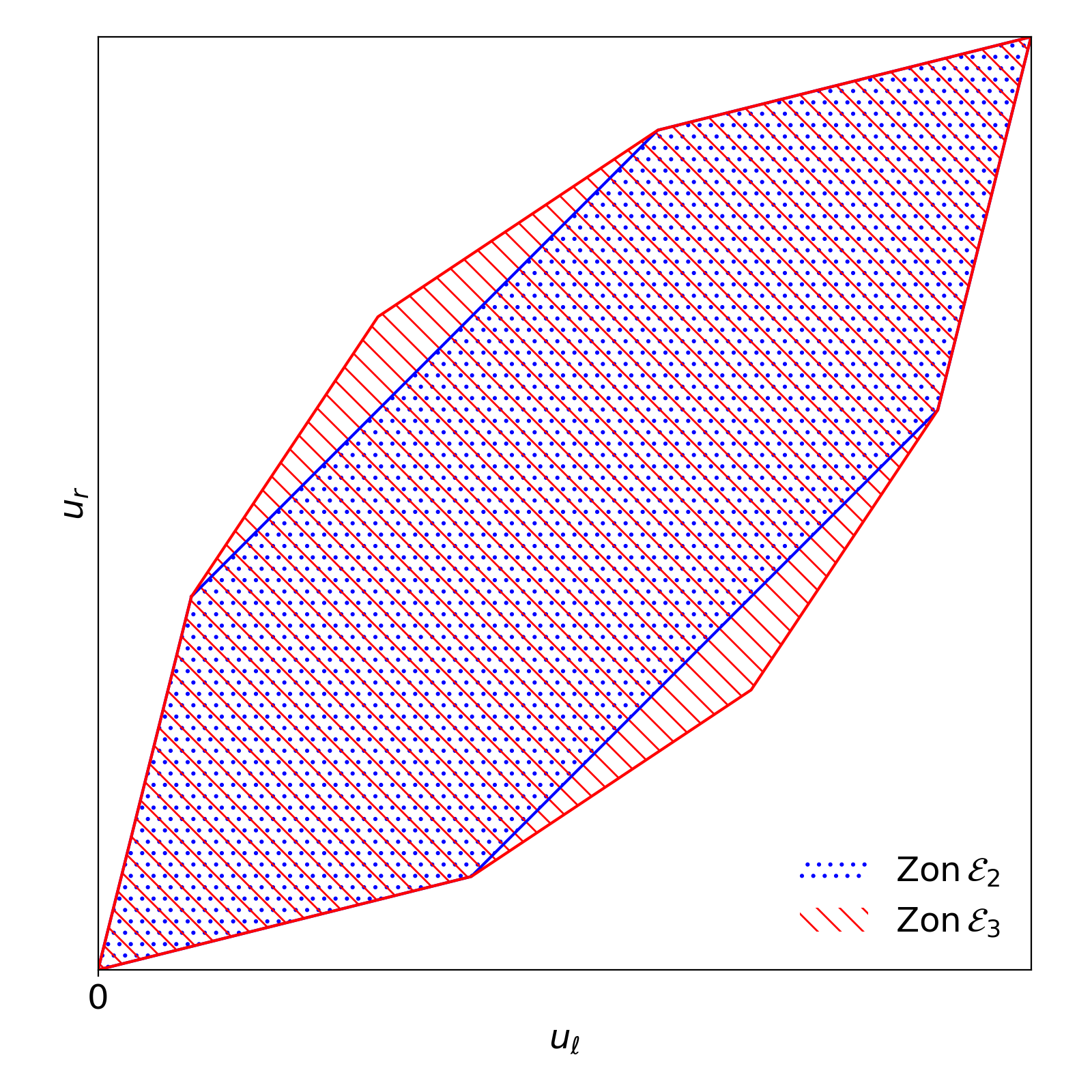}
\par\end{center}%
\end{minipage}
\par\end{center}%
\end{minipage}
\par\end{centering}
\centering{}%
\begin{minipage}[t]{0.9\columnwidth}%
\begin{singlespace}
{\scriptsize Notes: This figure plots the conic spans and the zonotopes
for experiments in Example \ref{exa:main-example}. Since the state
space is binary, the conic span and the zonotope are subsets of $\mathbb{R}^{2}$,
representing the agent's expected utility $u_{\ell}$ in states $\omega_{\ell}$
and $u_{r}$ in state $\omega_{r}$. Panel A.1 also plots the utilities
such that $u_{r}-u_{\ell}=2$, and marks the state-dependent utilities
from the optimal contract under $\mathcal{E}_{1}$ and $\mathcal{E}_{2}$
when the agent is risk neutral. }{\scriptsize\par}
\end{singlespace}

\end{minipage}
\end{figure}

\subsection{Cost under General Risk Preferences \label{subsec:zonotope}}

I now turn to the general case and study cost comparisons in class
$\mathcal{P}$, the set of all moral hazard problems. It suffices
to focus on the case with a risk averse agent, since non-concave utilities
can be reduced to concave utilities with random payments. The details
appear in Appendix \ref{appsec:non-concave}

The conic span order is no longer sufficient because it fails to account
for how risk aversion affects the cost of incentives. It only compares
the extremal signals, without referring to how often they are induced.
Under risk aversion, for example, paying a large bonus with a small
probability is costly. 

In Example \ref{exa:main-example}, suppose the worker's utility becomes
$u(t)=\sqrt{t}$. It now costs more to use $\mathcal{E}_{2}$ than
$\mathcal{E}_{1}$.\footnote{Under $\mathcal{E}_{1}$, the firm optimally sets $t_{L}=0$ and $t_{R}=25$,
yielding a cost of $17.5$. Under $\mathcal{E}_{2}$, the firm optimally
sets $t_{L}=t_{N}=0$ and $t_{R}=400/9$, which comes at a cost of
$160/9>17.5$. \label{fn:compute-E2-risk-averse}} In fact, neither experiment guarantees a lower cost despite conic
span dominance. To restore the cost comparisons, we need a finer
order to compare the sets of feasible state-dependent utilities for
any concave utility function. 

The key observation is that risk aversion effectively bounds utilities
from both above and below. Arbitrary concavity creates an upper bound,
since the utility may satiate after some point, and the participation
constraint creates a lower bound, since the utility from negative
payments can be so negative that the agent no longer participates.
This suggests comparing experiments based on feasible state-dependent
utilities from bounded payments. This is formalized by the zonotope
order. 

The zonotope of an experiment $\mathcal{E}$ is defined as $\operatorname{Zon}\mathcal{E}:=\left\{ \mathcal{E}\boldsymbol{v}:\boldsymbol{v}\in\mathbb{R}^{M},0\leq\boldsymbol{v}\leq1\right\} $.
For any utility $u\in\mathcal{U}$, $\operatorname{Zon}\mathcal{E}$
is the set of all state-dependent utilities that can be generated
with experiment $\mathcal{E}$ using any non-negative and bounded
payments, up to a multiplicative constant.\footnote{For any upper bound $B>0$, the set of state-dependent utilities from
any $0\leq\boldsymbol{t}\leq B$ is given by $\left\{ \mathcal{E}\boldsymbol{v}:\boldsymbol{v}\in\mathbb{R}^{M},0\leq\boldsymbol{v}\leq u(B)\right\} =u(B)\cdot\operatorname{Zon}\mathcal{E}$,
which is the zonotope scaled by a constant $u(B)$. } This is the full set of state-dependent utilities the principal can
choose from when she is constrained both by limited liability and
ex post budget. 

The zonotope order is defined as the inclusion of the zonotope. Formally,
say that $\mathcal{E}$ dominates $\mathcal{E}'$ in the zonotope
order, denoted $\mathcal{E}\geq_{\text{Zon}}\mathcal{E}'$, if $\operatorname{Zon}\mathcal{E}\supseteq\operatorname{Zon}\mathcal{E}'$.
It follows immediately that the zonotope order characterizes cost
comparisons in $\mathcal{P}_{3}$, risk neutral problems with limited
liability and ex post budget constraints. This is again because a
larger feasible set is always better for the principal.

The zonotope order also characterizes cost comparisons in $\mathcal{P}$,
the full class of moral hazard problems. The result is robust: it
is unchanged if we restrict attention to problems with limited liability
or ex post budget constraints or both; it is also unchanged if we
allow the principal and the agent to have general preferences over
money. Together with the previous result, this shows that, to compare
experiments in $\mathcal{P}$, it suffices to focus on problems in
$\mathcal{P}_{3}$.

I illustrate the result with Example \ref{exa:main-example}. First,
we know that $\mathcal{E}_{1}$ and $\mathcal{E}_{2}$ are zonotope
non-comparable because neither guarantees lower costs. Panel A.2 of
Figure \ref{fig:conic-span-zonotope} shows that neither zonotope
contains the other. 

Next, $\mathcal{E}_{3}$ dominates $\mathcal{E}_{2}$ in the zonotope
order, as illustrated by Panel B.2 of Figure \ref{fig:conic-span-zonotope}.\footnote{In fact, $\mathcal{E}_{3}$ dominates $\mathcal{E}_{2}$ in the Blackwell
order. When the state space is binary, the zonotope order and the
Blackwell order coincide. In Appendix \ref{appsec:example}, I provide
an example with three states where two experiments are ranked in the
zonotope order but non-comparable in the Blackwell order.} We can verify this by solving for the optimal contract to implement
$a=1$ with utility $u(t)=\sqrt{t}$. Under $\mathcal{E}_{2}$, the
cost is $160/9$ from the earlier computation. Under $\mathcal{E}_{3}$,
the firm should only pay the agent for signals $wR$ and $sR$. Let
$t_{wR}$ and $t_{sR}$ be the payments following the two signals.
The firm's problem is to minimize its cost, $0.3t_{wR}+0.4t_{sR}$,
subject to the incentive constraint that $0.1\sqrt{t_{wR}}+0.3\sqrt{t_{sR}}=2$.
This solves to $t_{sR}\approx33.7$ and $t_{wR}\approx6.7$ with a
cost of $15.5$, which is smaller than the cost under $\mathcal{E}_{2}$.
The cost saving comes from the informative weak signal of $\mathcal{E}_{3}$.
This allows the firm to insure the worker with a small payment $t_{wR}$
if the signal is weak, which is not possible under $\mathcal{E}_{2}$
because signal $N$ is unable to provide incentive.

The intuition is again a larger-choice-set argument, the recurring
theme across all three orders: a larger zonotope provides the principal
with a larger choice set of feasible state-dependent utilities, weakly
lowering cost. The proof, however, is more involved because a general
risk preference breaks the connection between the agent's utility
and the principal's cost. I sketch it after Theorem \ref{thm:zonotope}.

As for the equivalent characterizations, algebraically, the zonotope
order requires the sum of any subset of columns of $\mathcal{E}'$
to lie in $\operatorname{Zon}\mathcal{E}$, that is, $\mathcal{E}'\boldsymbol{v}\in\operatorname{Zon}\mathcal{E}$
for any binary vector $\boldsymbol{v}\in\{0,1\}^{M'}$. In matrix
form, this can be summarized as $\mathcal{E}'B=\mathcal{E}D$ for
some $0\leq D\leq1$, where $B\in\mathbb{R}^{M'\times2^{M'}}$ is
the matrix whose columns consist of all binary vectors in $\left\{ 0,1\right\} ^{M'}$.
To see this, $\mathcal{E}'B$ lists all partial sums of columns of
$\mathcal{E}'$, which are the extreme points of $\operatorname{Zon}\mathcal{E}'$.
Since the zonotope is a convex set, zonotope inclusion requires that
all these extreme points lie in $\operatorname{Zon}\mathcal{E}$.
This is a stronger condition than $\mathcal{E}'=\mathcal{E}G$ for
some $0\leq G\leq1$ because the latter only asks every column of
$\mathcal{E}'$ to lie in $\operatorname{Zon}\mathcal{E}$.

In terms of posterior beliefs, the zonotope order requires dominance
in linear convex order between the induced posterior distributions,
in contrast to the convex order required by Blackwell. Although this
interpretation is not central to my approach, I include the definition
for completeness. For cumulative distribution functions $F,G:\mathbb{R}^{N}\to[0,1]$,
$F$ dominates $G$ in the linear convex order, denoted $F\geq_{\text{lcx}}G$,
if for any convex function $\phi:\mathbb{R}\to\mathbb{R}$ and any
vector $\boldsymbol{\beta}\in\mathbb{R}^{N}$, $\mathbb{E}_{\boldsymbol{x}\sim F}\left[\phi(\boldsymbol{\beta}\cdot\boldsymbol{x})\right]\geq\mathbb{E}_{\boldsymbol{x}\sim G}\left[\phi(\boldsymbol{\beta}\cdot\boldsymbol{x})\right]$,
where $\boldsymbol{\beta}\cdot\boldsymbol{x}$ represents the dot
product.

The next theorem summarizes the results on the zonotope order.  

\begin{thm}
\label{thm:zonotope} For any experiments $\mathcal{E}\in E^{M}$
and $\mathcal{E}'\in E^{M'}$, the following are equivalent:
\begin{enumerate}
\item [$(1)$] $\mathcal{E}\geq_{\text{Zon}}\mathcal{E}'$,
\item [$(2)$] $\kappa(\mathcal{E};P)\leq\kappa(\mathcal{E}';P)$ for any
$P\in\mathcal{P}_{3}$,
\item [$(3)$] $\kappa(\mathcal{E};P)\leq\kappa(\mathcal{E}';P)$ for any
$P\in\mathcal{P}$,
\item [$(4)$] The sum of any subset of columns of $\mathcal{E}'$ lies
in $\operatorname{Zon}\mathcal{E}$, or formally, $\mathcal{E}'\boldsymbol{v}\in\operatorname{Zon}\mathcal{E}$
for any $\boldsymbol{v}\in\{0,1\}^{M'}$; in matrix form, this is
$\mathcal{E}'B=\mathcal{E}D$ for some matrix $D\in\mathbb{R}^{M\times2^{M'}}$
with $0\leq D\leq1$, where $B:=\begin{bmatrix}\boldsymbol{v}^{(1)} & \boldsymbol{v}^{(2)} & \dots & \boldsymbol{v}^{(2^{M'})}\end{bmatrix}\in\mathbb{R}^{M'\times2^{M'}}$
with $\{\boldsymbol{v}^{(1)},\dots,\boldsymbol{v}^{(2^{M'})}\}=\{0,1\}^{M'}$
being the set of all binary vectors in $\mathbb{R}^{M'}$,
\item [$(5)$] $\pi\left(\mathcal{E};\boldsymbol{\nu}\right)\geq_{\text{lcx}}\pi\left(\mathcal{E}';\boldsymbol{\nu}\right)$
for any prior $\boldsymbol{\nu}\in\Delta\Omega$,
\end{enumerate}
\end{thm}

The equivalence between $(1)$, $(2)$, and $(3)$ is the most subtle
part of Theorem \ref{thm:zonotope}. The main difficulty is that,
without risk neutrality, lower utility no longer translates into lower
expected payments directly. 

The key idea of the proof is to reintroduce a link between utilities
and payments. The relevant object is the set of feasible utilities
given a budget. Formally, fix a reference state distribution $\boldsymbol{\mu}\in\Delta\Omega$,
a utility function $u\in\mathcal{U}$, and a budget $B$. Define 
\[
\mathcal{V}_{\boldsymbol{\mu},u,B}(\mathcal{E}):=\left\{ \mathcal{E}\boldsymbol{v}:\exists\boldsymbol{t}\in\mathbb{R}^{M}\text{ such that }\boldsymbol{v}=u(\boldsymbol{t}),\boldsymbol{\mu}\cdot\mathcal{E}\boldsymbol{t}\leq B\right\} ,
\]
that is, the set of state-dependent utilities that can be generated
under experiment $\mathcal{E}$ subject to an ex ante budget $B$.
For ease of exposition, ignore the possible limited liability and
ex post budget constraints for now. This set serves as an intermediate
object bridging zonotopes and moral hazard problems. 

The equivalence is established via an intermediate step $(6)$.
\begin{lem}
\label{lem:feasible-set-budget} The equivalent conditions in Theorem
\ref{thm:zonotope} are also equivalent to
\begin{enumerate}
\item [$(6)$] $\mathcal{V}_{\boldsymbol{\mu},u,B}(\mathcal{E})\supseteq\mathcal{V}_{\boldsymbol{\mu},u,B}(\mathcal{E}')$
for any $\boldsymbol{\mu}\in\Delta\Omega$, $u\in\mathcal{U}$, and
$B>0$.
\end{enumerate}
\end{lem}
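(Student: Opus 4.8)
The plan is to prove $(1)\Leftrightarrow(6)$ and invoke Theorem \ref{thm:zonotope} for the remaining equivalences, using the linear convex order characterization $(1)\Leftrightarrow(3)$ as the bridge. Since $\operatorname{Zon}\mathcal{E}$ is the image of the cube under a linear map it is compact convex, and $\mathcal{V}_{\boldsymbol{\mu}_0,u,B}^{+}(\mathcal{E})$ is compact convex as well: for an interior prior the budget bounds every payment by $t_m\le B/(\boldsymbol{\mu}_0\cdot\boldsymbol{e}_m)$. Hence each inclusion is equivalent to domination of support functions, and I would first reduce to that setting. Writing $\boldsymbol{e}_m$ for the $m$-th column of $\mathcal{E}$, a direct cube maximization gives $h_{\operatorname{Zon}\mathcal{E}}(\boldsymbol{\beta})=\sum_m(\boldsymbol{\beta}\cdot\boldsymbol{e}_m)^+$, so $(1)$ is equivalent to $\sum_m(\boldsymbol{\beta}\cdot\boldsymbol{e}_m)^+\ge\sum_{m'}(\boldsymbol{\beta}\cdot\boldsymbol{e}'_{m'})^+$ for all $\boldsymbol{\beta}$. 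Fixing an interior $\boldsymbol{\mu}_0$ and setting $q_m:=\boldsymbol{\mu}_0\cdot\boldsymbol{e}_m>0$, $r_m:=(\boldsymbol{\beta}\cdot\boldsymbol{e}_m)/q_m$, the value $r_m$ is a linear functional of the posterior induced by $y_m$ and $(q_m)$ is its distribution, so $h_{\operatorname{Zon}\mathcal{E}}(\boldsymbol{\beta})=\mathbb{E}_{\nu_{\mathcal{E}}}[r^+]$, where $\nu_{\mathcal{E}}$ is the law of $r$ under $\left\langle \mathcal{E}\mid\boldsymbol{\mu}_0\right\rangle$. The martingale property forces $\nu_{\mathcal{E}}$ and $\nu_{\mathcal{E}'}$ to share the common mean $\sum_n\beta_n$, a fact I will use repeatedly.

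For the other set the same substitution yields $h_{\mathcal{V}_{\boldsymbol{\mu}_0,u,B}^{+}(\mathcal{E})}(\boldsymbol{\beta})=\Psi(\nu_{\mathcal{E}};u,B)$, where $\Psi(\nu;u,B):=\sup\{\mathbb{E}_\nu[r\,u(t)]:t\ge0,\ \mathbb{E}_\nu[t]\le B\}$ depends on $\mathcal{E}$ only through $\nu_{\mathcal{E}}$. This is a concave program in $t(\cdot)$ with one linear budget constraint, and $t\equiv0$ is strictly feasible, so strong duality applies: $\Psi(\nu;u,B)=\inf_{\lambda\ge0}[\lambda B+\int g_{u,\lambda}\,d\nu]$, where $g_{u,\lambda}(r)=\sup_{t\ge0}(r\,u(t)-\lambda t)$ equals $0$ for $r\le0$ and the perspective $r\,\tilde{u}^\ast(\lambda/r)$ for $r>0$, with $\tilde{u}^\ast(s)=\sup_{t\ge0}(u(t)-st)$. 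The key structural fact is that $g_{u,\lambda}$ is convex on $\mathbb{R}$: the perspective of the convex function $\tilde{u}^\ast$ is jointly convex, hence convex in $r$ for fixed $\lambda$, and it glues continuously to the zero branch at the kink.

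The easy direction $(1)\Rightarrow(6)$ is then immediate. By Theorem \ref{thm:zonotope}, $(1)\Leftrightarrow(3)$ gives $\nu_{\mathcal{E}}\ge_{\text{cx}}\nu_{\mathcal{E}'}$ for every $\boldsymbol{\beta}$. Since each $g_{u,\lambda}$ is convex, convex-order dominance yields $\int g_{u,\lambda}\,d\nu_{\mathcal{E}}\ge\int g_{u,\lambda}\,d\nu_{\mathcal{E}'}$ for every $\lambda$, so the dual objectives are ordered pointwise in $\lambda$ and taking infima preserves the inequality: $\Psi(\nu_{\mathcal{E}};u,B)\ge\Psi(\nu_{\mathcal{E}'};u,B)$. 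As this holds for all $\boldsymbol{\beta},u,B$, support-function domination delivers $(6)$.

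The hard direction $(6)\Rightarrow(1)$ is the crux, and I would exploit the freedom to choose $u$. Probing $\nu$ with a two-piece concave $u$ of slopes $\alpha>\gamma>0$ and kink at $\tau$, a direct computation of $\tilde{u}^\ast$ shows that on the bounded support of $\nu$ and for $\gamma$ small one has $g_{u,\lambda}(r)=\alpha\tau\,(r-\lambda/\alpha)^+$, whence $\Psi(\nu;u,B)=\alpha\inf_{c}[cB+\tau L_\nu(c)]$ with stop-loss transform $L_\nu(c):=\int(r-c)^+\,d\nu$. The inner infimum is, up to a reflection, the convex conjugate of the convex function $\tau L_\nu$; since conjugation is an order-reversing bijection on closed convex functions, domination of $\Psi(\cdot;u,B)$ across all $B$ is equivalent to $L_{\nu_{\mathcal{E}}}\ge L_{\nu_{\mathcal{E}'}}$ pointwise, i.e.\ upper stop-loss dominance. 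Running the same argument in direction $-\boldsymbol{\beta}$ reflects $\nu$ and recovers the lower-tail inequalities, and since the two laws have equal means, upper and lower stop-loss dominance together give $\nu_{\mathcal{E}}\ge_{\text{cx}}\nu_{\mathcal{E}'}$ for every $\boldsymbol{\beta}$, which is $(3)$ and hence $(1)$ by Theorem \ref{thm:zonotope}. I expect the main obstacle to be exactly this disentangling step: the budget multiplier $\lambda$ and the threshold $c$ are coupled through $\nu$, and it is the Legendre/conjugate inversion that decouples them and turns $\Psi$-comparisons into threshold-by-threshold stop-loss comparisons. Two technical points need care but are routine: the limit $\gamma\to0$, which lets $c$ range over all of $(0,\infty)$ and is legitimate because these two-piece utilities lie in $\mathcal{U}$ and $\Psi$ is continuous in $u$ on the bounded payment range; and boundary priors $\boldsymbol{\mu}_0$, for which I run the easy direction directly and obtain the hard direction from interior priors, which already exhaust all linear functionals $r$.
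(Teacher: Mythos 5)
Your proof is correct in substance but follows a genuinely different route from the paper's, most notably in the converse direction. For $(1)\Rightarrow(6)$, the paper also reduces both inclusions to support-function comparisons, but then works with the vector $\boldsymbol{\beta}\cdot\mathcal{E}$ directly, invoking a Karamata-type majorization inequality (Facts \ref{fact:majorization-ineq} and \ref{fact:SHT}) applied to the jointly convex function $(x,y)\mapsto\max_{t\geq0}(xu(t)-yt)$ of the pair $\left(\boldsymbol{\beta}\cdot\mathcal{E},\boldsymbol{\mu}_{0}\cdot\mathcal{E}\right)$. You instead push everything onto the scalar law $\nu_{\mathcal{E}}$ of $r=(\boldsymbol{\beta}\cdot\boldsymbol{e}_{m})/q_{m}$ and use the already-established equivalence $(1)\Leftrightarrow(3)$ plus the convexity of $g_{u,\lambda}$ in $r$; this is essentially the same separating-hyperplane-plus-convexity idea, executed through a one-dimensional convex order rather than a bivariate majorization, which is arguably cleaner. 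The real divergence is in the converse: the paper never proves $(6)\Rightarrow(1)$ directly, instead closing the cycle through the moral hazard problems via $(6)\Rightarrow(5)\Rightarrow(4)\Rightarrow(1)$, which requires constructing cost functions with exploding boundary gradients. Your two-piece-utility probe, the identification of the inner infimum as a Fenchel conjugate of the stop-loss transform, and the reflection-plus-equal-means step recovering the convex order is a self-contained geometric argument that never leaves the world of feasible sets. It is more work, but it buys a direct explanation of why the zonotope order is exactly the right order, independent of the incentive-constraint machinery.

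Two points need patching before this is complete. First, you assert that $\mathcal{V}_{\boldsymbol{\mu}_{0},u,B}^{+}(\mathcal{E})$ is convex, but only justify boundedness; convexity is the nontrivial ingredient that makes support-function domination equivalent to inclusion, and it does not follow from the set being the image of a convex budget set under the nonlinear map $\boldsymbol{t}\mapsto\mathcal{E}u(\boldsymbol{t})$. The paper proves it by combining concavity of $u$ with a free-disposal argument (lower the payments until the convex combination of utilities is hit exactly); you need some version of this. Second, the easy direction must hold for boundary priors as well, where realizations with $\boldsymbol{\mu}_{0}\cdot\boldsymbol{e}_{m}=0$ make $\mathcal{V}_{\boldsymbol{\mu}_{0},u,B}^{+}$ unbounded and your compactness and change-of-variables arguments break down; saying you would ``run the easy direction directly'' is not yet an argument. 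The remaining technical caveats you flag (the $\gamma\to0$ limit, the restriction of stop-loss thresholds to $c\geq0$ before reflecting, the extended-real values of $g_{u,\lambda}$) are handled correctly or are routine.
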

Condition $(6)$ says that given any ex ante budget, $\mathcal{E}$
generates a larger set of utilities. Equivalently, it says that the
expected cost to generate any state-dependent utility $\boldsymbol{u}$
is lower under $\mathcal{E}$ than under $\mathcal{E}'$, regardless
of the state distribution used to evaluate the expectation. To see
this, if $\boldsymbol{u}$ is feasible under $\mathcal{E}'$ at budget
$B$, it is also feasible under $\mathcal{E}$ at the same budget,
so the minimum budget needed to generate $\boldsymbol{u}$ must be
weakly lower under $\mathcal{E}$. 

The key step of the proof is $(1)\Rightarrow(6)$. This step circumvents
the main difficulty -- lower utility does not imply lower cost --
via convex analysis. Since both $\operatorname{Zon}\mathcal{E}$ and
$\mathcal{V}_{\boldsymbol{\mu},u,B}(\mathcal{E})$ are convex, their
inclusions can each be characterized by a separating hyperplane condition.
The condition characterizing zonotope inclusion turns out to imply
the one characterizing $\mathcal{V}_{\boldsymbol{\mu},u,B}$-inclusion.

The remaining implications are straightforward. For $(6)\Rightarrow(3)$:
condition $(6)$ directly implies cost dominance, since any $\boldsymbol{u}$
generated under $\mathcal{E}'$ can be generated under $\mathcal{E}$
at a lower cost while preserving incentives. $(3)\Rightarrow(2)$
follows by choosing a specific utility function that mimics ex post
budget and limited liability.\footnote{That is, take a sequence $u_{i}\in\mathcal{U}$ that converges pointwise
to $u_{0}$ where $u_{0}(t)=\min\{t,B\}$ when $t\geq0$ and $u_{0}(t)=-\infty$
when $t<0$, which satiates after receiving a payment of $B$ to mimic
the ex post budget constraint, and drops to $-\infty$ for negative
payment to mimic the limited liability constraint. } $(2)\Rightarrow(1)$ again uses a constructive proof with a flexible
problem in $\mathcal{P}_{F}$. The details are provided in Appendix
\ref{appsec:proof-appendix}. 

\subsection{Relations between the Orders \label{subsec:relations}}

I now discuss the relations between the orders. First of all, they
are nested.
\begin{prop}
$\geq_{\text{Col}}\:\supseteq\:\geq_{\text{Cone}}\:\supseteq\:\geq_{\text{Zon}}\:\supseteq\:\geq_{\text{B}}$. 
\end{prop}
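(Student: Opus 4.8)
The plan is to read each order as a set of ordered pairs of experiments, so that the proposition becomes three set containments, the first two of which are strict. I would dispatch the containments directly from the characterizations already in hand (Theorems \ref{thm:column-space}--\ref{thm:zonotope} and the Blackwell row of Table \ref{tab:comparisons}), and then exhibit one witness pair for each strict step. The containments are essentially bookkeeping; the only genuine content is the two witnesses.

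For the containments I would argue as follows. For $\geq_{\text{Col}}\supseteq\geq_{\text{Cone}}$, use matrix factorization: by Theorem \ref{thm:conic-span}(2), $\mathcal{E}\geq_{\text{Cone}}\mathcal{E}'$ gives $\mathcal{E}'=\mathcal{E}G$ with $G\geq0$, and such a $G$ is in particular an unconstrained $G$, which is exactly Theorem \ref{thm:column-space}(2). For $\geq_{\text{Cone}}\supseteq\geq_{\text{Zon}}$, use the scaling identity $\operatorname{Cone}\mathcal{E}=\bigcup_{\lambda\geq0}\lambda\operatorname{Zon}\mathcal{E}$, valid because $\lambda\operatorname{Zon}\mathcal{E}=\{\mathcal{E}\boldsymbol{v}:0\leq\boldsymbol{v}\leq\lambda\}$: if $\operatorname{Zon}\mathcal{E}'\subseteq\operatorname{Zon}\mathcal{E}$ then $\lambda\operatorname{Zon}\mathcal{E}'\subseteq\lambda\operatorname{Zon}\mathcal{E}$ for every $\lambda\geq0$, and taking unions gives $\operatorname{Cone}\mathcal{E}'\subseteq\operatorname{Cone}\mathcal{E}$. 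For $\geq_{\text{Zon}}\supseteq\geq_{\text{B}}$, use the garbling form of Blackwell: if $\mathcal{E}'=\mathcal{E}G$ with $G\geq0$ and $G\boldsymbol{1}=\boldsymbol{1}$, then for any $\boldsymbol{v}$ with $0\leq\boldsymbol{v}\leq\boldsymbol{1}$ the vector $\boldsymbol{w}:=G\boldsymbol{v}$ satisfies $0\leq\boldsymbol{w}\leq G\boldsymbol{1}=\boldsymbol{1}$, so $\mathcal{E}'\boldsymbol{v}=\mathcal{E}\boldsymbol{w}\in\operatorname{Zon}\mathcal{E}$, whence $\operatorname{Zon}\mathcal{E}'\subseteq\operatorname{Zon}\mathcal{E}$.

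For the strict steps I would produce explicit witnesses. For $\geq_{\text{Col}}\supsetneq\geq_{\text{Cone}}$, take $\mathcal{E}=\begin{bmatrix}1/2&1/2\\1/3&2/3\end{bmatrix}$ and $\mathcal{E}'=\begin{bmatrix}1&0\\0&1\end{bmatrix}$. Both have full column rank, so $\operatorname{Col}\mathcal{E}=\operatorname{Col}\mathcal{E}'=\mathbb{R}^2$ and $\mathcal{E}\geq_{\text{Col}}\mathcal{E}'$; but $\operatorname{Cone}\mathcal{E}$ is the planar wedge spanned by the columns $(1/2,1/3)^\top$ and $(1/2,2/3)^\top$, which excludes the first column $(1,0)^\top$ of $\mathcal{E}'$, so $\mathcal{E}\not\geq_{\text{Cone}}\mathcal{E}'$. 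For $\geq_{\text{Cone}}\supsetneq\geq_{\text{Zon}}$, I would invoke Example \ref{exa:strictness}: its pair satisfies $\mathcal{E}'=\mathcal{E}G$ with $0\leq G\leq1$, hence $\mathcal{E}\geq_{\text{Cone}}\mathcal{E}'$ by Theorem \ref{thm:conic-span}(2), yet fails $\geq_{\text{Zon}}$ by construction. A self-contained binary witness also works: with prior $(1/2,1/2)$, let $\mathcal{E}$ induce posteriors $\{0.1,0.5,0.9\}$ and $\mathcal{E}'$ induce the two-point distribution on $\{0.1,0.9\}$; the two experiments share the same convex hull of induced posteriors, so $\operatorname{Cone}\mathcal{E}=\operatorname{Cone}\mathcal{E}'$, but the extreme column $(0.9,0.1)^\top$ of $\mathcal{E}'$ lies outside $\operatorname{Zon}\mathcal{E}$. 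I do not claim strictness of the final step: the proposition only asserts $\geq_{\text{Zon}}\supseteq\geq_{\text{B}}$, and the two indeed coincide when $N=2$.

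The step I expect to require the most care is the cone-but-not-zonotope witness, where I must simultaneously certify that the conic spans agree (so $\geq_{\text{Cone}}$ holds) and that some extreme point of $\operatorname{Zon}\mathcal{E}'$, i.e.\ a partial sum of columns of $\mathcal{E}'$, escapes $\operatorname{Zon}\mathcal{E}$. The row-stochasticity constraint $\mathcal{E}\boldsymbol{1}=\boldsymbol{1}$ makes this delicate: it forces every zonotope into $[0,1]^N$ and rules out the naive idea of an experiment with a uniformly larger zonotope, so the failure of $\geq_{\text{Zon}}$ must come from the \emph{shape} (partial-sum structure) rather than the scale of the zonotope. Verifying the escape amounts to checking that the coordinate system $\mathcal{E}\boldsymbol{w}=(0.9,0.1)^\top$ has no solution with $0\leq\boldsymbol{w}\leq\boldsymbol{1}$, which is the one genuine computation in the argument.
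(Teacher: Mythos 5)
Your proposal is correct and follows essentially the same route as the paper: the containments are read off from the factorization/set characterizations, and strictness is established by explicit counterexample pairs (the paper uses $\mathcal{E}_{1},\mathcal{E}_{2},\mathcal{E}_{3}$ from Example \ref{exa:strictness}, verified via Figure \ref{fig:conic-span-zonotope}, while you substitute equivalent witnesses checked algebraically). The only cosmetic difference is your scaling identity $\operatorname{Cone}\mathcal{E}=\bigcup_{\lambda\geq0}\lambda\operatorname{Zon}\mathcal{E}$ for the middle containment, which is a clean alternative to the factorization bookkeeping but not a genuinely different argument.
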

The inclusion is the easiest to see using the matrix factorization
condition. The inclusions can be strict, as shown by Example \ref{exa:main-example}.
All three experiments have the same column space as they all have
full row rank. But Panel A of Figure \ref{fig:conic-span-zonotope}
shows that $\mathcal{E}_{1}\not\geq_{\text{Cone}}\mathcal{E}_{2}$
and $\mathcal{E}_{2}\geq_{\text{Cone}}\mathcal{E}_{1}$. From the
figure, we also see that $\mathcal{E}_{1}$ and $\mathcal{E}_{2}$
are not ranked in the zonotope order. On the other hand, $\mathcal{E}_{2}$
and $\mathcal{E}_{3}$ have the same conic span, but $\mathcal{E}_{2}\not\geq_{\text{Zon}}\mathcal{E}_{3}$,
as depicted in Panel B of Figure \ref{fig:conic-span-zonotope}. 

The zonotope order coincides with Blackwell when the state space is
binary. This is a known result due to \citet{blackwell1953equivalent}
and \citet{bertschinger2014blackwell}. When the state space is larger,
the inclusion between the zonotope and the Blackwell orders is strict.
An example is first given by \citet{bertschinger2014blackwell} and
reproduced in Appendix \ref{appsec:example}.
\begin{prop}
\label{prop:binary-states} When $N=2$, $\geq_{\text{Zon}}\:=\:\geq_{\text{B}}$.
When $N>2$, $\geq_{\text{Zon}}\:\supsetneq\:\geq_{\text{B}}$.
\end{prop}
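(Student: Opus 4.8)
The plan is to treat the two claims separately, in both cases leaning on the posterior-belief characterizations already available: the zonotope order is the linear convex order on induced posteriors (Theorem~\ref{thm:zonotope}, part~(3)), while the Blackwell order is the convex order on induced posteriors (last row of Table~\ref{tab:comparisons}, the classical equivalence going back to \citet{blackwell1953equivalent}). The preceding nesting proposition already gives $\ge_{\text{Zon}}\,\supseteq\,\ge_{\text{B}}$ for every $N$, so in the binary case only the reverse inclusion is at issue, and in the case $N>2$ only the existence of a separating example.

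For $N=2$ I would establish $\ge_{\text{Zon}}\,\subseteq\,\ge_{\text{B}}$ by reducing both orders to the one-dimensional convex order. With two states the posteriors $\langle\mathcal{E}\mid\boldsymbol{\mu}_0\rangle$ and $\langle\mathcal{E}'\mid\boldsymbol{\mu}_0\rangle$ are supported on the segment $\Delta\Omega=\{(p,1-p):p\in[0,1]\}\subset\mathbb{R}^2$. On this segment every linear functional $\boldsymbol{\beta}\cdot\boldsymbol{x}$ is an affine function of $p$, so $\phi(\boldsymbol{\beta}\cdot\boldsymbol{x})$ ranges, as $\phi$ runs over convex scalar functions and $\boldsymbol{\beta}$ over $\mathbb{R}^2$, over exactly the convex functions of $p$; conversely every convex $\phi:\mathbb{R}^2\to\mathbb{R}$ restricted to the segment is again a convex function of $p$, attained already in the form $\phi(x_1,x_2)=g(x_1)$. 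Hence the test functions defining $\ge_{\text{lcx}}$ and those defining $\ge_{\text{cx}}$ induce the same integrals against segment-supported measures, so $\langle\mathcal{E}\mid\boldsymbol{\mu}_0\rangle\ge_{\text{lcx}}\langle\mathcal{E}'\mid\boldsymbol{\mu}_0\rangle$ and $\langle\mathcal{E}\mid\boldsymbol{\mu}_0\rangle\ge_{\text{cx}}\langle\mathcal{E}'\mid\boldsymbol{\mu}_0\rangle$ coincide, giving $\ge_{\text{Zon}}\,=\,\ge_{\text{B}}$. Equivalently, and more in the spirit of the paper, a planar zonotope $\operatorname{Zon}\mathcal{E}$ is a centrally symmetric polygon through $\boldsymbol{0}$ and $\boldsymbol{1}$ with center $\boldsymbol{1}/2$ whose boundary is the Lorenz-type curve obtained by ordering the columns by their likelihood ratio, and inclusion of these polygons is precisely Lorenz dominance of the likelihood-ratio distributions, i.e.\ Blackwell's binary criterion.

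For $N>2$ I would exhibit a pair with $\mathcal{E}\ge_{\text{Zon}}\mathcal{E}'$ but $\mathcal{E}\not\ge_{\text{B}}\mathcal{E}'$, which by the nesting result is all that remains. The conceptual source of the gap is that $\ge_{\text{lcx}}$ only constrains one-dimensional projections: passing the test against every $\phi(\boldsymbol{\beta}\cdot\boldsymbol{x})$ forces $\mathbb{E}_F\psi\ge\mathbb{E}_G\psi$ only for $\psi$ in the closed convex cone generated by convex ridge functions, whereas $\ge_{\text{cx}}$ demands it for all convex $\psi$. Once $\Delta\Omega$ is at least two-dimensional, i.e.\ $N\ge3$, this cone is a proper subset of all convex functions, so the linear convex order is strictly weaker than the convex order and lcx no longer implies cx; a convex function detecting the gap can be taken of the form $\psi(\boldsymbol{x})=\max\{0,\boldsymbol{\beta}_1\cdot\boldsymbol{x},\boldsymbol{\beta}_2\cdot\boldsymbol{x}\}$, whose curvature is not carried by any single direction. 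Concretely I would fix a pair of finitely supported posterior distributions on the two-dimensional simplex sharing the prior as barycenter, chosen so that all one-dimensional projections are in the convex order while such a $\psi$ strictly separates them, and then read off the corresponding experiments $\mathcal{E},\mathcal{E}'$.

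The verification then splits into two finite checks, and this is where the real work lies. Zonotope dominance is confirmed through condition~(2) of Theorem~\ref{thm:zonotope}, by producing an explicit $H$ with $0\le H\le1$ and $\mathcal{E}'B_N=\mathcal{E}H$ (equivalently, by checking that every partial column sum of $\mathcal{E}'$ lies in $\operatorname{Zon}\mathcal{E}$). Failure of the Blackwell order is a linear-feasibility question—the system $\mathcal{E}'=\mathcal{E}G$, $G\ge0$, $G\boldsymbol{1}=\boldsymbol{1}$ has no solution—which I would certify by the separating function $\psi$ above, read as a Farkas/LP-duality certificate, equivalently as a violation of $\langle\mathcal{E}\mid\boldsymbol{\mu}_0\rangle\ge_{\text{cx}}\langle\mathcal{E}'\mid\boldsymbol{\mu}_0\rangle$. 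The main obstacle is pinning down the numbers so that the projection conditions hold in every direction at once while the convex-order comparison fails; here I expect to lean on an example of exactly this type already in the literature, since \citet{bertschinger2014blackwell} establish precisely that the zonotope and Blackwell orders agree only for binary states (see also \citealp{chen2025experiments}).
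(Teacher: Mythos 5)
Your proposal is essentially correct, but it is worth noting that the paper does not actually prove this proposition at all: it attributes the binary-state coincidence to \citet{bertschinger2014blackwell} and offers no argument or example for either half. Your $N=2$ argument is therefore \emph{more} self-contained than the paper's treatment, and it is sound: on the segment $\{(p,1-p)\}$ every linear functional is affine in $p$, so the test class $\{\phi(\boldsymbol{\beta}\cdot\boldsymbol{x})\}$ for the linear convex order and the restrictions of convex $\psi:\mathbb{R}^2\to\mathbb{R}$ both coincide with the convex functions of $p$, collapsing $\geq_{\text{lcx}}$ and $\geq_{\text{cx}}$ and hence $\geq_{\text{Zon}}$ and $\geq_{\text{B}}$ via Theorem \ref{thm:zonotope}(3) and the classical Blackwell--Sherman--Stein characterization. (One should quantify over priors consistently, but both characterizations in the paper do so, and your equivalent Lorenz-curve picture of the planar zonotope is also a valid route.) For $N>2$ you correctly identify the conceptual source of strictness --- the cone generated by convex ridge functions $\phi(\boldsymbol{\beta}\cdot\boldsymbol{x})$ is a proper subcone of the convex functions once the simplex is at least two-dimensional --- but you stop short of producing the pair $(\mathcal{E},\mathcal{E}')$ with $\mathcal{E}\geq_{\text{Zon}}\mathcal{E}'$ and $\mathcal{E}\not\geq_{\text{B}}\mathcal{E}'$, which is the entire mathematical content of that half of the claim; you defer it to \citet{bertschinger2014blackwell}, which is exactly what the paper does, so you are no worse off, but a complete proof would need to import or reconstruct their explicit three-state example (certified on one side by an $H$ with $0\leq H\leq1$ and $\mathcal{E}'B_N=\mathcal{E}H$, and on the other by a Farkas certificate such as your $\psi(\boldsymbol{x})=\max\{0,\boldsymbol{\beta}_1\cdot\boldsymbol{x},\boldsymbol{\beta}_2\cdot\boldsymbol{x}\}$) and then pad it with duplicate rows to cover every $N>2$.
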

The full rank property may also make the orders coincide. The zonotope
order coincides with the Blackwell order under full rank, and the
conic span order coincides with the Blackwell order under full column
rank. Appendix \ref{appsec:example} provides an example to show that
the full rank condition is necessary. 
\begin{prop}
\label{prop:full-rank} When $\mathcal{E}$ has full row rank or full
column rank, $\mathcal{E}\geq_{\text{Zon}}\mathcal{E}'$ if and only
if $\mathcal{E}\geq_{\text{B}}\mathcal{E}'$. When $\mathcal{E}$
has full column rank, $\mathcal{E}\geq_{\text{Cone}}\mathcal{E}'$
if and only if $\mathcal{E}\geq_{\text{B}}\mathcal{E}'$.
\end{prop}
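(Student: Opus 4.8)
The plan is to prove each biconditional through its nontrivial (downward) direction, since the upward directions come for free: by the nesting proposition, $\ge_{\mathrm B}$ implies $\ge_{\mathrm{Zon}}$ implies $\ge_{\mathrm{Cone}}$, so in both statements ``$\ge_{\mathrm B}$'' trivially entails the weaker order. It remains to show that under the stated rank hypotheses the weaker order implies Blackwell dominance, i.e.\ that one can exhibit a genuine garbling $G$ (with $G\ge 0$ and $G\boldsymbol 1=\boldsymbol 1$) factoring $\mathcal E'=\mathcal E G$.

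I would dispatch the second statement first, as it is clean and feeds into the first. Suppose $\mathcal E$ has full column rank and $\mathcal E\ge_{\mathrm{Cone}}\mathcal E'$. By Theorem~\ref{thm:conic-span} there is $G\ge 0$ with $\mathcal E'=\mathcal E G$. Full column rank means $\mathcal E\boldsymbol x=\boldsymbol 0\Rightarrow\boldsymbol x=\boldsymbol 0$, so this $G$ is the \emph{unique} solution of $\mathcal E'=\mathcal E G$. Both matrices are row stochastic, so $\mathcal E G\boldsymbol 1=\mathcal E'\boldsymbol 1=\boldsymbol 1=\mathcal E\boldsymbol 1$, whence $\mathcal E(G\boldsymbol 1-\boldsymbol 1)=\boldsymbol 0$ and, again by full column rank, $G\boldsymbol 1=\boldsymbol 1$. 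Thus the unique nonnegative factor is automatically row stochastic, hence a garbling, giving $\mathcal E\ge_{\mathrm B}\mathcal E'$.

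For the first statement I would split on the shape of $\mathcal E$. If $M\le N$, full rank is full column rank, and since $\mathcal E\ge_{\mathrm{Zon}}\mathcal E'$ implies $\mathcal E\ge_{\mathrm{Cone}}\mathcal E'$ by nesting, the second statement already yields $\mathcal E\ge_{\mathrm B}\mathcal E'$. The substantive case is $M>N$, where $\mathcal E$ has full \emph{row} rank and $\operatorname{Zon}\mathcal E$ is full dimensional in $\mathbb R^N$; here the factor $G$ with $\mathcal E'=\mathcal E G$, $G\ge0$, is no longer unique (the kernel of $\mathcal E$ has dimension $M-N>0$), so the uniqueness trick fails and one must use the full strength of the zonotope condition. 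My plan is to cast the existence of a garbling as a linear feasibility problem --- find $G$ with $\mathcal E G=\mathcal E'$, $G\boldsymbol 1=\boldsymbol 1$, $G\ge 0$ --- and analyze it through its support-function form. Zonotope inclusion is equivalent to $\sum_{j}(\boldsymbol w\cdot\mathcal E'_{\cdot,j})^{+}\le\sum_{m}(\boldsymbol w\cdot\mathcal E_{\cdot,m})^{+}$ for every $\boldsymbol w\in\mathbb R^N$, the inequalities generated by convex functions of a \emph{single} linear functional (equivalently the $\ge_{\mathrm{lcx}}$ characterization of Theorem~\ref{thm:zonotope}). Blackwell dominance, by contrast, is the convex order $\langle\mathcal E\mid\boldsymbol\mu_0\rangle\ge_{\mathrm{cx}}\langle\mathcal E'\mid\boldsymbol\mu_0\rangle$, tested against arbitrary convex functions, such as maxima of several affine maps in different directions. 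The crux is to upgrade the single-direction inequalities to these multi-direction ones, and I would show that full row rank is precisely what licenses this: because the posteriors of $\mathcal E$ have full affine hull in $\Delta\Omega$, each representation $\mathcal E'\boldsymbol s=\mathcal E\boldsymbol h_{\boldsymbol s}$ with $\boldsymbol h_{\boldsymbol s}\in[0,1]^M$ supplied by the zonotope condition over binary selections $\boldsymbol s$ (as in $\mathcal E'B_N=\mathcal E H$) can be chosen consistently across selections, after which I would glue the single-column representations $\boldsymbol h_{\boldsymbol e_j}$ into the columns of $G$ and use the kernel freedom to enforce the row-sum constraint $\sum_j\boldsymbol h_{\boldsymbol e_j}=\boldsymbol 1$.

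I expect the gluing/consistency step to be the main obstacle. Pointwise, every vertex of $\operatorname{Zon}\mathcal E'$ lies in $\operatorname{Zon}\mathcal E$, but assembling these individual membership certificates into one $G$ that is simultaneously nonnegative and row stochastic is a genuine compatibility problem, and it is here --- not in any routine estimate --- that full dimensionality of $\operatorname{Zon}\mathcal E$ must be invoked. Equivalently, in distributional terms, the hard part is proving that $\ge_{\mathrm{lcx}}$ forces $\ge_{\mathrm{cx}}$ when the dominating posterior distribution has full-dimensional support; this is false without a rank condition and is exactly what drives the strictness $\ge_{\mathrm{Zon}}\supsetneq\ge_{\mathrm B}$ for $N>2$.
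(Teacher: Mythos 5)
Your proof of the second biconditional is correct and coincides with the paper's own argument: full column rank makes the nonnegative factor $G$ in $\mathcal{E}'=\mathcal{E}G$ unique, and row stochasticity of both experiments forces $\mathcal{E}\left(G\boldsymbol{1}-\boldsymbol{1}\right)=\boldsymbol{0}$, hence $G\boldsymbol{1}=\boldsymbol{1}$, so $G$ is a garbling. Your reduction of the first biconditional to the second in the case $M\leq N$ (where full rank is full column rank and $\geq_{\text{Zon}}$ implies $\geq_{\text{Cone}}$ by nesting) is also fine, and the upward directions are indeed free from the nesting proposition.

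The gap is the case $M>N$ (full row rank), which is the substantive content of the first biconditional, and there you offer a plan rather than a proof. You propose to glue the certificates $\boldsymbol{h}_{\boldsymbol{s}}\in[0,1]^{M}$ witnessing $\mathcal{E}'\boldsymbol{s}\in\operatorname{Zon}\mathcal{E}$ into a single $G\geq0$ with $\mathcal{E}G=\mathcal{E}'$ and $G\boldsymbol{1}=\boldsymbol{1}$, and you yourself flag the ``gluing/consistency step'' as the main obstacle without resolving it. That step is not a routine compatibility check --- it is the entire theorem. Nothing in the sketch explains how full row rank converts the single-direction inequalities $\sum_{j}\left(\boldsymbol{w}\cdot\mathcal{E}'_{\cdot,j}\right)^{+}\leq\sum_{m}\left(\boldsymbol{w}\cdot\mathcal{E}_{\cdot,m}\right)^{+}$ into feasibility of the linear system defining a garbling; a Farkas or LP-duality argument would have to be carried out, and this is precisely where such claims can fail (the paper's own example, $\mathcal{E}_{2}\geq_{\text{Cone}}\mathcal{E}_{3}$ with $\mathcal{E}_{2}$ of full rank yet $\mathcal{E}_{2}\not\geq_{\text{B}}\mathcal{E}_{3}$, shows that inclusion of these polyhedral sets does not automatically yield a garbling, so the full strength of the zonotope inclusion must genuinely enter). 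For what it is worth, the paper does not prove this step from first principles either: it imports it as Theorem 2 of \citet{wu2023geometric}. So your proposal correctly isolates where the difficulty lives, but it does not close it, and as written the first biconditional remains unproven in its only nontrivial case.
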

Lastly, I briefly discuss why the Blackwell order is not the correct
order to compare experiments in moral hazard problems. One way to
define the Blackwell order is to use the inclusion of the set of feasible
joint distribution of state-action pairs. Specifically, for any action
space $A=\{a_{k}\}_{k=1}^{K+1}$, the feasible set of joint distributions
over $\Omega\times A$ induced by $\mathcal{E}$ can be described
by 
\[
\mathcal{S}_{K}(\mathcal{E}):=\left\{ \mathcal{E}\Pi:\Pi\in\mathbb{R}^{M\times K},\Pi\geq0,\Pi\boldsymbol{1}\leq\boldsymbol{1}\right\} ,
\]
where $\Pi_{m,k}$ is the probability of taking action $a_{k}$ after
realization $y_{m}$, and $1-\sum_{k=1}^{K}\Pi_{m,k}$ is the probability
of taking action $a_{K+1}$ after realization $y_{m}$. $\mathcal{E}\geq_{\text{B}}\mathcal{E}'$
if and only if $\bigcup_{K=1}^{\infty}\mathcal{S}_{K}(\mathcal{E})\supseteq\bigcup_{K=1}^{\infty}\mathcal{S}_{K}(\mathcal{E}')$.
This works well for general decision problems. 

In moral hazard problems, however, the principal only chooses a one-dimensional
object, the payment, following each realization. As a result, only
$\mathcal{S}_{1}(\mathcal{E})$ and its subsets matter, and $\mathcal{S}_{K}(\mathcal{E})$
for $K\geq2$ are irrelevant. In fact, $\mathcal{S}_{1}(\mathcal{E})=\operatorname{Zon}\mathcal{E}$,
and the zonotope order also characterizes the value of information
comparisons for all decision problems with two actions.

\section{Concluding Remarks \label{sec:extension-conclusion}}

This paper proposes a unified framework based on state-dependent utilities
to analyze both the classic and the flexible moral hazard problems.
I identify three nested geometric orders on information based on the
feasible sets of state-dependent utilities. The column space order
characterizes the comparisons of implementability. The conic span
order characterizes the comparisons of cost under risk neutrality
and limited liability. The zonotope order characterizes the comparisons
of cost under a general preference for money. 

I end by making a few remarks and discussing potential future work.
The first remark is on the finiteness of experiments adopted in the
main text. The state-dependent utility framework can be easily extended
to infinite experiments with a continuum of realizations. In this
case, an experiment $\mathcal{E}$ specifies for each state $\omega\in\Omega$
a distribution $\mathcal{E}(\cdot\mid\omega)$ over some realization
space $Y\subseteq\mathbb{R}$ with density $f_{\mathcal{E}}(\cdot\mid\omega)$.
A contract $\boldsymbol{t}:Y\to\mathbb{R}$ specifies the payment
to the agent following each realization. We can analogously define
the column space, conic span, and zonotope orders by the inclusion
of the feasible state-dependent utilities. One has to be careful about
the existence of a solution because of the \citet{mirrlees1999theory}
example where the principal can get arbitrarily close to the first
best but an optimal solution does not exist. The idea is to impose
an arbitrarily large punishment with an arbitrarily small probability.
For example, if a very positive (negative) $y\in Y$ is unlikely to
happen but is very informative about the agent's effort, the principal
can give the agent a huge bonus (or punishment) following $y$, approaching
the first best. We have to assume the likelihood ratios $f_{\mathcal{E}}(\cdot\mid\omega_{n})/f_{\mathcal{E}}(\cdot\mid\omega_{n'})$
are bounded for any pair of states $(\omega_{n},\omega_{n'})$ to
avoid this.

The second remark is on strict orders. Strict orders can also be defined
in the usual way. That is, say that $\mathcal{E}$ dominates $\mathcal{E}'$
strictly in the column space (or, conic span, zonotope) order, denoted
$\mathcal{E}>_{\text{Col}}\mathcal{E}'$ (or $\mathcal{E}>_{\text{Cone}}\mathcal{E}'$
, $\mathcal{E}>_{\text{Zon}}\mathcal{E}'$ ), if $\mathcal{E}$ dominates
$\mathcal{E}'$ but $\mathcal{E}'$ does not dominate $\mathcal{E}$
in the weak order. One can show that the dominance in the strict order
is characterized by the weak comparisons of costs as in Theorems \ref{thm:column-space}-\ref{thm:zonotope}
and the existence of a strict cost comparisons in some moral hazard
problem. 

As for potential future work, a follow-up question is about the demand
for information. Suppose the principal has to incur a cost to build
a more accurate performance measure, and a better performance measure
costs more. One can study how to optimally build the performance measure
to trade off the information cost and the agency cost. A prerequisite
is to understand what properties the information cost should have.
A natural requirement is Blackwell monotonicity, that is, more accurate
information costs more. Here, however, we may need a stronger notion
of monotonicity: the orders in this paper are finer than Blackwell,
and one would expect the cost of information to be larger if the performance
measure reduces agency cost across moral hazard problems. This suggests
studying information costs that are monotone with respect to the orders
developed in this paper.

\pagebreak{}

\bibliographystyle{format/aea/aea}
\bibliography{paper}

\pagebreak{}

\appendix
\setcounter{page}{1}
\section{Flexible Moral Hazard Problems \label{appsec:MH-detail}}

This appendix provides details for the class $\mathcal{P}_{F}$ of
flexible moral hazard problems introduced in Section \ref{sec:MH-problems}
of the main text. Flexible problems impose additional assumptions
on the cost function that render the first-order condition both necessary
and sufficient for the agent's problem. This allows me to construct
instances of moral hazard problems to establish the necessity of the
orders in Section \ref{sec:orders}. Appendix Section \ref{appsubsec:agent-cost}
states the assumptions on cost function. Appendix Section \ref{appsubsec:A-IC}
derives the agent's incentive constraint. 

\subsection{Production Cost \label{appsubsec:agent-cost}}

In flexible moral hazard problems, the agent directly chooses a state
distribution to produce. The action space $A=\Delta\Omega$ with $\boldsymbol{\mu}_{a}=a$.
I therefore use $\boldsymbol{\mu}$ to refer to an action in flexible
problems. The cost of action $\boldsymbol{\mu}$ is $C(\boldsymbol{\mu})$
where $C:\Delta\Omega\to\mathbb{R}_{+}\cup\{+\infty\}$. The production
cost $C$ is assumed to be convex and differentiable, in addition
to being lower semi-continuous and normalized to have a free option
$\underline{\boldsymbol{\mu}}$ with $C(\underline{\boldsymbol{\mu}})=0$.
The two new assumptions warrant further discussion. 

Convexity is without loss in flexible models because the agent is
allowed to randomize. The cost of producing $\boldsymbol{\mu}$ is
the cheapest expected cost across all randomizations that generate
$\boldsymbol{\mu}$. The resulting cost function must be convex. 

Differentiability requires $C$ to be Gateaux differentiable on the
relative interior of its effective domain $A_{C}$. Formally, the
effective domain of $C$ is $A_{C}:=\left\{ \boldsymbol{\mu}\in\Delta\Omega:C(\boldsymbol{\mu})<+\infty\right\} $,
the set of actions with a finite cost. Gateaux differentiability is
only meaningful on the relative interior of the effective domain.
Let $\operatorname{ri}\left(A_{C}\right)$ denote this relative interior.
Gateaux differentiability of $C$ requires the existence of a function
$\boldsymbol{\nabla}C:\operatorname{ri}\left(A_{C}\right)\to\bar{\mathbb{R}}^{N}$
where $\bar{\mathbb{R}}:=\mathbb{R}\cup\{\pm\infty\}$ such that,
for any $\boldsymbol{\mu}\in\operatorname{ri}\left(A_{C}\right),\boldsymbol{\mu}'\in\Delta\Omega$,
\begin{equation}
\lim_{\epsilon\downarrow0}\dfrac{1}{\epsilon}\left[C\left(\boldsymbol{\mu}+\epsilon(\boldsymbol{\mu}'-\boldsymbol{\mu})\right)-C\left(\boldsymbol{\mu}\right)\right]=(\boldsymbol{\mu}'-\boldsymbol{\mu})\cdot\boldsymbol{\nabla}C(\boldsymbol{\mu}).\label{eq:differentiability}
\end{equation}
The function $\boldsymbol{\nabla}C$ is called a (Gateaux) derivative
of $C$. Equation (\ref{eq:differentiability}) says that the cost
of any marginal change in production can be priced linearly with weights
$\boldsymbol{\nabla}C(\boldsymbol{\mu})$. The $n$-th component of
$\boldsymbol{\nabla}C(\boldsymbol{\mu})$ is the marginal cost of
increasing the probability of state $\omega_{n}$. Up to a normalizing
constant, $\boldsymbol{\nabla}C(\boldsymbol{\mu})$ is the usual derivative
of $C$ viewed as a function from a subset of $\mathbb{R}^{N}$ to
$\mathbb{R}_{+}\cup\{+\infty\}$. 

As a technical note, Equation (\ref{eq:differentiability}) defines
the derivative only up to an additive constant. If $\boldsymbol{\nabla}C(\boldsymbol{\mu})$
satisfies Equation (\ref{eq:differentiability}), so does $\boldsymbol{\nabla}C(\boldsymbol{\mu})+b\boldsymbol{1}$
for any $b\in\mathbb{R}$, because the constant $b$ cancels out:
the perturbation $\boldsymbol{\mu}'-\boldsymbol{\mu}$ has components
summing to zero. Alternatively, if we view $C$ as a function defined
over $\Delta\Omega\subseteq\mathbb{R}^{N-1}$, its partial derivative
with respect to the $n$-th component is the marginal cost of shifting
probability from $\omega_{N}$ to $\omega_{n}$. The partial derivatives
become $\boldsymbol{\nabla}C(\boldsymbol{\mu})+b\boldsymbol{1}$ with
$b=-\boldsymbol{\nabla}C(\boldsymbol{\mu})_{N}$ to normalize its
$N$-th component to zero, since $\omega_{N}$ is picked as the residual
state: its probability is one minus the sum of the probabilities of
all other states. Any choice of normalizing constant $b$ leaves the
analysis unchanged.

\subsection{Incentive Constraint \label{appsubsec:A-IC}}

I present two lemmas on the agent's incentive constraint for flexible
problems. They give necessary and sufficient conditions for the agent's
optimal choice. Given the principal's information $\mathcal{E}\in E^{M}$
and contract $\boldsymbol{t}\in\mathbb{R}^{M}$, the next lemma characterizes
when $\boldsymbol{\mu}_{0}$ maximizes the agent's objective (\ref{eq:A-payoff}). 
\begin{lem}
\label{lem:A-optimality} $\boldsymbol{\mu}_{0}\in\underset{\boldsymbol{\mu}\in\Delta\Omega}{\operatorname{argmax}}\:U\left(\boldsymbol{\mu};\mathcal{E},\boldsymbol{t}\right)$
if and only if there exists some $\lambda\in\mathbb{R}$ and $\boldsymbol{\eta}\in\mathbb{R}_{+}^{N}$
such that 
\begin{align}
\mathcal{E}u(\boldsymbol{t}) & =\boldsymbol{\nabla}C(\boldsymbol{\mu}_{0})-\lambda\boldsymbol{1}-\boldsymbol{\eta},\label{eq:FOC}\\
\eta_{n}\boldsymbol{\mu}_{0}(\omega_{n}) & =0,\forall1\leq n\leq N,\label{eq:comp-slackness}
\end{align}
where $u(\boldsymbol{t}):=\left[u(t_{m})\right]_{m=1}^{M}\in\mathbb{R}^{M}$
and $\boldsymbol{\mu}_{0}(\omega_{n})$ denotes the probability of
$\omega_{n}$ under $\boldsymbol{\mu}_{0}$.
\end{lem}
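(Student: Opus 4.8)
The plan is to treat the agent's problem as a concave maximization over the simplex and simply read off its first-order conditions. Rewriting the objective as $U(\boldsymbol{\mu};\mathcal{E},\boldsymbol{t})=\boldsymbol{\mu}\cdot\mathcal{E}u(\boldsymbol{t})-C(\boldsymbol{\mu})$, the first term is linear in $\boldsymbol{\mu}$ and $C$ is convex, so $U$ is concave on the compact convex set $\Delta\Omega$. For a concave function maximized over a convex set the first-order (variational) condition is both necessary and sufficient, and this is exactly the feature that lets the state-dependent utility vector $\mathcal{E}u(\boldsymbol{t})$ summarize all of the agent's incentives.

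First I would record the variational inequality. By concavity together with the Gateaux differentiability assumption in \eqref{eq:differentiability}, $\boldsymbol{\mu}_0$ is a maximizer if and only if the one-sided directional derivative of $U$ toward every feasible point is nonpositive, i.e. $(\boldsymbol{\mu}'-\boldsymbol{\mu}_0)\cdot\big(\mathcal{E}u(\boldsymbol{t})-\boldsymbol{\nabla}C(\boldsymbol{\mu}_0)\big)\le 0$ for all $\boldsymbol{\mu}'\in\Delta\Omega$. The sufficiency direction follows from the supergradient inequality $U(\boldsymbol{\mu}')\le U(\boldsymbol{\mu}_0)+(\boldsymbol{\mu}'-\boldsymbol{\mu}_0)\cdot\big(\mathcal{E}u(\boldsymbol{t})-\boldsymbol{\nabla}C(\boldsymbol{\mu}_0)\big)$, valid because $\mathcal{E}u(\boldsymbol{t})-\boldsymbol{\nabla}C(\boldsymbol{\mu}_0)$ is a supergradient of the concave map $U$; necessity follows because this same inner product is precisely the directional derivative of $U$ at $\boldsymbol{\mu}_0$ along the feasible direction $\boldsymbol{\mu}'-\boldsymbol{\mu}_0$, which cannot be positive at an optimum.

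The second step is to convert this variational inequality into the Lagrangian form \eqref{eq:FOC}--\eqref{eq:comp-slackness}. The inequality says exactly that $\mathcal{E}u(\boldsymbol{t})-\boldsymbol{\nabla}C(\boldsymbol{\mu}_0)$ lies in the normal cone of $\Delta\Omega$ at $\boldsymbol{\mu}_0$. Computing that cone gives the result: the equality constraint $\boldsymbol{\mu}\cdot\boldsymbol{1}=1$ contributes the line spanned by $\boldsymbol{1}$, producing the scalar $\lambda$, while the active nonnegativity constraints — the coordinates $n$ with $\mu_0(\omega_n)=0$ — contribute a nonnegative vector $\boldsymbol{\eta}$ supported off the support of $\boldsymbol{\mu}_0$, which is exactly the complementary-slackness requirement $\eta_n\mu_0(\omega_n)=0$. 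This equivalence can be established in one line either by Farkas's lemma / LP duality applied to the cone of feasible directions at $\boldsymbol{\mu}_0$, or by direct verification: substituting the Lagrangian form into the inner product and using $\boldsymbol{\eta}\cdot\boldsymbol{\mu}_0=0$ collapses it to a multiple of $\boldsymbol{\eta}\cdot\boldsymbol{\mu}'$, which has the sign dictated by $\boldsymbol{\eta}\ge 0$ for every $\boldsymbol{\mu}'\in\Delta\Omega$.

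The step I expect to be most delicate is the boundary behavior combined with the fact that $\boldsymbol{\nabla}C$ takes values in $\bar{\mathbb{R}}^N$: at a state $\omega_n$ outside the support of $\boldsymbol{\mu}_0$ the marginal cost may be infinite, so I must check that the directional-derivative computation remains legitimate when $\boldsymbol{\mu}'$ is taken in the relative interior, and that the constructed multipliers $\lambda$ and $\boldsymbol{\eta}$ are finite on the support, where \eqref{eq:FOC} must hold with finite entries. Convexity of $C$ and the existence of the free option guarantee that the inward directional derivatives are well defined, which is what makes the normal-cone characterization go through; the remaining bookkeeping on the active set is routine.
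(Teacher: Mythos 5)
Your overall route is the same as the paper's: both treat the agent's problem as a concave program over the simplex, derive the first-order condition as necessary via a perturbation/KKT argument, and get sufficiency from concavity of $\boldsymbol{\mu}\mapsto\boldsymbol{\mu}\cdot\mathcal{E}u(\boldsymbol{t})-C(\boldsymbol{\mu})$. Your organization through the variational inequality and the normal cone of $\Delta\Omega$ is in fact cleaner than the paper's Lagrangian perturbation, and you are right to flag the boundary behavior of $\boldsymbol{\nabla}C$ as the delicate point.

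There is, however, one concrete step that does not close as you describe it: the sign with which $\boldsymbol{\eta}$ enters. The normal cone of $\Delta\Omega$ at $\boldsymbol{\mu}_{0}$ (vectors whose inner product with every feasible direction is nonpositive) is $\left\{ \lambda\boldsymbol{1}-\boldsymbol{\eta}:\lambda\in\mathbb{R},\boldsymbol{\eta}\geq0,\eta_{n}\boldsymbol{\mu}_{0}(\omega_{n})=0\right\} $, with a \emph{minus} sign on the multiplier: substituting $\mathcal{E}u(\boldsymbol{t})-\boldsymbol{\nabla}C(\boldsymbol{\mu}_{0})=\lambda\boldsymbol{1}-\boldsymbol{\eta}$ into $(\boldsymbol{\mu}'-\boldsymbol{\mu}_{0})\cdot\left(\mathcal{E}u(\boldsymbol{t})-\boldsymbol{\nabla}C(\boldsymbol{\mu}_{0})\right)$ and using $\boldsymbol{\eta}\cdot\boldsymbol{\mu}_{0}=0$ gives $-\boldsymbol{\eta}\cdot\boldsymbol{\mu}'\leq0$, which is the direction a maximum requires. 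With the plus sign as displayed in (\ref{eq:FOC}), the same computation gives $+\boldsymbol{\eta}\cdot\boldsymbol{\mu}'\geq0$, i.e.\ every feasible direction is weakly improving, which is the wrong inequality; and indeed (\ref{eq:FOC})--(\ref{eq:comp-slackness}) with $+\boldsymbol{\eta}$ are satisfiable at non-optimal points (take $N=2$, $C\equiv0$, $\mathcal{E}u(\boldsymbol{t})=(0,1)$, $\boldsymbol{\mu}_{0}=(1,0)$, $\lambda=0$, $\boldsymbol{\eta}=(0,1)$: complementary slackness holds but $(0,1)$ is strictly better for the agent). Your "one-line verification" therefore proves the statement only after replacing $+\boldsymbol{\eta}$ by $-\boldsymbol{\eta}$; as written, the sufficiency direction would fail. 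You should either carry the minus sign through explicitly (in which case your argument is complete and establishes the corrected first-order condition, with the economically sensible reading that the net marginal benefit of an unused state is weakly below the common level $\lambda$ on the support), or note that the paper's own statement and Lagrangian derivation contain the same sign slip, which your normal-cone computation in fact exposes.
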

\begin{proof}
Recall that the agent's objective can be rewritten as $U\left(\boldsymbol{\mu};\mathcal{E},\boldsymbol{t}\right)=\boldsymbol{\mu}\cdot\mathcal{E}u(\boldsymbol{t})-C\left(\boldsymbol{\mu}\right)$.
I first show the necessity. Indeed, this is a first-order condition.
Suppose $\boldsymbol{\mu}_{0}\in\underset{\boldsymbol{\mu}\in\Delta\Omega}{\operatorname{argmax}}\:U\left(\boldsymbol{\mu};\mathcal{E},\boldsymbol{t}\right)$.
View the agent's problem as choosing $\boldsymbol{\mu}\in\mathbb{R}^{N}$
subject to $\boldsymbol{\mu}\cdot\boldsymbol{1}=1$ and $\boldsymbol{\mu}\geq0$,
since the probabilities must add up to one and be non-negative.\footnote{The probabilities are also bounded above by one. But this is implied
by non-negativity and adding-up.} Let $\lambda\in\mathbb{R}$ be the multiplier on the adding-up constraint,
and $\boldsymbol{\eta}\in\mathbb{R}_{+}^{N}$ be the vector of multipliers
on the non-negativity constraint. The Lagrangian of the problem is
\[
\mathcal{L}(\boldsymbol{\mu},\lambda,\boldsymbol{\eta}):=\boldsymbol{\mu}\cdot\mathcal{E}u(\boldsymbol{t})-C\left(\boldsymbol{\mu}\right)+\lambda(\boldsymbol{\mu}\cdot\boldsymbol{1}-1)+\boldsymbol{\mu}\cdot\boldsymbol{\eta}.
\]
Optimality of $\boldsymbol{\mu}_{0}$ implies that no perturbation
of $\boldsymbol{\mu}_{0}$ can improve the value of $\mathcal{L}(\boldsymbol{\mu},\lambda,\boldsymbol{\eta})$.
Specifically, perturb $\boldsymbol{\mu}_{0}$ to any $\boldsymbol{\mu}=\boldsymbol{\mu}_{0}+\epsilon\boldsymbol{d}$
for some $\epsilon>0$ and $\boldsymbol{d}\in\mathbb{R}^{N}$ with
$\boldsymbol{d}\cdot\boldsymbol{1}=0$,\footnote{The condition $\boldsymbol{d}\cdot\boldsymbol{1}=0$ comes from requiring
$\boldsymbol{\mu}$ to satisfy adding-up.} we must have
\[
\lim_{\epsilon\downarrow0}\;\dfrac{1}{\epsilon}\left[C\left(\boldsymbol{\mu}_{0}+\epsilon\boldsymbol{d}\right)-C\left(\boldsymbol{\mu}_{0}\right)\right]-\boldsymbol{d}\cdot\mathcal{E}u(\boldsymbol{t})-\lambda\boldsymbol{d}\cdot\boldsymbol{1}-\boldsymbol{d}\cdot\boldsymbol{\eta}\geq0.
\]
Apply the definition of $\boldsymbol{\nabla}C$ and notice $\boldsymbol{d}\cdot\boldsymbol{1}=0$,
\[
\boldsymbol{d}\cdot\left[\boldsymbol{\nabla}C(\boldsymbol{\mu}_{0})-\mathcal{E}u(\boldsymbol{t})-\lambda\boldsymbol{1}-\boldsymbol{\eta}\right]\geq0.
\]
This must hold for any $\boldsymbol{d}\in\mathbb{R}^{N}$ with $\boldsymbol{d}\cdot\boldsymbol{1}=0$.
The bracket term has to be proportional to $\boldsymbol{1}$, which
absorbs into $\lambda$. Therefore, we must have (\ref{eq:FOC}),
and (\ref{eq:comp-slackness}) is the corresponding complementary
slackness condition. If $\boldsymbol{\mu}_{0}$ is interior, the condition
reduces to $\mathcal{E}u(\boldsymbol{t})=\boldsymbol{\nabla}C(\boldsymbol{\mu}_{0})-\lambda\boldsymbol{1}$
for some $\lambda\in\mathbb{R}$. This completes the proof for necessity.

For sufficiency, observe that the agent's problem is convex, because
in his objective (\ref{eq:A-payoff}), the expected utility from money
is linear in $\boldsymbol{\mu}$, and the production cost is differentiable
and convex. As a result, the first-order condition is sufficient for
global optimality. 
\end{proof}
By Lemma \ref{lem:A-optimality}, the principal's problem reduces
to choosing the agent's state-dependent utility $\mathcal{E}u(\boldsymbol{t})$
to minimize her expected cost, subject to the agent's first-order
condition (\ref{eq:FOC}) and (\ref{eq:comp-slackness}), as well
as his participation constraint (\ref{eq:IR}) and possibly the limited
liability (\ref{eq:LL}) and ex post budget (\ref{eq:B}) constraints.

The first-order condition simplifies further if $\boldsymbol{\mu}_{0}$
is interior, in which case $\boldsymbol{\eta}=\boldsymbol{0}$. The
next lemma summarizes this result. 

\begin{lem}
\label{lem:A-opt-interior} Suppose $\boldsymbol{\mu}_{0}\in\operatorname{Int}\Delta\Omega$.
$\boldsymbol{\mu}_{0}\in\underset{\boldsymbol{\mu}\in\Delta\Omega}{\operatorname{argmax}}\:U\left(\boldsymbol{\mu};\mathcal{E},\boldsymbol{t}\right)$
if and only if there exists some $\lambda\in\mathbb{R}$ such that
\begin{align}
\mathcal{E}u(\boldsymbol{t}) & =\boldsymbol{\nabla}C(\boldsymbol{\mu}_{0})-\lambda\boldsymbol{1},\label{eq:FOC-interior}
\end{align}
where $u(\boldsymbol{t}):=\left[u(t_{m})\right]_{m=1}^{M}\in\mathbb{R}^{M}$. 
\end{lem}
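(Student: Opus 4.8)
The plan is to obtain Lemma~\ref{lem:A-opt-interior} as an immediate corollary of Lemma~\ref{lem:A-optimality}, exploiting the fact that interiority kills the non-negativity multipliers. Both directions hinge on the complementary slackness condition \eqref{eq:comp-slackness}, so essentially no new work is required beyond specializing the general first order condition.

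For the ``only if'' direction, I would start from Lemma~\ref{lem:A-optimality}: since $\boldsymbol{\mu}_0$ is optimal, there exist $\lambda\in\mathbb{R}$ and $\boldsymbol{\eta}\in\mathbb{R}_+^N$ satisfying \eqref{eq:FOC} and \eqref{eq:comp-slackness}. Because $\boldsymbol{\mu}_0\in\operatorname{Int}\Delta\Omega$, every coordinate satisfies $\boldsymbol{\mu}_0(\omega_n)>0$. The complementary slackness condition $\eta_n\boldsymbol{\mu}_0(\omega_n)=0$ then forces $\eta_n=0$ for every $n$, i.e. $\boldsymbol{\eta}=\boldsymbol{0}$. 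Substituting this into \eqref{eq:FOC} collapses it to $\mathcal{E}u(\boldsymbol{t})=\boldsymbol{\nabla}C(\boldsymbol{\mu}_0)+\lambda\boldsymbol{1}$, which is exactly \eqref{eq:FOC-interior}. (This is precisely the remark already flagged in the proof of Lemma~\ref{lem:A-optimality}.)

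For the ``if'' direction, given \eqref{eq:FOC-interior} I would simply set $\boldsymbol{\eta}=\boldsymbol{0}\in\mathbb{R}_+^N$. Then \eqref{eq:FOC} holds (it coincides with \eqref{eq:FOC-interior} when $\boldsymbol{\eta}=\boldsymbol{0}$), and \eqref{eq:comp-slackness} holds trivially since each $\eta_n=0$. Lemma~\ref{lem:A-optimality} then yields $\boldsymbol{\mu}_0\in\operatorname{argmax}_{\boldsymbol{\mu}\in\Delta\Omega}U(\boldsymbol{\mu};\mathcal{E},\boldsymbol{t})$, completing the equivalence.

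I do not anticipate a genuine obstacle here, as the statement is a specialization rather than an independent result. The only minor point worth flagging is that the existential quantifier over $\boldsymbol{\eta}\in\mathbb{R}_+^N$ in the lemma statement is vestigial: $\boldsymbol{\eta}$ does not appear in \eqref{eq:FOC-interior}, and the content of the lemma is that one may always take $\boldsymbol{\eta}=\boldsymbol{0}$ in the interior case. I would therefore phrase the proof so as to make explicit that interiority both forces $\boldsymbol{\eta}=\boldsymbol{0}$ in the forward direction and permits choosing $\boldsymbol{\eta}=\boldsymbol{0}$ in the reverse direction, so that the condition genuinely reduces to the single equation \eqref{eq:FOC-interior}.
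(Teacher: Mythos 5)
Your proposal is correct and matches the paper's own treatment: the paper likewise derives this lemma as an immediate specialization of Lemma~\ref{lem:A-optimality}, noting in that proof that interiority of $\boldsymbol{\mu}_{0}$ together with complementary slackness forces $\boldsymbol{\eta}=\boldsymbol{0}$. Your observation that the quantifier over $\boldsymbol{\eta}$ in the statement is vestigial is also accurate.
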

\begin{rem}
\label{rem:no-boundary} It suffices to focus on $\boldsymbol{\mu}_{0}\in\operatorname{Int}\Delta\Omega$
if the production cost $C$ satisfies $\lim_{\boldsymbol{\mu}\to\boldsymbol{\mu}'}\left\Vert \boldsymbol{\nabla}C(\boldsymbol{\mu})\right\Vert =+\infty$
for all $\boldsymbol{\mu}'\in\partial\Delta\Omega$, where $\partial\Delta\Omega$
denotes the boundary of $\Delta\Omega$, the set of all distributions
that are not fully supported. That is, the marginal cost to rule out
any state is infinite. In this case, it is infinitely costly to implement
any non-fully-supported state distribution. Only interior $\boldsymbol{\mu}_{0}$
can be implemented. The proofs in the next section repeatedly employ
such cost functions to construct instances of moral hazard problems.
\end{rem}

\section{Omitted Proofs \label{appsec:proof-appendix}}

This appendix collects the omitted proofs from the main text. 
\begin{proof}
[Proof of Theorem \ref{thm:column-space}] In addition to the equivalences
in Theorem \ref{thm:column-space}, I also prove the equivalence of
column space dominance to condition $(5')$ which is a variant of
condition $(5)$ with any fixed interior prior. 
\begin{itemize}
\item [$(5^\prime)$] $\operatorname{Aff}\operatorname{Supp}\pi\left(\mathcal{E};\boldsymbol{\nu}\right)\supseteq\operatorname{Aff}\operatorname{Supp}\pi\left(\mathcal{E}';\boldsymbol{\nu}\right)$
for some interior prior $\boldsymbol{\nu}\in\Delta\Omega$.
\end{itemize}
$(1)\Rightarrow(2)$. To implement some $a_{0}$ given some $A\in\mathcal{A},u\in\mathcal{U},C\in\mathcal{C}$,
it suffices to find some contract $\boldsymbol{t}$ to generate a
state-dependent utility such that $a_{0}$ is optimal for the agent's
problem (\ref{eq:A-problem}). If $\mathcal{E}\geq_{\text{Col}}\mathcal{E}'$,
then any state-dependent utility $\boldsymbol{u}$ that can be generated
under $\mathcal{E}'$ can also be generated under $\mathcal{E}$:
for any $\boldsymbol{u}:=\mathcal{E}'u(\boldsymbol{t}')$ for some
$\boldsymbol{t}'\in\mathbb{R}^{M'}$, one can always find $\boldsymbol{t}\in\mathbb{R}^{M}$
so that $u(\boldsymbol{t})=Gu(\boldsymbol{t}')$, and $\boldsymbol{t}$
generates the same state-dependent utility $\mathcal{E}u(\boldsymbol{t})=\mathcal{E}Gu(\boldsymbol{t}')=\mathcal{E}'u(\boldsymbol{t}')=\boldsymbol{u}$.
Therefore, if there exists a feasible $\boldsymbol{u}$ under $\mathcal{E}'$
such that $a_{0}$ solves the agent's problem (\ref{eq:A-problem}),
under $\mathcal{E}$ the principal can generate the same $\boldsymbol{u}$
so that $a_{0}$ remains optimal. 

$(2)\Rightarrow(3)$. It suffices to observe that when the agent is
risk neutral without limited liability, any $a_{0}\in\mathcal{I}(\mathcal{E};A,u,C)$
can be implemented at the first-best cost. To see this, suppose $a_{0}$
can be implemented by some contract $\boldsymbol{t}\in\mathbb{R}^{M}$.
The principal can simply take away a constant bonus from $\boldsymbol{t}$
until the agent earns no rent. The cost to the principal is always
$C(a_{0})+\underline{u}$, regardless of her information. The only
place information matters is to determine $\mathcal{I}(\mathcal{E};A,u,C)$
because if $a_{0}\not\in\mathcal{I}(\mathcal{E};A,u,C)$, the cost
is infinity. Therefore, the comparisons of cost are implied by the
comparisons of implementability.

$(2)\Rightarrow(1)$, $(3)\Rightarrow(1)$. I consider the contrapositive.
Suppose $\mathcal{E}\not\geq_{\text{Col}}\mathcal{E}'$. There exists
some $\boldsymbol{x}\in\operatorname{Col}\mathcal{E}'\setminus\operatorname{Col}\mathcal{E}$.
I construct a moral hazard problem in $\mathcal{P}_{F}\cap\mathcal{P}_{1}$
with infinite cost under $\mathcal{E}$ but feasible under $\mathcal{E}'$.
This proves $(2)\Rightarrow(1)$ and $(3)\Rightarrow(1)$ at the same
time. Consider a risk neutral agent with $u(t)=t$, and pick a flexible
cost function $C\in\mathcal{C}_{F}$ such that $\lim_{\boldsymbol{\mu}\to\boldsymbol{\mu}'}\left\Vert \boldsymbol{\nabla}C(\boldsymbol{\mu})\right\Vert =+\infty$
for all $\boldsymbol{\mu}'\in\partial\Delta\Omega$ where $\partial\Delta\Omega$
is the boundary of $\Delta\Omega$, that is, the set of non-fully-supported
distributions. Such cost functions make sure that the first-order
condition (\ref{eq:FOC}) does not involve multipliers $\boldsymbol{\eta}\in\mathbb{R}_{+}^{N}$
(see Remark \ref{rem:no-boundary}). I can choose some strictly convex
$C$ and action $\boldsymbol{\mu}_{0}$ so that $\boldsymbol{\nabla}C(\boldsymbol{\mu}_{0})=\boldsymbol{x}$,
which makes it implementable under $\mathcal{E}'$ but not $\mathcal{E}$.
Here I identify the target action with its induced state distribution
$\boldsymbol{\mu}_{0}\in\Delta\Omega$. Note that to implement $\boldsymbol{\mu}_{0}$,
the principal can pick any multiplier $\lambda\in\mathbb{R}$ to satisfy
(\ref{eq:FOC}). But if $\boldsymbol{x}\not\in\operatorname{Col}\mathcal{E}$,
the principal cannot satisfy (\ref{eq:FOC}) for any value of $\lambda$.
This is because, if so, $\boldsymbol{x}+\lambda\boldsymbol{1}\in\operatorname{Col}\mathcal{E}$
becomes a contradiction to $\boldsymbol{x}\not\in\operatorname{Col}\mathcal{E}$
since $\boldsymbol{1}\in\operatorname{Col}\mathcal{E}$.

$(1)\Rightarrow(4)$. Consider the $m$-th column of $\mathcal{E}'$,
denoted $\mathcal{E}_{m}'$. Due to the column space inclusion, $\mathcal{E}_{m}'\in\operatorname{Col}\mathcal{E}'\subseteq\operatorname{Col}\mathcal{E}$.
Therefore, for each $1\leq m\leq M'$, $\mathcal{E}_{m}'=\mathcal{E}\boldsymbol{v}_{m}$
for some $\boldsymbol{v}_{m}$. We have $\mathcal{E}'=\mathcal{E}G$
for $G=\begin{bmatrix}\boldsymbol{v}_{1} & \boldsymbol{v}_{2} & \dots & \boldsymbol{v}_{M'}\end{bmatrix}$.

$(4)\Rightarrow(1)$. Take any $\boldsymbol{x}\in\operatorname{Col}\mathcal{E}'$.
$\boldsymbol{x}=\mathcal{E}'\boldsymbol{v}$ for some $\boldsymbol{v}\in\mathbb{R}^{M'}$.
Thus, $\boldsymbol{x}=\mathcal{E}'\boldsymbol{v}=\mathcal{E}G\boldsymbol{v}\in\operatorname{Col}\mathcal{E}$.

$(4)\Rightarrow(5)$. Take any prior $\boldsymbol{\nu}$. Consider
$\mathcal{E}$ first. Apply the Bayes' rule, and the posterior $\boldsymbol{\mu}_{m}$
induced by realization $y_{m}$ is given by 
\begin{equation}
\boldsymbol{\mu}_{m}(\omega_{n})=\dfrac{\boldsymbol{\nu}(\omega_{n})\mathcal{E}_{n,m}}{\sum_{n'=1}^{N}\boldsymbol{\nu}(\omega_{n'})\mathcal{E}_{n',m}},\label{eq:Bayes-rule}
\end{equation}
where $\mathcal{E}_{n,m}$ is the probability of $y_{m}$ in state
$\omega_{n}$. The same formula applies to $\mathcal{E}'$ as well.
We can use $\mathcal{E}'=\mathcal{E}G$ to rewrite the posteriors
induced by $\mathcal{E}'$ as a linear combination of those induced
by $\mathcal{E}$. Specifically, the posterior $\boldsymbol{\mu}'_{m'}$
induced by realization $y'_{m'}$ is 
\begin{align}
\boldsymbol{\mu}'_{m'}(\omega_{n}) & =\frac{\boldsymbol{\nu}(\omega_{n})\mathcal{E}'_{n,m'}}{\sum_{n'=1}^{N}\boldsymbol{\nu}(\omega_{n'})\mathcal{E}'_{n',m'}}\nonumber \\
 & =\frac{\boldsymbol{\nu}(\omega_{n})\sum_{m=1}^{M}\mathcal{E}_{n,m}G_{m,m'}}{\sum_{n'=1}^{N}\boldsymbol{\nu}(\omega_{n'})\sum_{m=1}^{M}\mathcal{E}_{n',m}G_{m,m'}}\nonumber \\
 & =\frac{\sum_{m=1}^{M}G_{m,m'}\boldsymbol{\nu}(\omega_{n})\mathcal{E}_{n,m}}{\sum_{m=1}^{M}G_{m,m'}\sum_{n'=1}^{N}\boldsymbol{\nu}(\omega_{n'})\mathcal{E}_{n',m}}\nonumber \\
 & =\frac{\sum_{m=1}^{M}\boldsymbol{\mu}_{m}(\omega_{n})G_{m,m'}\sum_{n'=1}^{N}\boldsymbol{\nu}(\omega_{n'})\mathcal{E}_{n',m}}{\sum_{m=1}^{M}G_{m,m'}\sum_{n'=1}^{N}\boldsymbol{\nu}(\omega_{n'})\mathcal{E}_{n',m}}\nonumber \\
 & =\sum_{m=1}^{M}\boldsymbol{\mu}_{m}(\omega_{n})w_{m,m'}\label{eq:E-E'-posteriors}
\end{align}
with
\[
w_{m,m'}:=\frac{G_{m,m'}\sum_{n'=1}^{N}\boldsymbol{\nu}(\omega_{n'})\mathcal{E}_{n',m}}{\sum_{m=1}^{M}G_{m,m'}\sum_{n'=1}^{N}\boldsymbol{\nu}(\omega_{n'})\mathcal{E}_{n',m}},
\]
where the second equality uses $\mathcal{E}'_{n,m'}=\sum_{m=1}^{M}\mathcal{E}_{n,m}G_{m,m'}$,
the third equality changes the order of summation, and the fourth
equality plugs in (\ref{eq:Bayes-rule}). From the last expression
(\ref{eq:E-E'-posteriors}), $\boldsymbol{\mu}'_{m'}$ is an affine
combination of $\boldsymbol{\mu}_{m}$ with weights $w_{m,m'}$ since
the weights sum to one $\sum_{m=1}^{M}w_{m,m'}=1$ for any $m'$.
Therefore, $\operatorname{Aff}\operatorname{Supp}\pi\left(\mathcal{E};\boldsymbol{\nu}\right)\supseteq\operatorname{Supp}\pi\left(\mathcal{E}';\boldsymbol{\nu}\right)$
and this implies $(5)$ by taking affine spans on both sides. 

$(5)\Rightarrow(4)$. Take $\boldsymbol{\nu}$ to be the uniform prior.
In this case, 
\[
\boldsymbol{\mu}_{m}(\omega_{n})=\dfrac{\mathcal{E}_{n,m}}{\sum_{n'=1}^{N}\mathcal{E}_{n',m}}.
\]
The affine span inclusion thus implies that for any column $\mathcal{E}_{m}'$
of $\mathcal{E}'$, we have $\mathcal{E}_{m}'=\mathcal{E}\boldsymbol{v}_{m}$
for some $\boldsymbol{v}_{m}$, which implies the matrix factorization
condition. 

$(5)\Leftrightarrow(5^{\prime})$. The $\Rightarrow$ direction is
trivial. For $\Leftarrow$, take any interior prior $\boldsymbol{\nu}$
such that $(5')$ holds and consider any other prior $\boldsymbol{\nu}'$.
It suffices to show that $\operatorname{Aff}\operatorname{Supp}\pi\left(\mathcal{E};\boldsymbol{\nu}'\right)\supseteq\operatorname{Supp}\pi\left(\mathcal{E}';\boldsymbol{\nu}'\right)$,
since $(5)$ follows from taking affine spans on both sides. We first
express the posteriors in $\operatorname{Supp}\pi(\mathcal{E};\boldsymbol{\nu}')$
as some linear transformation of posteriors in $\operatorname{Supp}\pi(\mathcal{E};\boldsymbol{\nu})$,
and similarly for $\mathcal{E}'$. Specifically, let $\boldsymbol{\mu}_{m}$
and $\boldsymbol{\mu}_{m}'$ be the posteriors of $\mathcal{E}$ induced
by $y_{m}$ when the prior is $\boldsymbol{\nu}$ and $\boldsymbol{\nu}'$,
and let $\boldsymbol{\xi}_{m'}$ and $\boldsymbol{\xi}_{m'}'$ be
the posteriors of $\mathcal{E}'$ induced by $y'_{m'}$ when the prior
is $\boldsymbol{\nu}$ and $\boldsymbol{\nu}'$. From (\ref{eq:Bayes-rule}),
we have for $\boldsymbol{\mu}{}_{m}$ and $\boldsymbol{\mu}_{m}'$,
\begin{align*}
\boldsymbol{\mu}_{m}'(\omega_{n}) & =\dfrac{\boldsymbol{\nu}'(\omega_{n})\mathcal{E}_{n,m}}{\sum_{n'=1}^{N}\boldsymbol{\nu}'(\omega_{n'})\mathcal{E}_{n',m}}\\
 & =\boldsymbol{\mu}_{m}(\omega_{n})\cdot\dfrac{\boldsymbol{\nu}'(\omega_{n})}{\boldsymbol{\nu}(\omega_{n})}\cdot\dfrac{\sum_{n'=1}^{N}\boldsymbol{\nu}(\omega_{n'})\mathcal{E}_{n',m}}{\sum_{n'=1}^{N}\boldsymbol{\nu}'(\omega_{n'})\mathcal{E}_{n',m}}\\
 & =\boldsymbol{\mu}_{m}(\omega_{n})\cdot\alpha_{n}\cdot\beta_{m}
\end{align*}
where 
\[
\alpha_{n}:=\dfrac{\boldsymbol{\nu}'(\omega_{n})}{\boldsymbol{\nu}(\omega_{n})},\beta_{m}:=\dfrac{\sum_{n'=1}^{N}\boldsymbol{\nu}(\omega_{n'})\mathcal{E}_{n',m}}{\sum_{n'=1}^{N}\boldsymbol{\nu}'(\omega_{n'})\mathcal{E}_{n',m}}.
\]
The factor $\alpha_{n}$ depends only on the state, and $\beta_{m}$
only on the realization. Both are strictly positive since $\boldsymbol{\nu}$
is interior. The same factorization applies to posteriors $\boldsymbol{\xi}_{m'}$
and $\boldsymbol{\xi}_{m'}'$, that is, 
\[
\boldsymbol{\xi}_{m'}'(\omega_{n})=\boldsymbol{\xi}_{m'}(\omega_{n})\cdot\alpha_{n}\cdot\beta_{m'}',
\]
where 
\[
\beta_{m'}':=\dfrac{\sum_{n'=1}^{N}\boldsymbol{\nu}(\omega_{n'})\mathcal{E}'_{n',m'}}{\sum_{n'=1}^{N}\boldsymbol{\nu}'(\omega_{n'})\mathcal{E}'_{n',m'}}.
\]
We now show that $\operatorname{Aff}\operatorname{Supp}\pi\left(\mathcal{E};\boldsymbol{\nu}'\right)\supseteq\operatorname{Supp}\pi\left(\mathcal{E}';\boldsymbol{\nu}'\right)$
by proving $\boldsymbol{\xi}_{m'}'\in\operatorname{Aff}\operatorname{Supp}\pi\left(\mathcal{E};\boldsymbol{\nu}'\right)$
for any realization $y'_{m'}$ of $\mathcal{E}'$. From $(5')$ we
know that the corresponding posterior $\boldsymbol{\xi}_{m'}$ is
an affine combination of $\boldsymbol{\mu}_{m}$'s, 
\[
\boldsymbol{\xi}_{m'}=\sum_{m=1}^{M}\gamma_{m}\boldsymbol{\mu}_{m}
\]
with $\sum_{m=1}^{M}\gamma_{m}=1$. Therefore, we have 
\begin{align}
\boldsymbol{\xi}_{m'}'(\omega_{n}) & =\boldsymbol{\xi}_{m'}(\omega_{n})\cdot\alpha_{n}\cdot\beta_{m'}'\nonumber \\
 & =\sum_{m=1}^{M}\gamma_{m}\boldsymbol{\mu}_{m}(\omega_{n})\cdot\alpha_{n}\cdot\beta_{m'}'\nonumber \\
 & =\sum_{m=1}^{M}\dfrac{\gamma_{m}\beta_{m'}'}{\beta_{m}}\cdot\boldsymbol{\mu}_{m}'(\omega_{n}),\label{eq:post-comb}
\end{align}
where the equalities follow from plugging in the decompositions for
$\boldsymbol{\xi}_{m'}$ and $\boldsymbol{\mu}_{m}'$. Observe that
the coefficient $\gamma_{m}\beta_{m'}'/\beta_{m}$ before $\boldsymbol{\mu}_{m}'$
does not depend on the state. As a result, $\boldsymbol{\xi}_{m'}'$
is a linear combination of $\boldsymbol{\mu}_{m}'$. The coefficients
$\gamma_{m}\beta_{m'}'/\beta_{m}$ must sum to one: since both $\boldsymbol{\xi}_{m'}'$
and $\boldsymbol{\mu}_{m}'$ are probability distributions, taking
the dot product with $\boldsymbol{1}$ on both sides of Equation (\ref{eq:post-comb})
gives $\sum_{m=1}^{M}\gamma_{m}\beta_{m'}'/\beta_{m}=1$. Therefore,
$\boldsymbol{\xi}_{m'}'\in\operatorname{Aff}\operatorname{Supp}\pi\left(\mathcal{E};\boldsymbol{\nu}'\right)$.
This completes the proof. 
\end{proof}
\begin{proof}
[Proof of Theorem \ref{thm:conic-span}] In addition to the equivalences
in Theorem \ref{thm:conic-span}, I also prove the equivalence of
conic span dominance to condition $(2')$ which says lower cost in
all risk neutral problems with an ex post budget constraint, and condition
$(4')$ which is a variant of condition $(4)$ with any fixed interior
prior. 
\begin{itemize}
\item [$(2^\prime)$] $\kappa(\mathcal{E};P)\leq\kappa(\mathcal{E}';P)$
for any $P\in\mathcal{P}_{2}':=\{P\in\mathcal{P}:u(t)=t,\mathcal{R}=\text{B}\}$.
\item [$(4^\prime)$] $\operatorname{Co}\operatorname{Supp}\pi\left(\mathcal{E};\boldsymbol{\nu}\right)\supseteq\operatorname{Co}\operatorname{Supp}\pi\left(\mathcal{E}';\boldsymbol{\nu}\right)$
for some interior prior $\boldsymbol{\nu}\in\Delta\Omega$.
\end{itemize}
$(1)\Rightarrow(2)$. Given any problem $P\in\mathcal{P}_{2}$, suppose
the principal wants to implement $a_{0}$. If $a_{0}$ is not implementable
under $\mathcal{E}'$, $\kappa(\mathcal{E}';P)=+\infty$ and there
is nothing to prove. If $a_{0}$ is implementable under $\mathcal{E}'$,
let $\boldsymbol{u}$ be the agent's state-dependent utility generated
by the optimal contract under $\mathcal{E}'$. Due to conic span inclusion,
$\boldsymbol{u}$ is also feasible under $\mathcal{E}$. If the principal
generates the same $\boldsymbol{u}$ under $\mathcal{E}$, the agent
still optimally chooses $a_{0}$ and the principal incurs the same
expected cost. Therefore, the cost to implement $a_{0}$ under $\mathcal{E}$
is no higher than that under $\mathcal{E}'$, or $\kappa(\mathcal{E};P)\leq\kappa(\mathcal{E}';P)$. 

$(2)\Rightarrow(1)$. Suppose $\operatorname{Cone}\mathcal{E}\not\supseteq\operatorname{Cone}\mathcal{E}'$.
There exists some $\boldsymbol{x}\in\operatorname{Cone}\mathcal{E}'\setminus\operatorname{Cone}\mathcal{E}$.
We construct a problem in $\mathcal{P}_{F}\cap\mathcal{P}_{2}$ that
violates the cost comparisons. Focus on flexible cost functions $C\in\mathcal{C}_{F}$
such that $\lim_{\boldsymbol{\mu}\to\boldsymbol{\mu}'}\left\Vert \boldsymbol{\nabla}C(\boldsymbol{\mu})\right\Vert =+\infty$
for all $\boldsymbol{\mu}'\in\partial\Delta\Omega$, that is, the
marginal cost to rule out any state is infinite. This condition ensures
that $\boldsymbol{\eta}$ does not show up in the first-order condition.
Pick a cost function $C$ and action $\boldsymbol{\mu}_{0}$ so that
the marginal cost of $C$ at $\boldsymbol{\mu}_{0}$ is $\boldsymbol{x}$.
Here I identify the target action with its induced state distribution
$\boldsymbol{\mu}_{0}\in\Delta\Omega$. To optimally implement $\boldsymbol{\mu}_{0}$
under $\mathcal{E}'$, the principal already chooses the optimal state-dependent
utility $\boldsymbol{x}+\lambda\boldsymbol{1}\in\operatorname{Cone}\mathcal{E}'$
for some $\lambda\in\mathbb{R}$ according to Lemma \ref{lem:A-opt-interior}.
Under $\mathcal{E}$, the principal either cannot implement $\boldsymbol{\mu}_{0}$,
or she can at best pick some $\lambda'>\lambda$ because $\lambda'\leq\lambda$
would imply $\boldsymbol{x}\in\operatorname{Cone}\mathcal{E}$, a
contradiction. This is because $\boldsymbol{1}$ is a recessive direction
of $\operatorname{Cone}\mathcal{E}$ for any experiment $\mathcal{E}$
due to $\mathcal{E}\boldsymbol{1}=\boldsymbol{1}$.  As a result,
the cost is strictly higher under $\mathcal{E}$ than $\mathcal{E}'$. 

$(1)\Leftrightarrow(2')$. It suffices to realize, if the principal
is subject to only the ex post budget constraint $\boldsymbol{t}\leq B$,
the set of feasible state-dependent utilities is $\left\{ \mathcal{E}\boldsymbol{v}:\boldsymbol{v}\leq B\right\} $,
which reduces to $B(\boldsymbol{1}-\operatorname{Cone}\mathcal{E})$,
observing that $\mathcal{E}\boldsymbol{1}=\boldsymbol{1}$. The rest
of the proof follows the steps for $(1)\Leftrightarrow(2)$.

$(1)\Rightarrow(3)$. The $m$-th column of $\mathcal{E}'$ is in
the conic span, $\mathcal{E}_{m}'\in\operatorname{Cone}\mathcal{E}'\subseteq\operatorname{Cone}\mathcal{E}$.
Therefore, for each $1\leq m\leq M'$, $\mathcal{E}_{m}'=\mathcal{E}\boldsymbol{v}_{m}$
for some $\boldsymbol{v}_{m}\geq0$. We have $\mathcal{E}'=\mathcal{E}G$
for $G=\begin{bmatrix}\boldsymbol{v}_{1} & \boldsymbol{v}_{2} & \dots & \boldsymbol{v}_{M'}\end{bmatrix}\geq0$.

$(3)\Rightarrow(1)$. Take any $\boldsymbol{x}\in\operatorname{Cone}\mathcal{E}'$.
$\boldsymbol{x}=\mathcal{E}'\boldsymbol{v}$ for some $\boldsymbol{v}\geq0$.
Thus, $\boldsymbol{x}=\mathcal{E}'\boldsymbol{v}=\mathcal{E}G\boldsymbol{v}\in\operatorname{Cone}\mathcal{E}$
since $G\boldsymbol{v}\geq0$. 

$(3)\Rightarrow(4)$. Take any prior $\boldsymbol{\nu}$. From (\ref{eq:Bayes-rule}),
we know that $\operatorname{Co}\operatorname{Supp}\pi\left(\mathcal{E};\boldsymbol{\nu}\right)$
is simply the intersection of $\operatorname{Cone}\left\{ \boldsymbol{v}_{1},...,\boldsymbol{v}_{M}\right\} $
and the hyperplane $\boldsymbol{x}\cdot\boldsymbol{1}=1$, where $\left(\boldsymbol{v}_{m}\right)_{n}=\boldsymbol{\nu}(\omega_{n})\mathcal{E}_{n,m}$.
That is, $\boldsymbol{v}_{m}$ is the vector produced by taking entry
wise product of $\boldsymbol{\nu}$ and the $m$-th column of $\mathcal{E}$.
The conic span $\operatorname{Cone}\left\{ \boldsymbol{v}_{1},...,\boldsymbol{v}_{M}\right\} $
is therefore the conic span of matrix $\mathcal{E}$ weighted by the
prior $\boldsymbol{\nu}$. Similarly, $\operatorname{Co}\operatorname{Supp}\pi\left(\mathcal{E}';\boldsymbol{\nu}\right)$
is the intersection of $\operatorname{Cone}\left\{ \boldsymbol{v}_{1}',...,\boldsymbol{v}'_{M'}\right\} $
and the hyperplane $\boldsymbol{x}\cdot\boldsymbol{1}=1$ where $\left(\boldsymbol{v}_{m'}^{\prime}\right)_{n}=\boldsymbol{\nu}(\omega_{n})\mathcal{E}'_{n,m'}$.
The inclusion of the conic spans $\operatorname{Cone}\mathcal{E}\supseteq\operatorname{Cone}\mathcal{E}'$
immediately implies the inclusion of the weighted conic spans for
any prior $\boldsymbol{\nu}$, which implies $(4)$. 

$(4)\Rightarrow(3)$. Take $\boldsymbol{\nu}$ to be the uniform prior.
$(4)$ says that any posterior induced by $\mathcal{E}'$ is a convex
combination of posteriors induced by $\mathcal{E}$. From (\ref{eq:Bayes-rule}),
the posteriors induced by $\mathcal{E}$ and $\mathcal{E}'$ differ
from the columns of $\mathcal{E}$ and $\mathcal{E}'$ only by some
positive multiplicative constant. Therefore, columns of $\mathcal{E}'$
are positive combinations of columns of $\mathcal{E}$, that is, $(3)$
holds. 

$(4)\Leftrightarrow(4^{\prime})$. The proof is exactly the same as
that in Theorem \ref{thm:column-space}, except that we replace linear
combinations by convex combinations. 
\end{proof}
\begin{proof}
[Proof of Theorem \ref{thm:zonotope} and Lemma \ref{lem:feasible-set-budget}]
In addition to the equivalences in Theorem \ref{thm:zonotope}, I
also prove the equivalence of zonotope dominance to conditions $(3a)$-$(3d)$
which restrict attention to moral hazard problems with no additional
constraint, with limited liability, with an ex post budget, or with
both; condition $(5')$ which is a variant of $(5)$ with a fixed
interior prior; and condition $(6)$ that compares the feasible state-dependent
utilities given an ex ante budget. 
\begin{itemize}
\item [$(3a)$] $\kappa(\mathcal{E};P)\leq\kappa(\mathcal{E}';P)$ for any
$P\in\mathcal{P}_{\text{\ensuremath{\emptyset}}}:=\{P\in\mathcal{P}:\mathcal{R}=\emptyset\}$,
\item [$(3b)$] $\kappa(\mathcal{E};P)\leq\kappa(\mathcal{E}';P)$ for any
$P\in\mathcal{P}_{\text{LL}}:=\{P\in\mathcal{P}:\mathcal{R}=\text{LL}\}$,
\item [$(3c)$] $\kappa(\mathcal{E};P)\leq\kappa(\mathcal{E}';P)$ for any
$P\in\mathcal{P}_{\text{B}}:=\{P\in\mathcal{P}:\mathcal{R}=\text{B}\}$,
\item [$(3d)$] $\kappa(\mathcal{E};P)\leq\kappa(\mathcal{E}';P)$ for any
$P\in\mathcal{P}_{\text{LL,B}}:=\{P\in\mathcal{P}:\mathcal{R}=\text{LL},\text{B}\}$,
\item [$(5^\prime)$] $\pi\left(\mathcal{E};\boldsymbol{\nu}\right)\geq_{\text{lcx}}\pi\left(\mathcal{E}';\boldsymbol{\nu}\right)$
for some interior prior $\boldsymbol{\nu}\in\Delta\Omega$,
\item [$(6)$] $\mathcal{V}_{\boldsymbol{\mu},u,B}(\mathcal{E})\supseteq\mathcal{V}_{\boldsymbol{\mu},u,B}(\mathcal{E}')$
for any $\boldsymbol{\mu}\in\Delta\Omega$, $u\in\mathcal{U}$, and
$B>0$.
\end{itemize}
I first prove $(1)\Leftrightarrow(4),(5),(5')$.

$(1)\Rightarrow(4)$. Any partial sum of the columns in $\mathcal{E}'$
is still in the zonotope. Therefore, $\mathcal{E}'B=\mathcal{E}D$
with $0\leq D\leq1$.

$(4)\Rightarrow(1)$. To show that $\operatorname{Zon}\mathcal{E}'\subseteq\operatorname{Zon}\mathcal{E}$,
it suffices to show that all the extreme points of $\operatorname{Zon}\mathcal{E}'$
is included in $\operatorname{Zon}\mathcal{E}$. This is exactly what
$\mathcal{E}'B=\mathcal{E}D$ says. 

$(1)\Leftrightarrow(5),(5^{\prime})$. This follows directly from
\citet{koshevoy1996lorenz}. 

Next, I prove the more involved part $(1)\Leftrightarrow(6)\Leftrightarrow(3a)\Leftrightarrow(2)$.
I briefly sketch the proof before diving into the details. $(1)\Rightarrow(6)$
is the key step of the proof. It follows from observing that both
the zonotopes and the utility sets are convex, and the inclusion of
convex sets is equivalent to a separating hyperplane condition. Using
a majorization argument, the separating hyperplane version of $(1)$
implies that of $(6)$. The rest of the proof is simpler. $(6)\Rightarrow(3a)$
comes from enlarging the principal's choice set at any given budget
level. $(3a)\Rightarrow(2)$ comes from taking utility $u_{0}(t)=\min\{t,B\}$
when $t\geq0$ and $u_{0}(t)=-\infty$ when $t<0$, which satiates
after receiving a payment of $B$ to mimic the ex post budget constraint,
and drops to $-\infty$ for negative payment to mimic the limited
liability constraint. Technically speaking, the utility is required
to be strictly increasing with its range in $\mathbb{R}$. I need
to take a sequence $u_{i}\in\mathcal{U}$ that converges pointwise
to $u_{0}$. $(2)\Rightarrow(1)$ comes from considering the contrapositive
and constructing a moral hazard problem in $\mathcal{P}_{F}\cap\mathcal{P}_{3}$.

$(1)\Rightarrow(6)$. I need two additional math facts for this. The
first comes from an application of the supporting hyperplane theorem
to characterize the inclusion of convex sets (\citealp{rado1952inequality};
see also \citealp{marshall2009inequalities} Corollary 2.B.3).
\begin{fact}
\label{fact:SHT} For two convex sets $K,K'\subseteq\mathbb{R}^{N}$,
$K\supseteq K'$ if and only if, for any $\boldsymbol{\beta}\in\mathbb{R}^{N}$,
\[
\max_{\boldsymbol{v}\in K}\boldsymbol{\beta}\cdot\boldsymbol{v}\geq\max_{\boldsymbol{v}'\in K'}\boldsymbol{\beta}\cdot\boldsymbol{v}'.
\]
\end{fact}
The second is the majorization inequality due to \citet{karamata1932inegalite}
and \citet{hardy1934inequalities} (see \citealp{marshall2009inequalities}
Theorem 1.A.3 for the majorization vector form, and \citealp{shaked2007stochastic}
Theorem 3.A.1 for the distributional form).
\begin{fact}
\label{fact:majorization-ineq} Let $P$ and $Q$ be two probability
measures on $\mathbb{R}$ with equal means $\mathbb{E}_{P}[X]=\mathbb{E}_{Q}[X]$.
The following are equivalent: (i) for every $\alpha\in\mathbb{R}$,
\[
\mathbb{E}_{P}\left[\max\left\{ 0,X-\alpha\right\} \right]\geq\mathbb{E}_{Q}\left[\max\left\{ 0,X-\alpha\right\} \right].
\]
(ii) for every convex function $\phi:\mathbb{R}\to\mathbb{R}$,
\[
\mathbb{E}_{P}[\phi(X)]\geq\mathbb{E}_{Q}[\phi(X)].
\]
\end{fact}
This is the standard equivalence result for the convex order. For
finitely-supported distributions $P=\sum_{m}w_{m}\delta_{x_{m}}$
and $Q=\sum_{m'}w'_{m'}\delta_{x'_{m'}}$, this specializes to the
equivalence between (i) for every $\alpha\in\mathbb{R}$
\[
\sum_{m=1}^{M}w_{m}\max\{0,x_{m}-\alpha\}\geq\sum_{m'=1}^{M'}w'_{m'}\max\{0,x'_{m'}-\alpha\},
\]
and (ii) for every convex function $\phi:\mathbb{R}\to\mathbb{R}$,
\[
\sum_{m=1}^{M}w_{m}\phi(x_{m})\geq\sum_{m'=1}^{M'}w'_{m'}\phi(x'_{m'}).
\]

I now prove $(1)\Rightarrow(6)$. I first apply Fact \ref{fact:SHT}
to derive the separating hyperplane conditions for the set inclusion
conditions in both $(1)$ and $(6)$. I then apply Fact \ref{fact:SHT}
to show that the condition for $(1)$ implies that for $(6)$. I start
with condition $(1)$. Fix any $\boldsymbol{\mu}\in\Delta\Omega$.
For an experiment $\mathcal{E}$, define the lift zonotope of $\mathcal{E}$
as 
\[
\widehat{\operatorname{Zon}}\,\mathcal{E}:=\left\{ (\boldsymbol{u},\boldsymbol{\mu}\cdot\boldsymbol{u}):\boldsymbol{u}\in\operatorname{Zon}\mathcal{E}\right\} =\left\{ (\mathcal{E}\boldsymbol{v},\boldsymbol{\mu}\cdot\mathcal{E}\boldsymbol{v}):0\leq\boldsymbol{v}\leq1\right\} \subseteq\mathbb{R}^{N+1}.
\]
The lift zonotope pairs the state-dependent utility $\boldsymbol{u}$
with its expected value $\boldsymbol{\mu}\cdot\boldsymbol{u}$. Observe
that $\widehat{\operatorname{Zon}}\,\mathcal{E}$ is convex. The inclusion
$\operatorname{Zon}\mathcal{E}\supseteq\operatorname{Zon}\mathcal{E}'$
implies $\widehat{\operatorname{Zon}}\,\mathcal{E}\supseteq\widehat{\operatorname{Zon}}\,\mathcal{E}'$
because the lift map $\boldsymbol{u}\mapsto(\boldsymbol{u},\boldsymbol{\mu}\cdot\boldsymbol{u})$
does not depend on the experiment itself. 

Apply Fact \ref{fact:SHT} to the inclusion of the lift zonotope,
and we obtain that, for any $(\boldsymbol{\beta},-\alpha)\in\mathbb{R}^{N+1}$,
\begin{equation}
\max_{0\leq\boldsymbol{v}\leq1}\left[\boldsymbol{\beta}\cdot\mathcal{E}\boldsymbol{v}-\alpha\boldsymbol{\mu}\cdot\mathcal{E}\boldsymbol{v}\right]\geq\max_{0\leq\boldsymbol{v}'\leq1}\left[\boldsymbol{\beta}\cdot\mathcal{E}'\boldsymbol{v}'-\alpha\boldsymbol{\mu}\cdot\mathcal{E}'\boldsymbol{v}'\right].\label{eq:lift-zonoid-SHT}
\end{equation}
The maximum on each side of (\ref{eq:lift-zonoid-SHT}) decouples
across coordinates. Let $w_{m}:=(\boldsymbol{\mu}\cdot\mathcal{E})_{m}$
denote the probability of $y_{m}$ under prior $\boldsymbol{\mu}$.\footnote{Here and throughout, I use $\boldsymbol{\mu}\cdot\mathcal{E}$ to
denote the usual product $\boldsymbol{\mu}^{\top}\mathcal{E}$ without
a transpose symbol, with $m$-th entry $(\boldsymbol{\mu}\cdot\mathcal{E})_{m}=\sum_{n}\boldsymbol{\mu}(\omega_{n})\mathcal{E}_{n,m}$.
Similarly, $\boldsymbol{\beta}\cdot\mathcal{E}$.} We can rewrite the above as 
\begin{align*}
\max_{0\leq\boldsymbol{v}\leq1}\left[\boldsymbol{\beta}\cdot\mathcal{E}\boldsymbol{v}-\alpha\boldsymbol{\mu}\cdot\mathcal{E}\boldsymbol{v}\right] & =\sum_{m=1}^{M}\max\bigl\{0,(\boldsymbol{\beta}\cdot\mathcal{E})_{m}-\alpha w_{m}\bigr\},\\
 & =\sum_{m=1}^{M}w_{m}\max\!\left\{ 0,\frac{(\boldsymbol{\beta}\cdot\mathcal{E})_{m}}{w_{m}}-\alpha\right\} ,
\end{align*}
and similarly for $\mathcal{E}'$ with $w'_{m'}:=(\boldsymbol{\mu}\cdot\mathcal{E}')_{m'}$.\footnote{The pulled-out factor $w_{m}$ on the second line is purely expository
to make the connection to Fact \ref{fact:majorization-ineq} easier
to see. The argument does not require division by $w_{m}$, so $w_{m}=0$
is not an issue.} The inequality (\ref{eq:lift-zonoid-SHT}) thus becomes, for any
$(\boldsymbol{\beta},-\alpha)\in\mathbb{R}^{N+1}$, 
\begin{equation}
\sum_{m=1}^{M}w_{m}\max\!\left\{ 0,\frac{(\boldsymbol{\beta}\cdot\mathcal{E})_{m}}{w_{m}}-\alpha\right\} \geq\sum_{m'=1}^{M'}w'_{m'}\max\!\left\{ 0,\frac{(\boldsymbol{\beta}\cdot\mathcal{E}')_{m'}}{w'_{m'}}-\alpha\right\} .\label{eq:lift-zonoid-stop-loss}
\end{equation}

Apply Fact \ref{fact:majorization-ineq}, inequality (\ref{eq:lift-zonoid-stop-loss})
is equivalent to 
\begin{equation}
\sum_{m=1}^{M}w_{m}\phi\left(\frac{(\boldsymbol{\beta}\cdot\mathcal{E})_{m}}{w_{m}}\right)\;\geq\sum_{m'=1}^{M'}w'_{m'}\phi\left(\frac{(\boldsymbol{\beta}\cdot\mathcal{E}')_{m'}}{w'_{m'}}\right).\label{eq:lift-zonoid-cvx}
\end{equation}

Next, I apply Fact \ref{fact:SHT} to the inclusion of $\mathcal{V}_{\boldsymbol{\mu},u,B}$.
To do that, we have to check the convexity of $\mathcal{V}_{\boldsymbol{\mu},u,B}(\mathcal{E})$.
Take any $\boldsymbol{v}_{1},\boldsymbol{v}_{2}$ such that $\mathcal{E}\boldsymbol{v}_{1},\mathcal{E}\boldsymbol{v}_{2}\in\mathcal{V}_{\boldsymbol{\mu},u,B}(\mathcal{E})$.
Let $\boldsymbol{t}_{1},\boldsymbol{t}_{2}$ be the corresponding
contracts that generates $\boldsymbol{v}_{1},\boldsymbol{v}_{2}$
with $\boldsymbol{\mu}\cdot\mathcal{E}\boldsymbol{t}_{1}\leq B$ and
$\boldsymbol{\mu}\cdot\mathcal{E}\boldsymbol{t}_{2}\leq B$. Take
any $\alpha\in[0,1]$. Let $\boldsymbol{t}=\alpha\boldsymbol{t}_{1}+(1-\alpha)\boldsymbol{t}_{2}$
and $\boldsymbol{v}:=u(\boldsymbol{t})$. Since the budget constraint
is linear, we have $\mathcal{E}\boldsymbol{v}\in\mathcal{V}_{\boldsymbol{\mu},u,B}(\mathcal{E})$.
Due to the concavity of $u$, $\boldsymbol{v}\geq\alpha\boldsymbol{v}_{1}+(1-\alpha)\boldsymbol{v}_{2}$.
We can lower the payment to pick some $\boldsymbol{t}'\leq\boldsymbol{t}$
so that $\boldsymbol{v}':=u(\boldsymbol{t}')$ satisfies $\boldsymbol{v}'=\alpha\boldsymbol{v}_{1}+(1-\alpha)\boldsymbol{v}_{2}$.
It suffices to show $\boldsymbol{v}'\in\mathcal{V}_{\boldsymbol{\mu},u,B}(\mathcal{E})$,
which follows from $\boldsymbol{t}'\leq\boldsymbol{t}$ because $\boldsymbol{t}'$
pays less and must also satisfy the budget constraint.

Fact \ref{fact:SHT} translates the inclusion of $\mathcal{V}_{\boldsymbol{\mu},u,B}$
into the following inequality: for every $\boldsymbol{\beta}\in\mathbb{R}^{N}$,
$u\in\mathcal{U}$, and $\boldsymbol{\mu}\in\Delta\Omega$,
\begin{align*}
\max_{\boldsymbol{\mu}\cdot\mathcal{E}\boldsymbol{t}\leq1}\;\boldsymbol{\beta}\cdot\mathcal{E}u(\boldsymbol{t}) & \geq\max_{\boldsymbol{\mu}\cdot\mathcal{E}'\boldsymbol{t}\leq1}\;\boldsymbol{\beta}\cdot\mathcal{E}'u(\boldsymbol{t}),
\end{align*}
and this inequality holds if 
\begin{equation}
\max_{\boldsymbol{t}}\boldsymbol{\beta}\cdot\mathcal{E}u(\boldsymbol{t})-\boldsymbol{\mu}\cdot\mathcal{E}\boldsymbol{t}\geq\max_{\boldsymbol{t}}\boldsymbol{\beta}\cdot\mathcal{E}'u(\boldsymbol{t})-\boldsymbol{\mu}\cdot\mathcal{E}'\boldsymbol{t},\label{eq:SHT-lagrangian}
\end{equation}
where we incorporate the ex ante budget constraint with a multiplier.
Specifically, let $\eta\in\mathbb{R}_{+}$ be the multiplier on the
ex ante budget constraint $\boldsymbol{\mu}\cdot\mathcal{E}\boldsymbol{t}\leq1$.
We must have $\eta>0$ since the budget constraint must bind at the
optimum. Fix any $\left(\boldsymbol{\beta},u,\boldsymbol{\mu}\right)$,
the Lagrangian is $\mathcal{L}(\boldsymbol{t},\eta;\mathcal{E}):=\boldsymbol{\beta}\cdot\mathcal{E}u(\boldsymbol{t})-\eta\boldsymbol{\mu}\cdot\mathcal{E}\boldsymbol{t}+\eta$.
For the inequality to hold, it suffices to have $\mathcal{L}(\boldsymbol{t},\eta;\mathcal{E})\geq\mathcal{L}(\boldsymbol{t},\eta;\mathcal{E}')$
for any $\eta>0$. To obtain (\ref{eq:SHT-lagrangian}), we divide
$\eta>0$ on both sides sides and redefine $\boldsymbol{\beta}$ as
$\boldsymbol{\beta}/\eta$ since $\boldsymbol{\beta}$ can take any
value.

Lastly, observe that the maximum on each side again decouples across
coordinates, and inequality (\ref{eq:SHT-lagrangian}) can be rewritten
as 
\begin{equation}
\sum_{m=1}^{M}w_{m}\max_{t_{m}}\left[\frac{(\boldsymbol{\beta}\cdot\mathcal{E})_{m}}{w_{m}}\cdot u(t_{m})-t_{m}\right]\geq\sum_{m'=1}^{M'}w'_{m'}\max_{t'_{m'}}\left[\frac{(\boldsymbol{\beta}\cdot\mathcal{E}')_{m'}}{w'_{m'}}\,u(t'_{m'})-t'_{m'}\right]\label{eq:V-cvx}
\end{equation}
where the inner maximization is a convex function of $(\boldsymbol{\beta}\cdot\mathcal{E})_{m}/w_{m}$
since it is the maximum over affine functions. As a result, (\ref{eq:lift-zonoid-cvx})
implies (\ref{eq:V-cvx}) by taking $\phi$ to be this specific convex
function. Therefore, $(1)\Rightarrow(6)$.

$(6)\Rightarrow(3a)$. Condition $(6)$ is equivalent to saying that
the expected cost to generate any state-dependent utility $\boldsymbol{u}$
is lower under $\mathcal{E}$ than under $\mathcal{E}'$, regardless
of the state distribution used to evaluate the expectation. To see
this, if $\boldsymbol{u}$ is feasible under $\mathcal{E}'$ at budget
$B$, it is also feasible under $\mathcal{E}$ at the same budget,
so the minimum budget needed to generate $\boldsymbol{u}$ must be
weakly lower under $\mathcal{E}$. That is, the cost to generate $\boldsymbol{u}$
is lower under $\mathcal{E}$. 

From this observation, we can show that $\kappa(\mathcal{E};P)\leq\kappa(\mathcal{E}';P)$
for any $P\in\mathcal{P}_{\emptyset}$. Take any $P\in\mathcal{P}_{\emptyset}$
where the principal wants to implement $a_{0}$. If $a_{0}$ is not
implementable under $\mathcal{E}'$, $\kappa(\mathcal{E}';P)=+\infty$
and the inequality always holds. Otherwise, suppose the optimal contract
under $\mathcal{E}'$ generates state-dependent utility $\boldsymbol{u}$.
From the above observation, the principal can generate the same $\boldsymbol{u}$
under $\mathcal{E}$ at a weakly lower cost due to zonotope dominance.
The agent still optimally chooses $a_{0}$ since his state-dependent
utility is held the same. Therefore, the principal's cost under $\mathcal{E}$
must be weakly lower. 

$(3a)\Rightarrow(2)$. The idea is to take 
\[
u_{0}(t)=\begin{cases}
\min\{t,B\} & \text{if }t\geq0\\
-\infty & \text{if }t<0
\end{cases}
\]
which satiates after receiving a payment of $B$ to mimic the ex post
budget constraint, and drops to $-\infty$ with negative payments
to mimic limited liability. One has to be careful since we require
the utility $u\in\mathcal{U}$ to be strictly increasing and take
values in $\mathbb{R}$. To this end, let $u_{i}(t)=\min\left\{ it,t,\frac{1}{i}(t-B)+B\right\} $
with $i\in\mathbb{N}_{+}$. $u_{i}\in\mathcal{U}$ for all $i\geq1$
since it is strictly increasing, continuous, concave, unbounded, and
$u_{i}(0)=0$. Take any $A\in\mathcal{A}$, $a_{0}\in A$, and $C\in\mathcal{C}$.
Construct problems $P_{i}=\left(A,a_{0},u_{i},C,\emptyset\right)$
and $P_{0}=\left(A,a_{0},u_{0},C,\emptyset\right)$. We can obtain
two sequences of indirect costs $\left\{ \kappa(\mathcal{E};P_{i})\right\} _{i=1}^{\infty}$
and $\left\{ \kappa(\mathcal{E}';P_{i})\right\} _{i=1}^{\infty}$.
$(3a)$ implies $\kappa(\mathcal{E};P_{i})\leq\kappa(\mathcal{E}';P_{i})$
for all $i\geq1$, and we have to show that $\kappa(\mathcal{E};P_{0})\leq\kappa(\mathcal{E}';P_{0})$,
that is, the inequality is preserved after taking $i\to\infty$. 

Notice that the sequence $\left\{ \kappa(\mathcal{E};P_{i})\right\} _{i=1}^{\infty}$
increases in $i$ and $\kappa(\mathcal{E};P_{0})\geq\kappa(\mathcal{E};P_{i})$
for all $i\geq1$. The same holds for $\mathcal{E}'$. This is because
as $i\geq1$ increases, the utility $u_{i}(t)$ at every $t$ decreases
pointwise in $i$, and $u_{0}(t)\leq u_{i}(t)$ for all $i\geq1$.
The pointwise decrease in utilities means, to generate any state-dependent
utility, the principal's cost increases with $i$ and the cost is
the highest for $u_{0}$. As a result, the indirect cost must also
increase in $i$ and it is the highest for $u_{0}$. 

Given the increasingness of the sequences, we can prove the desired
inequality $\kappa(\mathcal{E};P_{0})\leq\kappa(\mathcal{E}';P_{0})$.
If $\kappa(\mathcal{E}';P_{0})=\infty$, the inequality always holds.
If $\kappa(\mathcal{E}';P_{0})<\infty$, we know that $\left\{ \kappa(\mathcal{E}';P_{i})\right\} _{i=1}^{\infty}$
is bounded. To show the inequality, it suffices to prove that 
\begin{equation}
\lim_{i\to\infty}\kappa(\mathcal{E};P_{i})=\kappa(\mathcal{E};P_{0})\label{eq:limit-cost}
\end{equation}
and similarly for $\mathcal{E}'$. I apply the theorem of the maximum
here (Theorem A2.21 in \citealp{jehle2001advanced}). It suffices
to check that for any $\boldsymbol{t}$ feasible in $\kappa(\mathcal{E};P_{0})$,
there exists a sequence of contracts $\boldsymbol{t}_{i}\to\boldsymbol{t}$
such that $\boldsymbol{t}_{i}$ is feasible in $\kappa(\mathcal{E};P_{i})$
for every $i$. This is obvious since we can always pick $\boldsymbol{t}_{i}=\boldsymbol{t}$.
To see that $\boldsymbol{t}$ is feasible in $\kappa(\mathcal{E};P_{i})$
for every $i$, note that $0\leq\boldsymbol{t}\leq B$ because $\boldsymbol{t}$
is feasible in $\kappa(\mathcal{E};P_{0})$, and utilities $u_{i}$
equal $u_{0}$ on $[0,B]$. This means $\boldsymbol{t}_{i}=\boldsymbol{t}$
generates the same state-dependent utility in $\kappa(\mathcal{E};P_{i})$.
Since the agent's incentive and participation constraints only concern
the state-dependent utility he obtains, when $\boldsymbol{t}$ satisfies
these constraints in $\kappa(\mathcal{E};P_{0})$, it should also
satisfy these constraints in $\kappa(\mathcal{E};P_{i})$. Therefore,
we can apply the theorem of maximum to conclude (\ref{eq:limit-cost})
and likewise for $\mathcal{E}'$, which completes the proof since
$\kappa(\mathcal{E};P_{i})\leq\kappa(\mathcal{E}';P_{i})$ for all
$i\geq1$. 

$(2)\Rightarrow(1)$. Suppose $\operatorname{Zon}\mathcal{E}\not\supseteq\operatorname{Zon}\mathcal{E}'$,
then there exists some $\boldsymbol{x}\in\operatorname{Zon}\mathcal{E}'\setminus\operatorname{Zon}\mathcal{E}$.
I construct a problem in $\mathcal{P}_{F}\cap\mathcal{P}_{3}$ to
violate the cost comparisons. Focus on cost functions $C\in\mathcal{C}_{F}$
such that $\lim_{\boldsymbol{\mu}\to\boldsymbol{\mu}'}\left\Vert \boldsymbol{\nabla}C(\boldsymbol{\mu})\right\Vert =+\infty$
for all $\boldsymbol{\mu}'\in\partial\Delta\Omega$. This condition
ensures that $\boldsymbol{\eta}$ does not show up in the first-order
condition. Pick a cost function $C$ and action $\boldsymbol{\mu}_{0}$
so that the marginal cost of $C$ at $\boldsymbol{\mu}_{0}$ is $\boldsymbol{x}$.
Here I identify the target action with its induced state distribution
$\boldsymbol{\mu}_{0}\in\Delta\Omega$. To optimally implement $\boldsymbol{\mu}_{0}$
under $\mathcal{E}'$, from Lemma \ref{lem:A-opt-interior}, the principal
chooses an optimal state-dependent utility $\boldsymbol{x}+\lambda\boldsymbol{1}\in\operatorname{Zon}\mathcal{E}'$
for some $\lambda\in\mathbb{R}$ so that it is on the boundary of
$\operatorname{Zon}\mathcal{E}'$. If $\boldsymbol{\mu}_{0}$ is not
implementable under $\mathcal{E}$, then we are done. Otherwise, the
principal optimally implements it at some $\boldsymbol{x}+\lambda'\boldsymbol{1}\in\operatorname{Zon}\mathcal{E}$
on the boundary of $\operatorname{Zon}\mathcal{E}$. If the cost is
lower under $\boldsymbol{x}+\lambda\boldsymbol{1}$, we are done.
Otherwise, since the zonotope is centrally symmetric, we can find
a symmetric $\boldsymbol{x}'$ so that the cost is lower under $\mathcal{E}'$
than $\mathcal{E}$. 

Next, I prove the equivalence $(1)\Leftrightarrow(3b),(3c),(3d)$.
In the proof for $(1)\Leftrightarrow(3a)$, nothing changes if I require
$\boldsymbol{t}\geq0$, $\boldsymbol{t}\leq1$, or $0\leq\boldsymbol{t}\leq1$.
In fact, the restriction on $\boldsymbol{t}$ does not play a role.
To make things concrete, I formally define the following sets of feasible
state-dependent utilities with an ex ante budget, but now imposing
additional constraints, 
\begin{align*}
\mathcal{V}_{\boldsymbol{\mu},u,B}^{\text{LL}}(\mathcal{E}) & :=\left\{ \mathcal{E}\boldsymbol{v}:\exists\boldsymbol{t}\in\mathbb{R}^{M}\text{ such that }\boldsymbol{v}=u(\boldsymbol{t}),\boldsymbol{t}\geq0,\boldsymbol{\mu}\cdot\mathcal{E}\boldsymbol{t}\leq B\right\} ,\\
\mathcal{V}_{\boldsymbol{\mu},u,B}^{\text{B}}(\mathcal{E}) & :=\left\{ \mathcal{E}\boldsymbol{v}:\exists\boldsymbol{t}\in\mathbb{R}^{M}\text{ such that }\boldsymbol{v}=u(\boldsymbol{t}),\boldsymbol{t}\leq1,\boldsymbol{\mu}\cdot\mathcal{E}\boldsymbol{t}\leq B\right\} ,\\
\mathcal{V}_{\boldsymbol{\mu},u,B}^{\text{LL},\text{B}}(\mathcal{E}) & :=\left\{ \mathcal{E}\boldsymbol{v}:\exists\boldsymbol{t}\in\mathbb{R}^{M}\text{ such that }\boldsymbol{v}=u(\boldsymbol{t}),0\leq\boldsymbol{t}\leq1,\boldsymbol{\mu}\cdot\mathcal{E}\boldsymbol{t}\leq B\right\} ,
\end{align*}
and similarly the following set inclusion conditions, 
\begin{enumerate}
\item [$(6b)$] $\mathcal{V}_{\boldsymbol{\mu},u,B}^{\text{LL}}(\mathcal{E})\supseteq\mathcal{V}_{\boldsymbol{\mu},u,B}^{\text{LL}}(\mathcal{E}')$
for any $\boldsymbol{\mu}\in\Delta\Omega$, $u\in\mathcal{U}$, and
$B>0$,
\item [$(6c)$] $\mathcal{V}_{\boldsymbol{\mu},u,B}^{\text{B}}(\mathcal{E})\supseteq\mathcal{V}_{\boldsymbol{\mu},u,B}^{\text{B}}(\mathcal{E}')$
for any $\boldsymbol{\mu}\in\Delta\Omega$, $u\in\mathcal{U}$, and
$B>0$,
\item [$(6d)$] $\mathcal{V}_{\boldsymbol{\mu},u,B}^{\text{LL},\text{B}}(\mathcal{E})\supseteq\mathcal{V}_{\boldsymbol{\mu},u,B}^{\text{LL},\text{B}}(\mathcal{E}')$
for any $\boldsymbol{\mu}\in\Delta\Omega$, $u\in\mathcal{U}$, and
$B>0$.
\end{enumerate}
I can use the exact same method for $(1)\Rightarrow(6)\Rightarrow(3a)$
to show $(1)\Rightarrow(6b),(6c),(6d)$, and $(6b),(6c),(6d)\Rightarrow(3b),(3c),(3d)$,
respectively. The implications $(3b),(3c),(3d)\Rightarrow(2)$ can
also reuse the proof for $(3a)\Rightarrow(2)$ and it is even easier
as the limited liability and the ex post budget constraints are built
into the statements of $(3b),(3c),(3d)$.

Lastly, I show the equivalence $(1)\Leftrightarrow(3)$. We have $(1)\Rightarrow(3)$
because $(1)\Rightarrow(3a),(3b),(3c),(3d)$, and conditions $(3a),(3b),(3c),(3d)$
collectively imply $(3)$. We have $(3)\Rightarrow(1)$ because $(3)$
automatically implies $(3a),(3b),(3c),(3d)$, any one of which implies
$(1)$. 
\end{proof}
\begin{proof}
[Proof of Proposition \ref{prop:full-rank}] The first part follows
from Theorem 2 in \citet{wu2023geometric} and the fact that $\mathcal{E}\geq_{\text{Zon}}\mathcal{E}'$
implies $\mathcal{E}\geq_{\text{Cone}}\mathcal{E}'$, which is equivalent
to $\operatorname{Co}\operatorname{Supp}\pi\left(\mathcal{E};\boldsymbol{\nu}\right)\supseteq\operatorname{Co}\operatorname{Supp}\pi\left(\mathcal{E}';\boldsymbol{\nu}\right)$
for any prior $\boldsymbol{\nu}\in\Delta\Omega$ by Theorem \ref{thm:conic-span}
in the main text. For the second part, suppose $\mathcal{E}\geq_{\text{Cone}}\mathcal{E}'$
and $\mathcal{E}$ has full column rank. $\mathcal{E}'=\mathcal{E}G$
for some $G\geq0$. Due to row stochasticity of $\mathcal{E}$ and
$\mathcal{E}'$, we have $\mathcal{E}\boldsymbol{1}=1=\mathcal{E}'\boldsymbol{1}=\mathcal{E}G\boldsymbol{1}$,
which implies $\mathcal{E}\left(G\boldsymbol{1}-\boldsymbol{1}\right)=\boldsymbol{0}$.
Since $\mathcal{E}$ has full column rank, this necessarily implies
$G\boldsymbol{1}=\boldsymbol{1}$.
\end{proof}

\section{Relations between the Orders: Examples \label{appsec:example}}

This appendix collects several examples on the relations between the
orders introduced in the main text. First, I reproduce the example
in \citet{bertschinger2014blackwell} that shows the zonotope order
is strictly implied by Blackwell. Note that such examples only exist
when $N>2$ since with binary states, Proposition \ref{prop:binary-states}
says the two orders coincide. 
\begin{example}
\label{exa:zonotope-Blackwell} Suppose $N=3$. Consider the following
experiments. 
\[
\mathcal{E}_{1}=\begin{bmatrix}0.5 & 0 & 0 & 0.5\\
0 & 0.5 & 0 & 0.5\\
0 & 0 & 0.5 & 0.5
\end{bmatrix};\;\mathcal{E}_{2}=\begin{bmatrix}0.5 & 0.5 & 0\\
0.5 & 0 & 0.5\\
0 & 0.5 & 0.5
\end{bmatrix}.
\]
We have $\mathcal{E}_{1}\not\geq_{\text{B}}\mathcal{E}_{2}$, but
$\mathcal{E}_{1}\geq_{\text{Zon}}\mathcal{E}_{2}$. To see $\mathcal{E}_{1}\not\geq_{\text{B}}\mathcal{E}_{2}$,
one way is to realize that the unique $G_{1}\geq0$ such that $\mathcal{E}_{2}=\mathcal{E}_{1}G_{1}$
is 
\[
G_{1}=\begin{bmatrix}1 & 1 & 0\\
1 & 0 & 1\\
0 & 1 & 1\\
0 & 0 & 0
\end{bmatrix},
\]
which is not row stochastic.\footnote{There are other $G$'s that satisfy $\mathcal{E}_{2}=\mathcal{E}_{1}G$
but those $G$'s are not weakly positive.} Alternatively, here is a decision problem where $\mathcal{E}_{2}$
outperforms $\mathcal{E}_{1}$. Suppose the decision maker has three
actions $A=\left\{ a_{1},a_{2},a_{3}\right\} $ and she gets a positive
utility when her action mismatches the state $\omega\in\Omega=\left\{ \omega_{1},\omega_{2},\omega_{3}\right\} $
and a negative utility otherwise. $\mathcal{E}_{2}$ allows her to
always mismatch the action and the state while $\mathcal{E}_{1}$
does not. 

To see that $\mathcal{E}_{1}\geq_{\text{Zon}}\mathcal{E}_{2}$, we
use the equivalence $(1)\Leftrightarrow(4)$ in Theorem \ref{thm:zonotope}.
We have to show that any partial sum of the columns in $\mathcal{E}_{2}$
lies in $\operatorname{Zon}\mathcal{E}_{1}$. It is obvious that each
column in $\mathcal{E}_{2}$ lies in $\operatorname{Zon}\mathcal{E}_{1}$.
So is any sum of two columns. For this, it suffices to observe that
the sum of the first two columns in $\mathcal{E}_{2}$ equals the
sum of the first and the last columns in $\mathcal{E}_{1}$. Other
cases are symmetric. The sum of all three columns in $\mathcal{E}_{2}$
is a vector of ones, which obviously lies in $\operatorname{Zon}\mathcal{E}_{1}$.
\end{example}
The same example also shows that the full rank conditions in Proposition
\ref{prop:full-rank} cannot be dispensed with. First, without full
rank, the zonotope and the Blackwell orders can differ. Example \ref{exa:zonotope-Blackwell}
shows that when $\mathcal{E}$ does not have full column rank, $\mathcal{E}\geq_{\text{Zon}}\mathcal{E}'$
does not imply $\mathcal{E}\geq_{\text{B}}\mathcal{E}'$. Second,
when $\mathcal{E}$ does not have full column rank, $\mathcal{E}\geq_{\text{Cone}}\mathcal{E}'$
does not imply $\mathcal{E}\geq_{\text{B}}\mathcal{E}'$. For this,
I can again reuse Example \ref{exa:zonotope-Blackwell}, since $\mathcal{E}_{1}\geq_{\text{Cone}}\mathcal{E}_{2}$
in Example \ref{exa:zonotope-Blackwell} as the extremal beliefs of
$\mathcal{E}_{1}$ reveal the state while $\mathcal{E}_{2}$ does
not. 

Lastly, I show that $\mathcal{E}'=\mathcal{E}G$ for some $0\leq G\leq1$
is not sufficient for $\mathcal{E}\geq_{\text{Zon}}\mathcal{E}'$.
I will use Example \ref{exa:main-example} again. There, $\mathcal{E}_{3}=\mathcal{E}_{2}G$
where 
\[
G=\begin{bmatrix}1 & 1/3 & 0 & 0\\
0 & 1/3 & 1/3 & 0\\
0 & 0 & 1/3 & 1
\end{bmatrix}
\]
 We have $0\leq G\leq1$ but $\mathcal{E}_{2}\not\geq_{\text{Zon}}\mathcal{E}_{3}$.

\section{Profit Maximization \label{appsec:profit-max}}

This appendix extends the results in Section \ref{sec:orders} of
the main text from cost minimization to profit maximization. I show
that the same orders also characterize profit comparisons.

The profit maximization problem differs from cost minimization in
one respect: the target action is endogenously chosen by the principal
given her preference. Following \citet{grossman1983implicit}, I decompose
every moral hazard problem into an inner problem to minimize the cost
of implementing every target action and an outer problem to optimize
over the target action. Recall that a cost minimization problem $P=(A,a_{0},u,C,\mathcal{R})$
is specified by an action space $A\in\mathcal{A}$, a target action
$a_{0}\in A$, the agent's utility from money $u\in\mathcal{U}$,
the cost function $C\in\mathcal{C}$, and constraints $\mathcal{R}$.
For profit maximization, the target action $a_{0}$ is endogenously
chosen by the principal, given her preference over the agent's action.
Let $V:A\to\mathbb{R}$ denote the principal's value from the agent's
action. A profit maximization problem is a tuple $Q:=(A,V,u,C,\mathcal{R})$,
where the principal solves
\begin{align}
\max_{\boldsymbol{t},a\in A} & \;V\left(a\right)-\mathbb{E}_{y\sim\mathcal{E}\left(\cdot\mid\boldsymbol{\mu}_{a}\right)}\left[\boldsymbol{t}(y)\right],\text{ s.t. }\text{(\ref{eq:IC}), (\ref{eq:IR})},\mathcal{R}.\tag{Q}\label{eq:P-profit-max}
\end{align}
Let $W(\mathcal{E};Q)$ denote the value of this problem. 

The problem classes can be defined analogously. Let $\mathcal{Q}$
be the set of all profit maximization problems with no restrictions
on the principal's preference, $\mathcal{Q}_{1}:=\left\{ Q\in\mathcal{Q}:u(t)=t,\mathcal{R}=\emptyset\right\} $
the class of risk neutral problems with no contract constraint, $\mathcal{Q}_{2}:=\{Q\in\mathcal{Q}:u(t)=t,\mathcal{R}=\text{LL}\}$
the class of risk neutral problems with limited liability, and $\mathcal{Q}_{3}:=\{Q\in\mathcal{Q}:u(t)=t,\mathcal{R}=\text{LL},\text{B}\}$
the class with an additional ex post budget constraint.

The next proposition says that the same orders characterize profit
comparisons as well. The key observation is that, since the principal's
preference over actions is unrestricted, profit comparisons encode
cost comparisons at every target action.
\begin{prop}
For any experiments $\mathcal{E}$ and $\mathcal{E}'$,
\begin{enumerate}
\item [$(1)$] $\mathcal{E}\geq_{\text{Col}}\mathcal{E}'$ if and only if
$W(\mathcal{E};Q)\geq W(\mathcal{E}';Q)$ for any $Q\in\mathcal{Q}_{1}$,
\item [$(2)$] $\mathcal{E}\geq_{\text{Cone}}\mathcal{E}'$ if and only
if $W(\mathcal{E};Q)\geq W(\mathcal{E}';Q)$ for any $Q\in\mathcal{Q}_{2}$,
\item [$(3)$] $\mathcal{E}\geq_{\text{Zon}}\mathcal{E}'$ if and only if
$W(\mathcal{E};Q)\geq W(\mathcal{E}';Q)$ for any $Q\in\mathcal{Q}_{3}$,
\item [$(4)$] $\mathcal{E}\geq_{\text{Zon}}\mathcal{E}'$ if and only if
$W(\mathcal{E};Q)\geq W(\mathcal{E}';Q)$ for any $Q\in\mathcal{Q}$.
\end{enumerate}
\end{prop}
\begin{proof}
For every equivalence, the sufficiency of the orders is immediate.
Suppose $\kappa(\mathcal{E};P)\leq\kappa(\mathcal{E}';P)$ for any
$P\in\mathcal{P}'$ with $\mathcal{P}'\in\left\{ \mathcal{P}_{1},\mathcal{P}_{2},\mathcal{P}_{3},\mathcal{P}\right\} $
being the corresponding classes of cost minimization problems. Take
any problem $P\in\mathcal{P}'$ and consider its profit maximization
counterpart $Q$ with the same $\left(A,u,C,\mathcal{R}\right)$ but
an arbitrary value function $V$. Regardless of which target action
is optimal for the principal under $\mathcal{E}'$, she can implement
the same action at a lower cost under $\mathcal{E}$, and may do even
better by implementing the optimal action under $\mathcal{E}$. Therefore,
her profit must be weakly higher. 

For the necessity part, suppose by contrapositive that the cost comparison
is reversed at some problem $P$. Let $a_{0}$ be the target action
in $P$. Let $Q=\left(A,V,u,C,\mathcal{R}\right)$ be the profit maximization
problem with the same $\left(A,u,C,\mathcal{R}\right)$ as in $P$,
and the value function $V$ assigns $V(a_{0})=V_{0}$ at $a_{0}$,
and $V(a)=0$ for all other action $a\in A$. As long as $V_{0}>\kappa(\mathcal{E}';P)$,
the principal optimally implements $a_{0}$ under $\mathcal{E}'$
in problem $Q$ with profit $V_{0}-\kappa(\mathcal{E}';P)>0$. Under
$\mathcal{E}$, the principal either implements a different action
$a$ yielding a non-positive profit since $V(a)=0$, or she implements
the same $a_{0}$ with the same value $V_{0}$ but a strictly higher
cost. Either way, her profit is strictly lower than under $\mathcal{E}'$,
contradicting the profit comparison.
\end{proof}

\section{General Utility from Money \label{appsec:non-concave}}

This appendix extends the analysis to general preferences over money
for both the principal and the agent. I show that the problem can
be reduced to one where the principal is risk neutral and an agent
has a general preference over money. The reduction is essentially
a change of units: re-denominating payments in units of the principal's
disutility. The results in Section \ref{sec:orders} continue to hold.
In particular, the implementability comparisons by the column space
order in Theorem \ref{thm:column-space} and the cost comparisons
by the zonotope order in Theorem \ref{thm:zonotope} hold for the
more general class of problems, and the cost comparisons by the conic
span order in Theorem \ref{thm:conic-span} hold for the class of
problems where the principal and the agent have the same risk attitude
towards money. 

Suppose the principal has a disutility function $v:\mathbb{R}\to\mathbb{R}$
over payments and the agent has additively separable payoff $u(t)-C(a)$
with utility $u:\mathbb{R}\to\mathbb{R}$. Both $u$ and $v$ are
continuous, strictly increasing, unbounded, and normalized so that
$u(0)=v(0)=0$. Let $\tilde{\mathcal{U}}$ be the set of all (dis)utility
functions that satisfy these assumptions. The main text focuses on
a special case where $v(t)=t$ is linear and $u$ is concave. 

The principal may now want to offer a random contract since utilities
may not be concave. A random contract $\tilde{\boldsymbol{t}}:Y\to\Delta\mathbb{R}$
maps realizations of $\mathcal{E}$ to a lottery of payments. 

Instead of working with payments directly, I work with the disutilities
to the principal. Given any random contract $\tilde{\boldsymbol{t}}$
under $\mathcal{E}$, I can rewrite it in units of disutility as some
random contract $\tilde{\boldsymbol{v}}:Y\to\Delta\mathbb{R}$ that
maps realizations of $\mathcal{E}$ to a lottery of disutilities to
the principal. The principal then pays the agent the amount that corresponds
to the disutility level drawn from the lottery. Let $\phi:=u\circ v^{-1}$
be the function that translates the principal's disutility level to
the corresponding utility from money for the agent. 

The random contract still affects the agent's incentives only through
the state-dependent utility. The agent solves the same problem as
before, except that the contract can be random and payments are denominated
in units of the principal's disutility. Given contract $\tilde{\boldsymbol{v}}$
under experiment $\mathcal{E}$, the agent's utility is 
\begin{equation}
\widetilde{U}\left(a;\mathcal{E},\tilde{\boldsymbol{v}}\right):=\mathbb{E}_{y\sim\mathcal{E}\left(\cdot\mid\boldsymbol{\mu}_{a}\right)}\left[\mathbb{E}_{w\sim\tilde{\boldsymbol{v}}\left(y\right)}\left[\phi\left(w\right)\right]\right]-C\left(a\right)=\boldsymbol{\mu}_{a}\cdot\boldsymbol{u}-C(a)\label{eq:A-payoff-random-t}
\end{equation}
from action $a$, where the state-dependent utility $\boldsymbol{u}$
is given by $\boldsymbol{u}:=\mathcal{E}\boldsymbol{v}$ with 
\[
\boldsymbol{v}:=\left(\mathbb{E}_{w\sim\tilde{\boldsymbol{v}}\left(y_{m}\right)}\left[\phi\left(w\right)\right]\right)_{m=1}^{M}
\]
being the vector of expected utilities from realizations of $\mathcal{E}$.
The random contract $\tilde{\boldsymbol{v}}$ affects the agent's
payoff and hence his action only through $\boldsymbol{u}$.

To implement some $a_{0}\in A$ with induced state distribution $\boldsymbol{\mu}_{0}:=\boldsymbol{\mu}_{a_{0}}$,
the principal's problem is to minimize her expected disutility from
payments by designing a random contract $\tilde{\boldsymbol{v}}$,
\begin{align}
\min_{\tilde{\boldsymbol{v}}} & \;\mathbb{E}_{y\sim\mathcal{E}\left(\cdot\mid\boldsymbol{\mu}_{0}\right)}\left[\mathbb{E}_{w\sim\tilde{\boldsymbol{v}}\left(y\right)}\left[w\right]\right],\tag{\ensuremath{\widetilde{\text{P}}}}\label{eq:P-problem-random-t}
\end{align}
subject to the same incentive and participation constraints, now stated
for random contracts, 
\begin{align}
a_{0} & \in\underset{a\in A}{\operatorname{argmax}}\:\widetilde{U}\left(a;\mathcal{E},\tilde{\boldsymbol{v}}\right),\tag{\ensuremath{\widetilde{\text{IC}}}}\label{eq:IC-random-t}\\
\widetilde{U}\left(a_{0};\mathcal{E},\tilde{\boldsymbol{v}}\right) & \geq\underline{u},\tag{\ensuremath{\widetilde{\text{IR}}}}\label{eq:IR-random-t}
\end{align}
as well as the optional limited liability and ex post budget constraints
$\widetilde{\mathcal{R}}\subseteq\{\widetilde{\text{LL}},\widetilde{\text{B}}\}$,
\begin{align}
\tilde{\boldsymbol{v}}\left(y\right) & \geq0\text{ a.s.},\forall y\in Y,\tag{\ensuremath{\widetilde{\text{LL}}}}\label{eq:LL-random-t}\\
\tilde{\boldsymbol{v}}\left(y\right) & \leq B\text{ a.s.},\forall y\in Y,\tag{\ensuremath{\widetilde{\text{B}}}}\label{eq:B-random-t}
\end{align}
The problem is therefore reduced to one where the principal is risk
neutral and the agent has a general utility from money $\phi:=u\circ v^{-1}$. 

A cost minimization problem is now described by a tuple $\widetilde{P}:=\left(A,a_{0},u,v,C,\widetilde{\mathcal{R}}\right)$.
Let $\tilde{\kappa}\left(\mathcal{E};\widetilde{P}\right)$ denote
the value of $\widetilde{P}$ under experiment $\mathcal{E}$. I define
the classes of problems analogously. Let $\widetilde{\mathcal{I}}(\mathcal{E};A,u,v,C)$
be the set of actions implementable under $\mathcal{E}$. Let $\widetilde{\mathcal{P}}$
be the class of all moral hazard problems. The class $\mathcal{P}$
in the main text is a subset of $\widetilde{\mathcal{P}}$ with $v(t)=t$
and $u$ weakly concave. Let $\widetilde{\mathcal{P}}_{1}:=\{\widetilde{P}\in\widetilde{\mathcal{P}}:u\circ v^{-1}\text{ is affine},\widetilde{\mathcal{R}}=\emptyset\}$
be the class of problems where the principal and the agent has the
same risk attitude towards money,\footnote{Suppose $u\circ v^{-1}(s)=\alpha s+\beta$ for some constants $\alpha>0,\beta\in\mathbb{R}$.
For any $s$, we can find some $t$ so that $s=v(t)$. Plug in to
the equation, we have $u(t)=\alpha v(t)+\beta$. That is, $u$ is
a positive affine transformation of $v$, and they must represent
the same expected utility preference. Note that the normalization
$u(0)=v(0)=0$ forces the constant $\beta=0$. } and $\widetilde{\mathcal{P}}_{2}:=\{\widetilde{P}\in\widetilde{\mathcal{P}}:u\circ v^{-1}\text{ is affine},\widetilde{\mathcal{R}}=\widetilde{\text{LL}}\}$
and $\widetilde{\mathcal{P}}_{3}:=\{\widetilde{P}\in\widetilde{\mathcal{P}}:u\circ v^{-1}\text{ is affine},\widetilde{\mathcal{R}}=\widetilde{\text{LL}},\widetilde{\text{B}}\}$
be the classes with additional constraints. 

The orders in Section \ref{sec:orders} also apply to the more general
classes of problems introduced above. The next proposition summarizes
the results. While all other results carry through with a straightforward
argument, the zonotope results warrant more discussion. The proof
of Theorem \ref{thm:zonotope} uses the concavity of utility exactly
once: to establish the convexity of the feasible state-dependent utility
set $\mathcal{V}_{\boldsymbol{\mu},u,B}(\mathcal{E})$ so that I can
apply the supporting-hyperplane-style result of Fact \ref{fact:SHT}.
Without concavity, this set is no longer convex under deterministic
contracts. However, convexity can be restored with random contracts,
and the same argument still goes through. 
\begin{prop}
For any experiments $\mathcal{E}$ and $\mathcal{E}'$,
\begin{enumerate}
\item [$(1)$] $\mathcal{E}\geq_{\text{Col}}\mathcal{E}'$ if and only if
$\widetilde{\mathcal{I}}(\mathcal{E}';A,u,v,C)\subseteq\widetilde{\mathcal{I}}(\mathcal{E};A,u,v,C)$
for any $A\in\mathcal{A}$, $u,v\in\tilde{\mathcal{U}},C\in\mathcal{C}$.
\item [$(2)$] $\mathcal{E}\geq_{\text{Col}}\mathcal{E}'$ if and only if
$\tilde{\kappa}\left(\mathcal{E};\widetilde{P}\right)\leq\tilde{\kappa}\left(\mathcal{E}';\widetilde{P}\right)$
for any $\widetilde{P}\in\widetilde{\mathcal{P}}_{1}$,
\item [$(3)$] $\mathcal{E}\geq_{\text{Cone}}\mathcal{E}'$ if and only
if $\tilde{\kappa}\left(\mathcal{E};\widetilde{P}\right)\leq\tilde{\kappa}\left(\mathcal{E}';\widetilde{P}\right)$
for any $\widetilde{P}\in\widetilde{\mathcal{P}}_{2}$,
\item [$(4)$] $\mathcal{E}\geq_{\text{Zon}}\mathcal{E}'$ if and only if
$\tilde{\kappa}\left(\mathcal{E};\widetilde{P}\right)\leq\tilde{\kappa}\left(\mathcal{E}';\widetilde{P}\right)$
for any $\widetilde{P}\in\widetilde{\mathcal{P}}_{3}$,
\item [$(5)$] $\mathcal{E}\geq_{\text{Zon}}\mathcal{E}'$ if and only if
$\tilde{\kappa}\left(\mathcal{E};\widetilde{P}\right)\leq\tilde{\kappa}\left(\mathcal{E}';\widetilde{P}\right)$
for any $\widetilde{P}\in\widetilde{\mathcal{P}}$.
\end{enumerate}
\end{prop}
\begin{proof}
The first three equivalences follow directly from Theorems \ref{thm:column-space}
and \ref{thm:conic-span}. The sufficiency of the orders comes from
the same larger-feasible-set argument. The necessity of the orders
comes from the same constructive argument using flexible problems
with $v(t)=t$ and $u(t)$ weakly concave. 

The last two equivalences require more care. Slightly abusing notation,
denote 
\begin{align*}
\mathbb{E}\left[\phi\left(\tilde{\boldsymbol{v}}\right)\right] & :=\left(\mathbb{E}_{w\sim\tilde{\boldsymbol{v}}\left(y_{m}\right)}\left[\phi\left(w\right)\right]\right)_{m=1}^{M},\\
\mathbb{E}\left[\tilde{\boldsymbol{v}}\right] & :=\left(\mathbb{E}_{w\sim\tilde{\boldsymbol{v}}\left(y_{m}\right)}\left[w\right]\right)_{m=1}^{M}.
\end{align*}
Following the proof for Theorem \ref{thm:zonotope}, define the set
of feasible state dependent utilities given experiment $\mathcal{E}$,
an ex ante budget $B$ in disutility units, the state distribution
$\boldsymbol{\mu}$ to evaluate expectation, and the transformed utility
$\phi$, 
\[
\widetilde{\mathcal{V}}_{\boldsymbol{\mu},\phi,B}(\mathcal{E}):=\left\{ \mathcal{E}\boldsymbol{v}:\exists\tilde{\boldsymbol{v}}\text{ such that }\boldsymbol{v}=\mathbb{E}\left[\phi\left(\tilde{\boldsymbol{v}}\right)\right],\boldsymbol{\mu}\cdot\mathcal{E}\mathbb{E}\left[\tilde{\boldsymbol{v}}\right]\leq B\right\} .
\]

I show the chain of implications: $\mathcal{E}\geq_{\text{Zon}}\mathcal{E}'$
$\Rightarrow$ the inclusion of $\widetilde{\mathcal{V}}_{\boldsymbol{\mu},\phi,B}$
$\Rightarrow$ cost comparisons in class $\widetilde{\mathcal{P}}$
$\Rightarrow$ cost comparisons in class $\widetilde{\mathcal{P}}_{3}$
$\Rightarrow$ $\mathcal{E}\geq_{\text{Zon}}\mathcal{E}'$. All implications
can be proved with the same steps as in the proof of Theorem \ref{thm:zonotope}:
the last three implications follow literally, and the first implication
follows as long as $\widetilde{\mathcal{V}}_{\boldsymbol{\mu},\phi,B}(\mathcal{E})$
is convex. The convexity is easy to see. Take any $\boldsymbol{u}_{1},\boldsymbol{u}_{2}\in\widetilde{\mathcal{V}}_{\boldsymbol{\mu},\phi,B}(\mathcal{E})$.
There exists contracts $\tilde{\boldsymbol{v}}_{i},i=1,2$ that produce
$\boldsymbol{u}_{1},\boldsymbol{u}_{2}$. That is, for $i=1,2$, we
have $\boldsymbol{u}_{i}=\mathcal{E}\boldsymbol{v}_{i}$ with $\boldsymbol{v}_{i}=\mathbb{E}\left[\phi\left(\tilde{\boldsymbol{v}}_{i}\right)\right]$
and $\boldsymbol{\mu}\cdot\mathcal{E}\mathbb{E}\left[\tilde{\boldsymbol{v}}_{i}\right]\leq B$.
Take any $\alpha\in[0,1]$. I have to show that $\boldsymbol{u}:=\alpha\boldsymbol{u}_{1}+(1-\alpha)\boldsymbol{u}_{2}$
also lies in $\widetilde{\mathcal{V}}_{\boldsymbol{\mu},\phi,B}(\mathcal{E})$.
Let $\tilde{\boldsymbol{v}}=\alpha\tilde{\boldsymbol{v}}_{1}+(1-\alpha)\tilde{\boldsymbol{v}}_{2}$
be the random contract that pays according to $\tilde{\boldsymbol{v}}_{1}$
with probability $\alpha$ and $\tilde{\boldsymbol{v}}_{2}$ with
probability $1-\alpha$. Due to the linearity of expectation, we automatically
have 
\[
\boldsymbol{u}=\alpha\mathcal{E}\boldsymbol{v}_{1}+(1-\alpha)\mathcal{E}\boldsymbol{v}_{2}=\mathcal{E}\mathbb{E}\left[\phi\left(\tilde{\boldsymbol{v}}\right)\right],
\]
and that 
\[
\boldsymbol{\mu}\cdot\mathcal{E}\mathbb{E}\left[\tilde{\boldsymbol{v}}\right]=\alpha\boldsymbol{\mu}\cdot\mathcal{E}\mathbb{E}\left[\tilde{\boldsymbol{v}}_{1}\right]+(1-\alpha)\boldsymbol{\mu}\cdot\mathcal{E}\mathbb{E}\left[\tilde{\boldsymbol{v}}_{2}\right]\leq B.
\]
This completes the proof. 
\end{proof}

\section{Non-Separable Preferences \label{appsec:non-separable}}

This appendix relaxes the additive separability assumption on the
agent's preference. Results in Section \ref{sec:orders} and Appendix
\ref{appsec:non-concave} continue to hold under a more general form
of additive-multiplicative payoff. Most notably, this includes the
case of exponential preferences. 

The only requirement is that the agent's risk attitude over money
does not depend on the action taken. That is, if the agent prefers
one lottery over money to another when taking some action, he ranks
them the same way when taking any other action. This is the risk independence
property introduced by \citet{keeney1973risk}. He shows that this
is equivalent to an additive-multiplicative representation, which
is later used in \citet{grossman1983implicit}: the agent's payoff
from taking action $a$ and receiving payment $t$ takes the form
\[
H(a)+K(a)u(t),
\]
where $u$ is the agent's utility from money, $H:A\to\mathbb{R}\cup\left\{ -\infty\right\} $
is analogous to the cost function with negative infinity encoding
infeasibilities, and $K:A\to\mathbb{R}_{++}$ is a strictly positive
multiplicative weight on the utility from money. I impose no further
assumptions on $H$, $K$, and $u$. Exponential preferences of the
form $-\exp\left[-\gamma\left(t-C(a)\right)\right]$ with $\gamma>0$
are a special case with $H(a)=0$, $K(a)=\exp\left[\gamma C(a)\right]$,
and $u(t)=-\exp\left(-\gamma t\right)$. The separable case in Section
\ref{sec:MH-problems} corresponds to $H(a)=-C(a)$ and $K(a)=1$. 

The additive-multiplicative form preserves the key property of the
baseline that the agent's problem is only parametrized by his state-dependent
utility. Given experiment $\mathcal{E}$ and contract $\boldsymbol{t}$,
the agent's expected payoff from action $a$ is 
\[
H(a)+K(a)\boldsymbol{\mu}_{a}\cdot\boldsymbol{u},
\]
where $\boldsymbol{u}:=\mathcal{E}u(\boldsymbol{t})$ is the agent's
state-dependent utility. Both the incentive and participation constraints
depend on the contract $\boldsymbol{t}$ only through state-dependent
utility $\boldsymbol{u}$. 

The orders in Section \ref{sec:orders} therefore carry through unchanged.
The sufficiency of the orders uses the same larger-feasible-set argument.
The necessity of the orders uses the same constructive proof.

\end{document}